\documentclass{article}      
\usepackage[a4paper]{geometry}                        
\geometry{bottom=2.5cm}

\usepackage{graphicx}                    
\usepackage{amssymb}
\usepackage{indentfirst}
\usepackage{amsmath}
\usepackage{amsthm}
\usepackage{subcaption}
\usepackage{multirow}
\usepackage{array}
\newcolumntype{C}[1]{>{\centering\arraybackslash}p{#1}}
\usepackage{natbib}

\usepackage{tikz}
\usetikzlibrary{arrows.meta,calc,backgrounds}
\usepackage{pgfplots}

\newtheorem{theorem}{Theorem}
\newtheorem{lemma}[theorem]{Lemma}

\newcommand{\tr}{\mathrm{tr}}
\newcommand{\iid}{\textrm{i.i.d.\ }}

\usepackage{tikz}
\usetikzlibrary{arrows.meta,calc,backgrounds}
\tikzset{
    right angle quadrant/.code={
    \pgfmathsetmacro\quadranta{{1,1,-1,-1}[#1-1]}     
    \pgfmathsetmacro\quadrantb{{1,-1,-1,1}[#1-1]}
    },
    right angle quadrant=1, 
    right angle length/.code={\def\rightanglelength{#1}},   
    right angle length=2ex, 
    right angle symbol/.style n args={3}{
        insert path={
            let \p0 = ($(#1)!(#3)!(#2)$) in     
                let \p1 = ($(\p0)!\quadranta*\rightanglelength!(#3)$), 
                \p2 = ($(\p0)!\quadrantb*\rightanglelength!(#2)$) in 
                let \p3 = ($(\p1)+(\p2)-(\p0)$) in  
            (\p1) -- (\p3) -- (\p2)
        }
    }
}

\providecommand{\keywords}[1]{\textbf{\textit{Keywords: }} #1}

\usepackage{authblk}
\usepackage{hyperref}

\title{Adaptive Smoothing Spline for Trajectory Reconstruction}


\author[$1$]{Zhanglong Cao}
\author[$1$]{David Bryant}
\author[$2$]{Tim Molteno}
\author[$2$]{Colin Fox}
\author[$1$]{Matthew Parry}

\affil[$1$]{Department of Mathematics \& Statistics\\ University of Otago\\ Dunedin 9016\\ New Zealand}
\affil[$2$]{Department of Physics\\ University of Otago\\ Dunedin 9016\\ New Zealand}

\date{}  
\begin{document}
\maketitle

\begin{abstract}
Trajectory reconstruction is the process of inferring the path of a moving object between successive observations. In this paper, we propose a smoothing spline -- which we name the V-spline -- that incorporates position and velocity information and a penalty term that controls acceleration. We introduce a particular adaptive V-spline designed to control the impact of irregularly sampled observations and noisy velocity measurements. A cross-validation scheme for estimating the V-spline parameters is given and we detail the performance of the V-spline on four particularly challenging test datasets. Finally, an application of the V-spline to vehicle trajectory reconstruction in two dimensions is given, in which the penalty term is allowed to further depend on known operational characteristics of the vehicle.

\end{abstract}

\keywords{trajectory reconstruction, smoothing spline, V-spline, cross-validation, adaptive penalty.}

\section{Introduction}

Global Positioning System (GPS) devices are widely used to track vehicles and other moving objects. The devices generate a time series of noisy measurements of position and velocity that can then be used in batch and on-line reconstruction of trajectories. 

GPS receivers are used to obtain trajectory information for a wide variety of reasons. The \textit{TracMap} company, located in New Zealand and USA, produces GPS display units to aid precision farming in agriculture, horticulture and viticulture. With these units, operational data are collected and sent to a remote server for further analysis. GPS units also guide drivers of farm vehicles to locations on the farm that require specific attention.

Given a sequence of position vectors in a tracking system, the simplest way of constructing the complete trajectory of a moving object is by connecting positions with a sequence of lines, known as line-based trajectory representation \citep{agarwal2003indexing}. Vehicles with an omnidirectional drive or a differential drive can actually follow such a path in a drive-and-turn fashion, though it is highly inefficient \citep{gloderer2010spline}. This kind of non-smooth motion can cause slippage and over-actuation \citep{magid2006spline}. By contrast, most vehicles typically follow smooth trajectories without sharp turns. 

Several methods have been investigated to incorporate smoothness constraints into trajectory reconstruction. One approach uses the minimal length path that is continuously differentiable and consists of line segments or arcs of circles, with no more than three segments or arcs between successive positions \citep{dubins1957curves}. This method is called Dubins curve and has been extended to other more complex vehicle models but is still limited to line segments and arcs of circles \citep{yang2010analytical}. However, there are still curvature discontinuities at the junctions between lines and arcs, leading to yaw angle errors \citep{wang2017curvature}. 

Spline methods have been developed to construct smooth trajectories. \cite{magid2006spline} propose a path-planning algorithm based on splines. The main objective of the method is the smoothness of the path, not a shortest or minimum-time path. \cite{yu2004curve} give a piecewise cubic reconstruction found by matching the observed position and velocity at the endpoints of each interval; this is essentially a Hermite spline. More generally, a B-spline gives a closed-form expression for the trajectory with continuous second derivatives and goes through the position points smoothly while ignoring outliers \citep{komoriya1989trajectory, ben2004geometric}. B-splines are flexible and have minimal support with respect to a given degree, smoothness, and domain partition. \cite{gasparetto2007new} use fifth-order B-splines to model the overall trajectory, allowing one to set kinematic constraints on the motion, expressed as the velocity, acceleration, and jerk. In the context of computer (or computerized) numerical control (CNC), \cite{erkorkmaz2001high} presented a quintic spline trajectory generation algorithm connecting a series of reference knots that produces continuous position, velocity, and acceleration profiles. \cite{yang2010analytical} proposed an efficient and analytical continuous curvature path-smoothing algorithm based on parametric cubic B\'{e}zier curves that can fit sequentially ordered points. 

A parametric approach only captures features contained in the preconceived class of functions \citep{yao2005functional} and increases model bias. To avoid this problem, nonparametric methods, such as smoothing splines, have been developed \citep{craven1978smoothing}. Let $\{t_i\}_{i=1}^n$ be a sequence of observation times in the interval $[a,b]$ satisfying $a \leq t_1<t_2< \cdots < t_n \leq b$, and let $\mathbf{y}=\left\lbrace y_1,\ldots,y_n\right\rbrace$ be the associated position observations. Smoothing spline estimates of $f(t)$ appear as a solution to the following minimization problem: find $\hat{f} \in \mathcal{C}^{(2)}[a,b]$ that minimizes the penalized residual sum of squares,
\begin{equation}\label{smoothingob}
\mbox{RSS}=\sum_{i=1}^{n}\left(  y_i-f(t_i)\right)^2+\lambda\int_{a}^{b} \left(f''(t)\right)^2dt
\end{equation}
for a pre-specified value $\lambda>0$ \citep{aydin2012smoothing}. In equation  \eqref{smoothingob}, the first term is a residual sum of squares and penalizes lack of fit. The second term is a roughness penalty term where the smoothing parameter $\lambda$ varies from 0 to $+\infty$. (The roughness penalty term is a formalization of a mechanical device: if a thin piece of flexible wood, called a spline, is bent to the shape of the curve $f$, then the leading term in the strain energy is proportional to $\int f''(t)^2dt$ \citep{green1993nonparametric}.) The reconstruction cost, equation \eqref{smoothingob}, is determined not only by its goodness-of-fit to the data quantified by the residual sum of squares but also by its roughness \citep{schwarz2012geodesy}. For a given $\lambda$, minimizing equation \eqref{smoothingob} will give the best compromise between smoothness and goodness-of-fit. When $\lambda=0$ the reconstruction is a smooth interpolation of the observation points; when $\lambda=\infty$ the reconstruction is a straight line. Notice that the first term in equation \eqref{smoothingob} depends only on the values of $f(t)$ at $t_i, i=1, \ldots, n$. \cite{green1993nonparametric} show that the function that minimizes the objective function for fixed values of $f(t_i)$ is a cubic spline: an interpolation of points via a continuous piecewise cubic function, with continuous first and second derivatives. The continuity requirements uniquely determine the interpolating spline, except at the boundaries \citep{sealfon2005smoothing}.

\cite{zhang2013cubic} propose Hermite interpolation on each interval to fit position, velocity and acceleration with kinematic constraints. Their default trajectory formulation is a combination of several cubic splines on every interval or, alternatively, is a single function found by minimizing 
\begin{equation}
p\sum_{i=1}^{n}\lvert y_i-f(t_i) \rvert^2+(1-p)\int_a^b \lvert f''(t) \rvert^2dt,
\end{equation}
where $p$ is a smoothing parameter \citep{castro2006geometric}.

A conventional smoothing spline is controlled by one single parameter, which controls the smoothness of the spline on the whole domain. A natural extension is to allow the smoothing parameter to vary as a function of the independent variable, adapting to the change of roughness in different domains \citep{silverman1985some, donoho1995wavelet}. The objective function is now of the form 
\begin{equation}\label{objective}
\sum_{i=1}^{n}\left(y_i-f(t_i) \right)^2+\int_a^b\lambda(t) \left( f''(t)\right)^2 dt.
\end{equation}

Similar to the conventional smoothing spline problem, one has to choose the penalty function $\lambda(t)$. The fundamental idea of nonparametric smoothing is to let the data choose the amount of smoothness, which consequently decides the model complexity \citep{gu1998model}. When $\lambda$ is constant, most methods focus on data-driven criteria, such as cross-validation (CV), generalized cross-validation (GCV) \citep{craven1978smoothing} and generalized maximum likelihood (GML) \citep{wahba1985comparison}. Allowing the smoothing parameter to be a function poses additional challenges, though \cite{liu2010data} were able to extend GML to adaptive smoothing splines.

In this paper, we propose a smoothing spline -- which we name the V-spline -- that is obtained from noisy paired position data $\mathbf{y}=\left\lbrace y_1,\ldots,y_n\right\rbrace$ and velocity data $\mathbf{v}=\left\lbrace v_1,\ldots,v_n\right\rbrace$. In Section \ref{SectionTractorSpline}, the objective function for the V-spline is introduced that depends on velocity residuals $v_i-f'(t_i)$, as well as position residuals $y_i-f(t_i)$, and a parameter $\gamma$ that controls the degree to which the velocity information is used in the reconstruction. We show that the V-spline can be written in terms of modified Hermite spline basis functions. We also introduce a particular adaptive V-spline that seeks to control the impact of irregularly sampled observations and noisy velocity measurements. A cross-validation scheme for estimating the V-spline parameters is given in section \ref{sect:CV}. Section \ref{sect:sim} details the performance of the V-spline on simulated data based on the \textit{Blocks}, \textit{Bumps}, \textit{HeaviSine} and \textit{Doppler} test signals \citep{donoho1994ideal}. Finally, an application of the V-spline to a real 2-dimensional dataset is given in section \ref{splineapplication}.

\section{The V-spline}\label{SectionTractorSpline}

We consider the situation of paired position data $\mathbf{y}=\left\lbrace y_1,\ldots,y_n\right\rbrace$ and velocity data $\mathbf{v}=\left\lbrace v_1,\ldots,v_n\right\rbrace$ at a sequence of times satisfying $a \leq t_1<t_2< \cdots < t_n \leq b$. For $f\in {\mathcal C}^{(2)}_{\text{piecewise}}[a,b]$, we define the objective function 
\begin{equation}\label{tractorsplineObjective}
J[f]= \frac{1}{n} \sum_{i=1}^{n} \left( y_i-f(t_i) \right)^2 + \frac{\gamma}{n} \sum_{i=1}^{n} \left( v_i-f'(t_i) \right)^2 +\sum_{i=0}^{n}\lambda_i \int_{t_i}^{t_{i+1}}  \left(f''(t)\right)^2 dt,
\end{equation}
where $t_0:=a$, $t_{n+1}:=b$, $\gamma>0$, and we have chosen the penalty function $\lambda(t)$ to be a piecewise constant function, i.e for $i=0,\ldots, n$,
\begin{equation}\label{penalty}
\lambda(t)=\lambda_i,\qquad t\in[t_i,t_{i+1}).
\end{equation} 
From now on, we will understand $\lambda(t)$ to be given by eq. (\ref{penalty}) and we will often use $\lambda$ to refer to the set of $\lambda_i$.

\begin{theorem}\label{TractorSplineTheorem}
For $n\geq2$, the objective function $J[f]$ is uniquely minimized by a cubic spline, piecewise on the intervals $[t_i,t_{i+1})$, $i=1,\ldots,n-1$, and linear on $[a,t_1]$ and $[t_n,b]$.
\end{theorem}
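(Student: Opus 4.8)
The plan is to follow the classical ``reduction to a finite-dimensional problem'' strategy used for ordinary smoothing splines (cf.\ Green and Silverman), adapted to the extra velocity term and the piecewise penalty. The starting observation is that $J[f]$ depends on $f$ only through the knot data $\big(f(t_i),f'(t_i)\big)_{i=1}^{n}$ and the per-interval bending energies $\int_{t_i}^{t_{i+1}}(f''(t))^2\,dt$, $i=0,\dots,n$. So I would split the minimisation into an inner problem -- minimise over all admissible $f$ with prescribed values $g_i:=f(t_i)$ and slopes $p_i:=f'(t_i)$ -- and an outer problem -- minimise the resulting expression over $(\mathbf g,\mathbf p)\in\mathbb R^{2n}$. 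In the inner problem the first two terms of $J$ are frozen, so it reduces to minimising each $\lambda_i\int_{t_i}^{t_{i+1}}(f'')^2$ separately, subject only to the endpoint constraints on that interval.

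For an interior interval $[t_i,t_{i+1}]$, $1\le i\le n-1$, with $\lambda_i>0$, the unique minimiser of $\int(f'')^2$ among functions with prescribed values and first derivatives at \emph{both} endpoints is the cubic Hermite interpolant $h$ of that data. I would prove this by the standard orthogonality argument: write any competitor as $h+\delta$, where $\delta$ and $\delta'$ vanish at $t_i$ and $t_{i+1}$; expand $\int(h''+\delta'')^2=\int(h'')^2+2\int h''\delta''+\int(\delta'')^2$; integrate the cross term by parts, using that $h'''$ is constant and that the boundary terms vanish, to get $\int h''\delta''=0$; conclude $\int(f'')^2\ge\int(h'')^2$ with equality iff $\delta''\equiv0$, whence $\delta$ is affine and, by the vanishing endpoint data, $\delta\equiv0$. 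For the two boundary intervals $[a,t_1]$ and $[t_n,b]$ only one endpoint carries a constraint; since $\lambda_0,\lambda_n>0$ the infimum of the (nonnegative) energy is $0$, attained only by the affine function matching the value and slope at $t_1$ (resp.\ $t_n$), because zero energy forces $f''\equiv0$ there. Because the interior Hermite cubics and the two boundary affine pieces all match both value and first derivative at every shared knot, the pieces glue into a single function on $[a,b]$ that is $C^1$, piecewise polynomial (hence piecewise $C^2$ and admissible), cubic on the interior intervals, and linear on $[a,t_1]$ and $[t_n,b]$ -- exactly the claimed form. An argument by contradiction then shows any minimiser $f^\ast$ of $J$ must coincide with the spline built from its own knot data, and so lies in the space $S$ of such splines; $S$ is finite-dimensional, parametrised bijectively by $(\mathbf g,\mathbf p)\in\mathbb R^{2n}$.

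It remains to solve the outer problem on $S$. Writing $\mathbf w=(g_1,\dots,g_n,p_1,\dots,p_n)^\top$, the restriction of $J$ to $S$ is the quadratic function
\[
J=\tfrac1n\|\mathbf y-\mathbf g\|^2+\tfrac\gamma n\|\mathbf v-\mathbf p\|^2+\mathbf w^\top\Omega\,\mathbf w ,
\]
where $\Omega\succeq0$ is the Gram matrix of the penalty term on $S$ (its entries are the $\lambda_i$-weighted integrals of products of second derivatives of the Hermite basis functions). The Hessian of $J|_S$ equals $\tfrac2n\,\mathrm{diag}(I_n,\gamma I_n)+2\Omega\succeq\tfrac2n\min(1,\gamma)\,I_{2n}\succ0$ since $\gamma>0$, so $J|_S$ is strictly convex and coercive and has a unique minimiser $(\hat{\mathbf g},\hat{\mathbf p})$, hence a unique minimising spline $\hat f\in S$. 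Combined with the previous paragraph, $\hat f$ is the unique global minimiser of $J$; and $n\ge2$ guarantees at least one genuine interior interval on which $\hat f$ is cubic.

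The step I expect to need the most care is the inner minimisation: getting the integration by parts and the vanishing boundary terms right on a general interval, treating the two boundary intervals correctly (one-sided constraint, affine minimiser), and verifying that the reassembled piecewise function is genuinely $C^1$ and hence an admissible competitor. Once the problem is reduced to $S$, the remaining finite-dimensional optimisation is a routine strictly convex quadratic minimisation in which the velocity weight $\gamma>0$ together with the positive-definiteness of the data-fidelity part supplies the strict convexity.
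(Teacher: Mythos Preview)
Your proposal is correct and follows essentially the same route as the paper: both arguments hinge on the integration-by-parts orthogonality computation showing that, for fixed knot values and slopes, the cubic Hermite interpolant (affine on the boundary intervals) minimises each per-interval bending energy, so any competitor $g$ can be improved to a spline $f$ with $J[f]\le J[g]$. Your inner/outer decomposition and explicit strict-convexity argument on the finite-dimensional space $S$ make existence and uniqueness of the minimiser fully explicit, whereas the paper's appendix proof establishes only that the minimiser must be of the stated form and leaves existence/uniqueness to the subsequent ridge-regression computation of $\hat\theta$; in that sense your write-up is slightly more self-contained, but the mathematical content is the same.
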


We term the minimizer of \eqref{tractorsplineObjective} the \textit{V-spline} because it incorporates velocity information and because of its application to vehicle and vessel tracking. The proof of Theorem \ref{TractorSplineTheorem} is in Appendix \ref{AppendixTractorSplineProof}. 

{\em Remark:} In the language of splines, the points $t_1,\ldots,t_n$ are the interior knots of the V-spline, and $t_0, t_{n+1}$ are the exterior or boundary knots. Since the V-spline is linear on $[a,t_1]$ and $[t_n,b]$, without loss of generality we let $t_1=a$ and $t_n=b$ from now on.

\subsection{Constructing basis functions}

It is convenient to construct basis functions from cubic Hermite splines \citep{hintzen2010improved}. If $y_0$ and $v_0$ are the position and velocity at time $0$, and $y_1$ and $v_1$ are the position and velocity at time $1$, then for $s\in[0,1]$, the cubic Hermite spline is defined as
\begin{equation*}
f(s)=\left(2s^3-3s^2+1\right)y_0+\left(s^3-2s^2+s\right)v_0+\left(-2s^3+3s^2\right)y_1+\left(s^3-s^2\right)v_1.
\end{equation*}
This implies that for an arbitrary interval $[t_i, t_{i+1})$, the relevant cubic spline basis functions are
\begin{align}\label{hermitebasis1}
& h_{00}^{\left(i\right)}\left(t\right)=
\begin{cases}
2\left(\frac{t-t_{i}}{t_{i+1}-t_{i}}\right)^3-3\left(\frac{t-t_{i}}{t_{i+1}-t_{i}}\right)^2+1 & t_i\leq t<t_{i+1} \\ 
0 & \mbox{otherwise}
\end{cases}, \\
& h_{10}^{\left(i\right)}\left(t\right)=\begin{cases}
\frac{\left(t-t_{i}\right)^3}{\left(t_{i+1}-t_{i}\right)^2}-2\frac{\left(t-t_{i}\right)^2}{t_{i+1}-t_{i}}+\left(t-t_{i}\right)  \hphantom{.}  & t_i\leq t<t_{i+1} \\ 
0 &   \mbox{otherwise}
\end{cases},\\
& h_{01}^{\left(i\right)}\left(t\right)=
\begin{cases}
-2\left(\frac{t-t_i}{t_{i+1}-t_i}\right)^3+3\left(\frac{t-t_i}{t_{i+1}-t_i}\right)^2 \hphantom{+} & t_i\leq t<t_{i+1} \\ 
0 &   \mbox{otherwise}
\end{cases},\\
& h_{11}^{\left(i\right)}\left(t\right)=\begin{cases}
\frac{\left(t-t_i\right)^3}{\left(t_{i+1}-t_i\right)^2}-\frac{\left(t-t_i\right)^2}{t_{i+1}-t_i}  \hphantom{+t-+123-}  & t_i\leq t<t_{i+1} \\ 
0 &   \mbox{otherwise}
\end{cases}.
\end{align}
Consequently, the cubic Hermite spline $f^{(i)}(t)$ on an arbitrary interval $[t_i,t_{i+1})$ with two successive points $\{y_i,v_i\}$ and $\{y_{i+1},v_{i+1}\}$ is expressed as
\begin{equation}\label{cubicHermitesplineform}
f^{(i)}(t)=h_{00}^{(i)}(t)y_i+h_{10}^{(i)}(t)  v_i+h_{01}^{(i)}(t)y_{i+1} +h_{11}^{(i)}(t) v_{i+1}.
\end{equation}

For V-splines, a slightly more convenient basis is given by $\{N_k(t)\}_{k=1}^{2n}$, where $N_1(t) = h^{(1)}_{00}(t)$, $N_2(t) = h^{(1)}_{10}(t)$, and for all $i=1,2,\ldots,n-2$, 
\begin{align*}
N_{2i+1}(t)&=h_{01}^{(i)}(t)+h_{00}^{(i+1)}(t), \\
N_{2i+2}(t)&= h_{11}^{(i)}(t)+h_{10}^{(i+1)}(t),
\end{align*}
and
\begin{align*}
N_{2n-1}(t) &= 
\begin{cases}
h_{01}^{(n-1)}(t) & \mbox{if $t<t_n$}\\ 
1 & \mbox{if $t=t_n$}
\end{cases},\\
N_{2n}(t) &= h_{11}^{(n-1)}(t).
\end{align*}
Any $f\in{\mathcal C}^{(2)}_{\text{piecewise}}[a,b]$ can then be represented in the form
\begin{equation}
f(t)=\sum_{k=1}^{2n}N_k(t)\theta_k,
\end{equation}
where $\left\lbrace \theta_k\right\rbrace_{k=1}^{2n}$ are parameters.

\subsection{Computing the V-spline}

In terms of the basis functions in the previous section, the objective function \eqref{tractorsplineObjective} is given by
\begin{equation}\label{tractormse}
nJ[f](\theta, \lambda,\gamma) = \left(\mathbf{y}-B\theta\right)^\top \left(\mathbf{y}-B\theta\right) +\gamma \left(\mathbf{v}-C\theta\right)^\top \left(\mathbf{v}-C\theta\right)+n \theta^\top\Omega_{\lambda}\theta,
\end{equation}
where $B$ and $C$ are $n\times 2n$ matrices with components
\begin{align}
[B]_{ij}&=N_j(t_i)=\begin{cases}
1, & j=2i-1\\
0, & \mbox{otherwise}
\end{cases}\\
[C]_{ij}&=N'_j(t_i)=\begin{cases}
1, & j=2i\\
0, & \mbox{otherwise}
\end{cases}
\end{align}
and $\Omega{_\lambda}$ is a $2n\times 2n$ matrix with components $[\Omega_{\lambda}]_{jk}=\int_{a}^{b}\lambda(t) N''_j(t)N''_k(t)dt$. In the following, we reserve the use of boldface for $n\times 1$ vectors and $n\times n$ matrices.

The detailed structure of $\Omega{_\lambda}$ is considered in Appendix \ref{PenaltyTermDetails}. It is convenient to write $\Omega_{\lambda}=\sum_{i=1}^{n-1}\lambda_i\Omega^{(i)}$, where $[\Omega^{(i)}]_{jk}=\int_{t_i}^{t_{i+1}} N''_j(t)N''_k(t)dt$. It is then evident that $\Omega{_\lambda}$ is a bandwidth four matrix.

Since equation \eqref{tractormse} is a quadratic form in terms of $\theta$, it is straightforward to establish that the objective function is minimized at
\begin{equation}
\hat{\theta}=\left(B^\top B+\gamma C^\top C+n\Omega_{\lambda}\right)^{-1}\left(B^\top\mathbf{y}+\gamma C^\top\mathbf{v}\right),\label{thetahat}
\end{equation}
which can be identified as a generalized ridge regression. The fitted V-spline is then given by
$\hat{f}(t)=\sum_{k=1}^{2n}N_k(t)\hat{\theta}_k$. 

The V-spline is an example of a linear smoother \citep{esl2009}. This is because the estimated parameters in equation \eqref{thetahat} are a linear combination of $\mathbf{y}$ and $\mathbf{v}$. Denoting by $\hat{\mathbf{f}}$ and $\hat{\mathbf{f}'}$ the vector of fitted values $\hat{f}(t_i)$ and $\hat{f'}(t_i)$ at the training points $t_i$, we have
\begin{equation}\label{eq:fittedy}
\hat{\mathbf{f}} =B\left(B^\top B+\gamma C^\top C+n\Omega_{\lambda}\right)^{-1}\left(B^\top\mathbf{y}+\gamma C^\top\mathbf{v}\right)=: \mathbf{S}_{\lambda,\gamma}\mathbf{y}+\gamma\mathbf{T}_{\lambda,\gamma}\mathbf{v} 
\end{equation}
\begin{equation}\label{eq:fittedv}
\hat{\mathbf{f}'}
=C\left(B^\top B+\gamma C^\top C+n\Omega_{\lambda}\right)^{-1}\left(B^\top\mathbf{y}+\gamma C^\top\mathbf{v}\right)=:\mathbf{U}_{\lambda,\gamma}\mathbf{y}+\gamma\mathbf{V}_{\lambda,\gamma}\mathbf{v}
\end{equation}
where $\mathbf{S}_{\lambda,\gamma}, \mathbf{T}_{\lambda,\gamma}, \mathbf{U}_{\lambda,\gamma}$ and $\mathbf{V}_{\lambda,\gamma}$ are smoother matrices that depend only on $t_i,\lambda(t)$ and $\gamma$. It is not hard to show that $\mathbf{S}_{\lambda,\gamma}$ and $\mathbf{V}_{\lambda,\gamma}$ are symmetric, positive semidefinite matrices. Note that  $\mathbf{T}_{\lambda,\gamma}= \mathbf{U}_{\lambda,\gamma}^\top$.

\begin{theorem}\label{TractorsplineCorollary}
If $f(t)$ is a V-spline then, for almost all $\mathbf{y}$ and $\mathbf{v}$, $f''(t)$ is continuous at the knots if and only if $\gamma=0$ and $\lambda_i=\lambda_0$, for all $i=1,\ldots,n-1$.
\end{theorem}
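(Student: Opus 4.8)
The plan is to rephrase ``$\hat f''$ continuous at the knots'' as a linear membership condition on the coefficient vector $\hat\theta$ and then read that condition off the closed form \eqref{thetahat}. Write $\mathcal P\subset\mathbb R^{2n}$ for the ``position'' subspace spanned by the odd coordinate vectors $e_1,e_3,\dots,e_{2n-1}$, and $\mathcal M\subset\mathbb R^{2n}$ for the set of $\theta$ such that $\hat f=\sum_k N_k\theta_k$ is a natural cubic spline. Since $\hat f$ is automatically piecewise cubic and $\mathcal C^1$ and is affine on the flanking intervals $[a,t_1]$ and $[t_n,b]$, ``$\hat f''$ continuous at every knot'' is exactly the requirement that $\hat f''$ be continuous at $t_2,\dots,t_{n-1}$ and vanish at $t_1,t_n$ — the $n$ independent linear conditions cutting out $\mathcal M$ — so $\mathcal M$ is a subspace of dimension $n$, hence proper in $\mathbb R^{2n}$ for $n\ge 2$. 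Because $\hat\theta=M^{-1}(B^\top\mathbf y+\gamma C^\top\mathbf v)$ with $M=B^\top B+\gamma C^\top C+n\Omega_\lambda$ invertible, the set $\{(\mathbf y,\mathbf v):\hat\theta\in\mathcal M\}$ is a linear subspace of $\mathbb R^{2n}$; it has full Lebesgue measure if and only if it is all of $\mathbb R^{2n}$, i.e.\ if and only if the image of the linear map $(\mathbf y,\mathbf v)\mapsto\hat\theta$ lies in $\mathcal M$. So it suffices to show this image lies in $\mathcal M$ precisely when $\gamma=0$ and all the $\lambda_i$ coincide.

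For the ``if'' direction, when $\gamma=0$ and $\lambda_i\equiv\lambda$ the objective \eqref{tractorsplineObjective} collapses to the classical penalized criterion $\frac1n\sum_i(y_i-f(t_i))^2+\lambda\int_a^b(f''(t))^2\,dt$; its unique minimizer over $\mathcal C^{(2)}_{\text{piecewise}}[a,b]$ (and hence over the span of the $N_k$, which contains it) is the natural cubic smoothing spline, by the classical argument of \cite{green1993nonparametric} — for fixed values $f(t_i)$ the curvature integral is minimized by the natural cubic interpolant, which is $\mathcal C^2$. Thus every V-spline with these parameters lies in $\mathcal M$.

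For the ``only if'' direction I would first dispose of $\gamma$. If $\gamma\ne 0$ then $B^\top\mathbf y$ sweeps out $\mathcal P$ and $\gamma C^\top\mathbf v$ sweeps out the velocity subspace $\mathcal P^\perp$, so $B^\top\mathbf y+\gamma C^\top\mathbf v$ ranges over all of $\mathbb R^{2n}$ and, $M$ being invertible, so does $\hat\theta$; this cannot lie in the proper subspace $\mathcal M$, a contradiction, so $\gamma=0$. With $\gamma=0$, $M=B^\top B+n\Omega_\lambda$ (still invertible, since $B^\top B$ and $\Omega_\lambda$ are positive semidefinite with trivial common kernel for $n\ge2$), and the image of $(\mathbf y,\mathbf v)\mapsto\hat\theta$ is the $n$-dimensional subspace $M^{-1}\mathcal P$. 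Containment $M^{-1}\mathcal P\subseteq\mathcal M$ between two $n$-dimensional subspaces forces $M^{-1}\mathcal P=\mathcal M$, hence $M\mathcal M=\mathcal P$; since the range of $B^\top B$ lies in $\mathcal P$, this is equivalent to $\Omega_\lambda\mathcal M\subseteq\mathcal P$ — in words, for every natural cubic spline $\tilde f=\sum_j N_j\theta_j$ the vector $\Omega_\lambda\theta$ has vanishing even-indexed components. Its $2k$-th component, for an interior knot $t_k$ ($2\le k\le n-1$), equals $\int_a^b\lambda(t)\,\tilde f''(t)\,N_{2k}''(t)\,dt$, and since $N_{2k}=h_{11}^{(k-1)}+h_{10}^{(k)}$ is supported on $[t_{k-1},t_{k+1}]$ and vanishes at every knot, integrating by parts on the two flanking intervals evaluates this to $(\lambda_{k-1}-\lambda_k)\,\tilde f''(t_k)$. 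Choosing $\tilde f$ to be a natural cubic spline with $\tilde f''(t_k)\ne 0$ (possible, since the interior moments of a natural cubic spline are arbitrary), this is nonzero whenever $\lambda_{k-1}\ne\lambda_k$; as a non-constant tuple $(\lambda_1,\dots,\lambda_{n-1})$ must contain an adjacent unequal pair, $\Omega_\lambda\mathcal M\subseteq\mathcal P$ forces all the $\lambda_i$ to be equal. (The complementary inclusion $\Omega_\lambda\mathcal M\subseteq\mathcal P$ for constant $\lambda$, which re-proves the ``if'' direction at the matrix level, follows from the same integration by parts together with the natural boundary conditions $\tilde f''(t_1)=\tilde f''(t_n)=0$.)

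The step I expect to be the real work is this last, local computation in the ``only if'' direction — that a genuinely non-constant penalty destroys $\mathcal C^2$-continuity — since it needs the explicit second derivatives of the Hermite basis functions $h_{10}^{(i)},h_{11}^{(i)}$ on two adjacent intervals (from Section~\ref{SectionTractorSpline} and Appendix~\ref{PenaltyTermDetails}) and the observation that a natural cubic spline can be chosen with a prescribed nonzero second derivative at a given interior knot. Everything else is soft: a dimension count, the description of the image of a linear map, and the classical characterization of natural cubic splines. I would also check the degenerate case $n=2$ separately: there is only one interior interval, so the claim about the $\lambda_i$ is vacuous, while $\mathcal M$ (the $2$-dimensional space of affine functions) is still a proper subspace of $\mathbb R^4$ and the conclusion $\gamma=0$ follows exactly as above.
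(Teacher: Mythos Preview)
Your proof is correct and takes a genuinely different route from the paper's. The paper works pointwise with the fitted spline: it first derives the normal-equation identity $Q_\lambda\hat\theta=\tfrac{\gamma}{n}(\mathbf v-\mathbf f')$ (with $Q_\lambda=C\Omega_\lambda$) and, via the same integration by parts you use, identifies $(Q_\lambda\hat\theta)_i=\lambda_{i-1}f''(t_i^-)-\lambda_i f''(t_i^+)$; the forward direction then argues from the boundary knot $t_1$ that $\gamma(v_1-f'_1)=0$ and rules out $v_1=f'_1$ generically by an explicit column computation in $G$, after which the $\lambda_i$-equality is asserted to follow from the remaining knots. You instead abstract the whole question to a containment of linear images: the ``almost all'' is turned into ``all'' by the subspace observation, $\gamma\neq 0$ is eliminated in one line by surjectivity, and the dimension count $M^{-1}\mathcal P=\mathcal M$ reduces the $\lambda$-part to $\Omega_\lambda\mathcal M\subseteq\mathcal P$, which your integration by parts evaluates on a \emph{freely chosen} natural cubic spline. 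What your approach buys is a cleaner treatment of the ``only if'' half --- in particular your argument for the $\lambda_i$ being constant is fully spelled out, whereas the paper leaves implicit why $f''(t_k)$ cannot vanish identically as $\mathbf y$ varies; what the paper's approach buys is the explicit identity $Q_\lambda\hat\theta=\tfrac{\gamma}{n}(\mathbf v-\mathbf f')$, which it also uses to explain the drop in degrees of freedom at $\gamma=0$. The underlying local computation (second derivatives of the Hermite pieces, Appendix~\ref{PenaltyTermDetails}) is common to both.
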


The proof of Theorem \ref{TractorsplineCorollary} is in Appendix \ref{proofofCorollary}. The V-spline with $\gamma=0$ and $\lambda_i=\lambda_0$ is simply the conventional smoothing spline.

\subsection{The adaptive V-spline}

Until now, we have not explicitly considered the impact of irregularly sampled observations or of noisy measurements of velocity on trajectory reconstruction. In order to do this, it is instructive to evaluate the contribution to the  penalty term from the interval $[t_i,t_{i+1})$. Using \eqref{cubicHermitesplineform}, it is relatively straightforward to show that
\begin{equation}
f''(t)=\frac{1}{t_{i+1} -t_i}\left\{ 6\left(\varepsilon^{+}_i+\varepsilon^{-}_{i}\right)\frac{t-t_i}{t_{i+1}-t_i}-2\left(2\varepsilon^{+}_i+\varepsilon^{-}_{i}\right)\right\},
\end{equation}
where $\varepsilon^+_i=v_i-\bar{v}_i$, $\varepsilon^-_{i}=v_{i+1}-\bar{v}_i$ and $\bar{v}_i=(y_{i+1}-y_i)/(t_{i+1}-t_i)$ is the average velocity over the interval. The $\varepsilon^{\pm}_i$ can be interpreted as the difference at time $t_i$ and $t^-_{i+1}$ respectively between the velocity implied by an interpolating Hermite spline and the velocity implied by a straight line reconstruction.

The contribution to the penalty term is then
\begin{equation}\label{pen_contrib}
4\lambda_i\frac{(\varepsilon^+_i)^2+\varepsilon^+_i\varepsilon^-_{i}+(\varepsilon^{-}_{i})^2}{\Delta T_i},
\end{equation}
where $\Delta T_i=t_{i+1}-t_i$. We call the quantity $(\varepsilon^+_i)^2+\varepsilon^+_i\varepsilon^-_{i}+(\varepsilon^{-}_{i})^2$, the square of the {\em discrepancy} of the velocity on the interval $[t_i,t_{i+1})$.

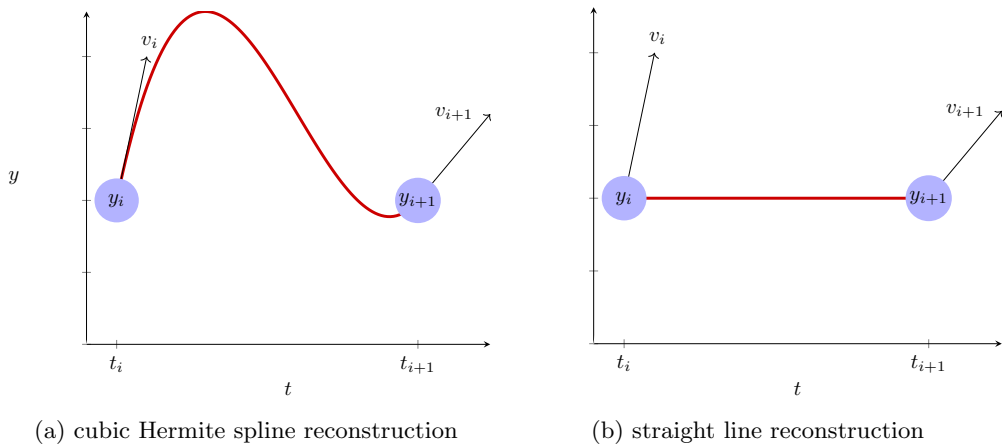
\begin{figure}[h]
\centering
\begin{subfigure}[b]{0.45\textwidth}
\centering
\resizebox{\linewidth}{!}{
	  \begin{tikzpicture}
	    \begin{axis}[
	    	ylabel style={rotate=-90},
	        axis x line=bottom,
	        axis y line=left,
	        xlabel={$t$},
	        ylabel={$y$},
	        domain=-0.1:1.11,
	        xtick={0, ..., 1},
	        samples=100,
	        xticklabels={$t_i$,$t_{i+1}$},
	        yticklabels=\empty,
	    ]
       \addplot[line width=0.5mm, red!80!black][domain=0:1]{1.25*pow(x,3)-2.25*pow(x,2)+x};
       \addplot [white] [domain=-0.1:0]{x};
	   \addplot[->] coordinates {(0,0) (0.1,0.1)};
	   \addplot[->] coordinates {(1,0) (1.24,0.06)};
	   \node[circle,fill=blue!30,inner sep=2pt] at (axis cs:0,0) {$\mbox{ }y_i\mbox{ }$};
	   \node[circle,fill=blue!30,inner sep=1pt] at (axis cs:1,0) {$y_{i+1}$};
	   \node[] at (axis cs:0.11,0.11) {$v_i$};
	   \node[] at (axis cs:1.12,0.06) {$v_{i+1}$};
	   \end{axis}
	  \end{tikzpicture}
      }
      \caption{cubic Hermite spline reconstruction}\label{subhermite}
\end{subfigure}
\begin{subfigure}[b]{0.45\textwidth}
       \centering
       \resizebox{\linewidth}{!}{
       \begin{tikzpicture}
       	    \begin{axis}[
       	        axis x line=bottom,
       	        axis y line=left,
       	        xlabel={$t$},
       	        ylabel=\empty,
       	        domain=-0.1:1.11,
       	        xtick={0, ..., 1},
       	        samples=100,
       	        xticklabels={$t_i$,$t_{i+1}$},
       	        yticklabels=\empty,
       	      ]
       	    \addplot [line width=0.1mm, white][domain=0:1]{1.25*pow(x,3)-2.25*pow(x,2)+x};
       	    \addplot [white] [domain=-0.1:0]{x};
			\addplot [line width=0.5mm, red!80!black][domain=0:1] {0};
	   		\addplot[->] coordinates {(0,0) (0.1,0.1)};
	   		\addplot[->] coordinates {(1,0) (1.24,0.06)};
	   		\node[circle,fill=blue!30,inner sep=2pt] at (axis cs:0,0) {$\mbox{ }y_i\mbox{ }$};
	   		\node[circle,fill=blue!30,inner sep=1pt] at (axis cs:1,0) {$y_{i+1}$};
	   		\node[] at (axis cs:0.11,0.11) {$v_i$};
	   		\node[] at (axis cs:1.12,0.06) {$v_{i+1}$};
       	  \end{axis}
       \end{tikzpicture}
       }
	\caption{straight line reconstruction}
\end{subfigure}
\caption{Comparing cubic Hermite spline reconstruction and straight line reconstruction. When $\Delta T_i=t_{i+1}-t_i$ is large or $\bar{v}_i \Delta T_i=y_{i+1}-y_i$ is small, the adaptive V-spline favours straighter reconstructions.
}\label{figureAPT}
\end{figure}



As a consequence of \eqref{pen_contrib}, larger time intervals will tend to contribute less to the penalty term (other things being equal). But this is exactly when we would expect the velocity at the endpoints of the interval to provide less useful information about the trajectory over the interval. In the case when the observed change in position is small, i.e. when $y_{i+1}-y_i=\bar{v}_i \Delta T_i\approx 0$, over-reliance on noisy measurements of velocity will result in ``wiggly'' reconstructions. In these two instances -- graphically depicted in Figure \ref{subhermite} -- we would like the V-spline to adapt and to favour straighter reconstructions; this is a deliberate design choice. We can achieve this by choosing
\begin{equation}\label{adjustedpenalty}
\lambda_i=\eta \frac{\Delta T_i}{\bar{v}_i^2},
\end{equation}
where $\eta$ is a parameter to be estimated. The penalty term then takes a particularly compelling form: the contribution from the interval $[t_i,t_{i+1})$ \eqref{pen_contrib} is proportional to
\begin{equation}\label{compelling}
\left(\frac{\mbox{discrepancy in velocity}}{\mbox{average velocity}}\right)^2
\end{equation}
for all $i$. We call the resulting spline the {\em adaptive V-spline}. The spline when $\lambda_i=\lambda_0$, we call the {\em non-adaptive V-spline}.


\section{Parameter selection and cross-validation}\label{sect:CV}

The issue of choosing the smoothing parameter is ubiquitous in curve estimation and there are two different philosophical approaches to the problem. The first is to regard the free choice of smoothing parameter as an advantageous feature of the procedure. The second is to let the data determine the parameter \citep{green1993nonparametric}. We prefer the latter and use data to train our model and find the best parameters. The most well-known method for this is cross-validation.

In standard regression, which assumes the mean of the observation errors is zero, the true regression curve $f(t)$ has the property that if an observation $y_i$ is omitted at time point $t_i$, the value $f(t_i)$ is the best predictor of $y_i$ in terms of returning the least value of $\left(y_i-f(t_i)\right)^2$. We use this observation to motivate a leave-one-out cross-validation scheme to estimate $\lambda$ and $\gamma$ for both the non-adaptive and the adaptive V-splines.

%

Let $\hat{f}^{(-i)}(t,\lambda,\gamma)$ be the minimizer of
\begin{align}
\frac{1}{n}\sum_{j \neq i}\left( y_j-f(t_j) \right)^2+\frac{\gamma}{n}\sum_{j \neq i} \left( v_j-f'(t_j) \right)^2+ \sum_{i=1}^{n-1}\lambda_i \int_{t_i}^{t_{i+1}} \left( f''(t) \right)^2dt,
\end{align}
and define the cross-validation score
\begin{align}
\mbox{CV}\left(\lambda,\gamma\right)=\frac{1}{n}\sum_{i=1}^{n}\left( y_i-\hat{f}^{(-i)}\left(t_i,\lambda,\gamma\right) \right) ^2.
\end{align}
We then choose $\lambda$ and $\gamma$ that jointly minimize $\mbox{CV}(\lambda,\gamma)$.

The following theorem establishes that we can compute the cross-validation score without knowing the $\hat{f}^{(-i)}(t,\lambda,\gamma)$:
\begin{theorem}\label{tractorsplinecvscore}
The cross-validation score of a V-spline satisfies
\begin{equation}\label{tractorcv}
\mbox{CV}\left(\lambda,\gamma\right)=\frac{1}{n}\sum_{i=1}^{n} \left( \frac{y_i-\hat{f}(t_i)+\gamma \frac{T_{ii}}{1-\gamma V_{ii}}(v_i-\hat{f}'(t_i))}{1-S_{ii}-\gamma\frac{T_{ii}}{1-\gamma V_{ii}}U_{ii}} \right)^2
\end{equation}
where $\hat{f}$ is the V-spline smoother calculated from the full data set with smoothing parameter $\lambda$ and $\gamma$, and $S_{ii}=[\mathbf{S}_{\lambda,\gamma}]_{ii}$, etc.
\end{theorem}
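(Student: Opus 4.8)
The plan is to adapt the classical ``leave-one-out lemma'' for linear smoothers to the present two-response (position and velocity) setting. The key preliminary step is a \emph{ghost-point} identity: the leave-$i$-out fit $\hat f^{(-i)}$ is exactly the V-spline fitted to the \emph{full} data set in which the pair $(y_i,v_i)$ has been replaced by $(\tilde y_i,\tilde v_i):=\bigl(\hat f^{(-i)}(t_i),\,\hat f'^{(-i)}(t_i)\bigr)$. Indeed, evaluating the full objective $J$ at $f=\hat f^{(-i)}$ after this replacement, the two terms indexed by $i$ vanish identically, while the remaining sum-plus-penalty is minimized by the very definition of $\hat f^{(-i)}$; since every term of $J$ is nonnegative, $\hat f^{(-i)}$ is a global minimizer, and it is \emph{the} minimizer by the uniqueness in Theorem~\ref{TractorSplineTheorem} (applied to the reduced data, which is well-posed for $n\ge 3$). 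Writing $M:=B^\top B+\gamma C^\top C+n\Omega_\lambda$, using \eqref{thetahat}, and noting that $B^\top e_i=b_i$, $C^\top e_i=c_i$ where $b_i,c_i$ are the $i$-th rows of $B,C$ (so $b_i=e_{2i-1}$, $c_i=e_{2i}$), the coefficient vectors then satisfy $\hat\theta^{(-i)}-\hat\theta = M^{-1}\bigl(b_i(\tilde y_i-y_i)+\gamma c_i(\tilde v_i-v_i)\bigr)$.

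Next I would turn this into a $2\times2$ linear system. From the structure of $B,C$ one reads off $\hat f(t_i)=b_i^\top\hat\theta$, $\hat f'(t_i)=c_i^\top\hat\theta$ (and likewise for the leave-one-out fit), while \eqref{eq:fittedy}--\eqref{eq:fittedv} give $S_{ii}=b_i^\top M^{-1}b_i$, $T_{ii}=b_i^\top M^{-1}c_i$, $U_{ii}=c_i^\top M^{-1}b_i$, $V_{ii}=c_i^\top M^{-1}c_i$. Left-multiplying the displayed identity by $b_i^\top$ and by $c_i^\top$, and abbreviating $\alpha:=\tilde y_i-y_i$, $\beta:=\tilde v_i-v_i$, yields
\begin{align*}
(y_i-\hat f(t_i))+\alpha &= S_{ii}\,\alpha+\gamma T_{ii}\,\beta,\\
(v_i-\hat f'(t_i))+\beta &= U_{ii}\,\alpha+\gamma V_{ii}\,\beta.
\end{align*}
Solving the second relation for $\beta$, substituting into the first, and collecting the coefficient of $\alpha$ gives
\[
\alpha\Bigl(1-S_{ii}-\tfrac{\gamma T_{ii}U_{ii}}{1-\gamma V_{ii}}\Bigr) = -\bigl(y_i-\hat f(t_i)\bigr)-\tfrac{\gamma T_{ii}}{1-\gamma V_{ii}}\bigl(v_i-\hat f'(t_i)\bigr).
\]
Since the $i$-th cross-validation residual is $y_i-\hat f^{(-i)}(t_i)=y_i-\tilde y_i=-\alpha$, this is precisely the $i$-th summand of \eqref{tractorcv}; squaring and averaging over $i$ finishes the proof.

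The one technical point that must be nailed down is that the denominators $1-\gamma V_{ii}$ and $1-S_{ii}-\gamma T_{ii}U_{ii}/(1-\gamma V_{ii})$ are nonzero, so that the elimination above is legitimate. Their product equals $\det(I_2-K)$ with $K=\bigl(\begin{smallmatrix}S_{ii}&\gamma T_{ii}\\ U_{ii}&\gamma V_{ii}\end{smallmatrix}\bigr)$, and by the rank-two matrix-determinant lemma (applied with $P=[\,b_i\ \ \sqrt\gamma\,c_i\,]$) one has $\det(I_2-K)=\det\!\bigl(M-b_ib_i^\top-\gamma c_ic_i^\top\bigr)/\det(M)$. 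The numerator is the normal matrix of the leave-$i$-out problem, which is positive definite for $n\ge 3$ by the argument used in the proof of Theorem~\ref{TractorSplineTheorem} on the reduced data, and $\det M>0$ since $M$ is positive definite; hence both denominators are strictly positive and the derivation goes through for all $\mathbf y,\mathbf v$. A minor side remark: $T_{ii}=U_{ii}$ because $M^{-1}$ is symmetric (consistent with $\mathbf T_{\lambda,\gamma}=\mathbf U_{\lambda,\gamma}^\top$); the two symbols are retained in \eqref{tractorcv} only to mirror the structure of the calculation. Everything else used here --- the quadratic-form minimization \eqref{thetahat} and the identification of the smoother matrices --- is already in place from the preceding sections, so the main obstacle is really just the bookkeeping in the ghost-point identity and the positivity check just described.
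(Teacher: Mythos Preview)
Your argument is correct and follows essentially the same route as the paper: the ghost-point identity (the paper's Lemma~\ref{cvlemma}), the resulting $2\times 2$ linear system in the deleted residuals, and elimination of $\beta$ to isolate $\alpha$ are exactly the steps the authors take. Your explicit check that the denominators are nonzero (via the rank-two determinant lemma and positive definiteness of the leave-one-out normal matrix) is a welcome addition that the paper leaves implicit.
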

The proof of Theorem \ref{tractorsplinecvscore} is in Appendix \ref{AppCVscore}.

\section{Simulation study}\label{sect:sim}

In this section, we give an extensive comparison of methods for regularly sampled time series data followed by simulation of irregularly sampled data. The comparison is based on the ability to reconstruct trajectories derived from \textit{Blocks}, \textit{Bumps}, \textit{HeaviSine} and \textit{Doppler}, four functions which were used in \citep{donoho1994ideal, donoho1995adapting, abramovich1998wavelet} to mimic problematic features in imaging, spectroscopy and other types of signal processing. 

Letting $g(t)$ denote any one of {\it Blocks}, {\it Bumps}, {\it HeavisSine} or {\it Doppler}, we treat $g(t)$ as the velocity of the trajectory $f(t)$, i.e. $f'(t)=g(t)$. Numerically, we calculate the position data via
\begin{equation}\label{generateVelocity}
f(t_{i+1})=f(t_i)+\frac{g(t_i)+g(t_{i+1})}{2}(t_{i+1}-t_i),
\end{equation}
where $f(t_1)=0$. The observed position and velocity are then found by adding \iid zero-mean Gaussian noise:
\begin{align}\label{tractorsplinegeneratefunctions}
\begin{split}
y_i &= f(t_i) + \varepsilon^{(f)}_i, \\
v_i &= g(t_i) + \varepsilon^{(g)}_i,
\end{split}
\end{align}
where $\varepsilon^{(f)}_i\sim N(0,\sigma_f/SNR)$, $\varepsilon^{(g)}_i\sim N(0,\sigma_g/SNR)$, $\sigma_f$ is the standard deviation of the positions $f(t_i)$, $\sigma_g$ is the standard deviation of the velocities $g(t_i)$, and SNR is the signal-to-noise ratio, which we take to be 3 or 7.

\subsection{Regularly sampled time series}


We compare the performance of the adaptive V-spline with two wavelet transform reconstructions \citep{donoho1995adapting, abramovich1998wavelet}, a spatially adaptive penalized spline known as the \textit{P-spline} \citep{krivobokova2008fast, ruppert2003semiparametric}, as well as the adaptive V-spline with $\gamma=0$ and the non-adaptive V-spline. It is important to note that only the non-adaptive and adaptive V-splines incorporate velocity information. For the wavelet transform approach, we use both the \textit{sure} threshold policy and \textit{BayesThresh} with levels $l=4, \ldots, 9$. The V-spline parameters are obtained by minimizing the cross-validation score \eqref{tractorcv}. Following \cite{nason2010wavelet}, we fix $n=1024$ in the simulations.


\subsubsection{Numerical Examples}

\begin{figure}
    \centering
    \begin{subfigure}{0.45\textwidth}
    \centering
    \includegraphics[width=\linewidth,height=0.45\textwidth]{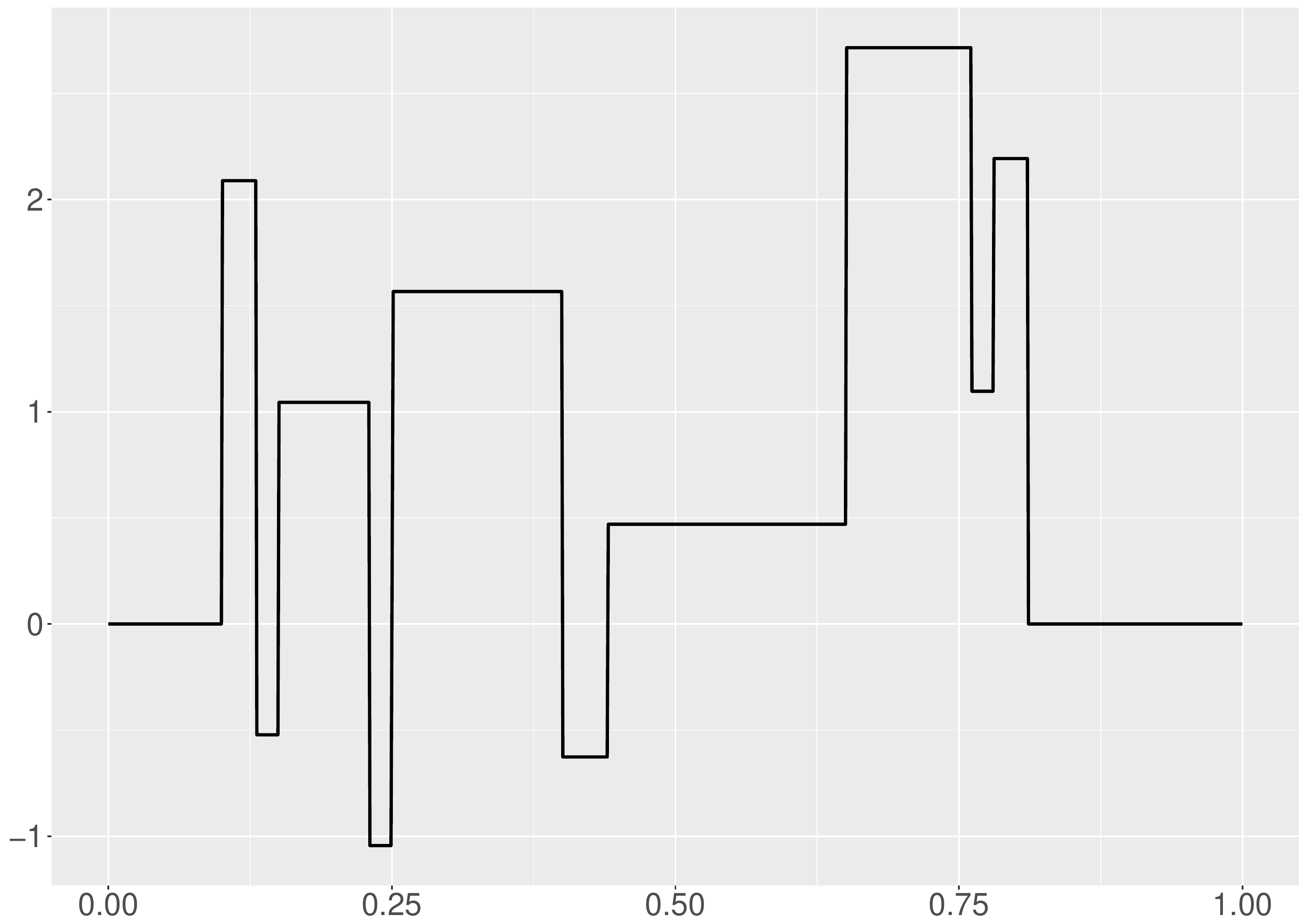}
    \caption{The \textit{Blocks} function}
    \end{subfigure}%
    \begin{subfigure}{0.45\textwidth}
    \centering
    \includegraphics[width=\linewidth,height=0.45\textwidth]{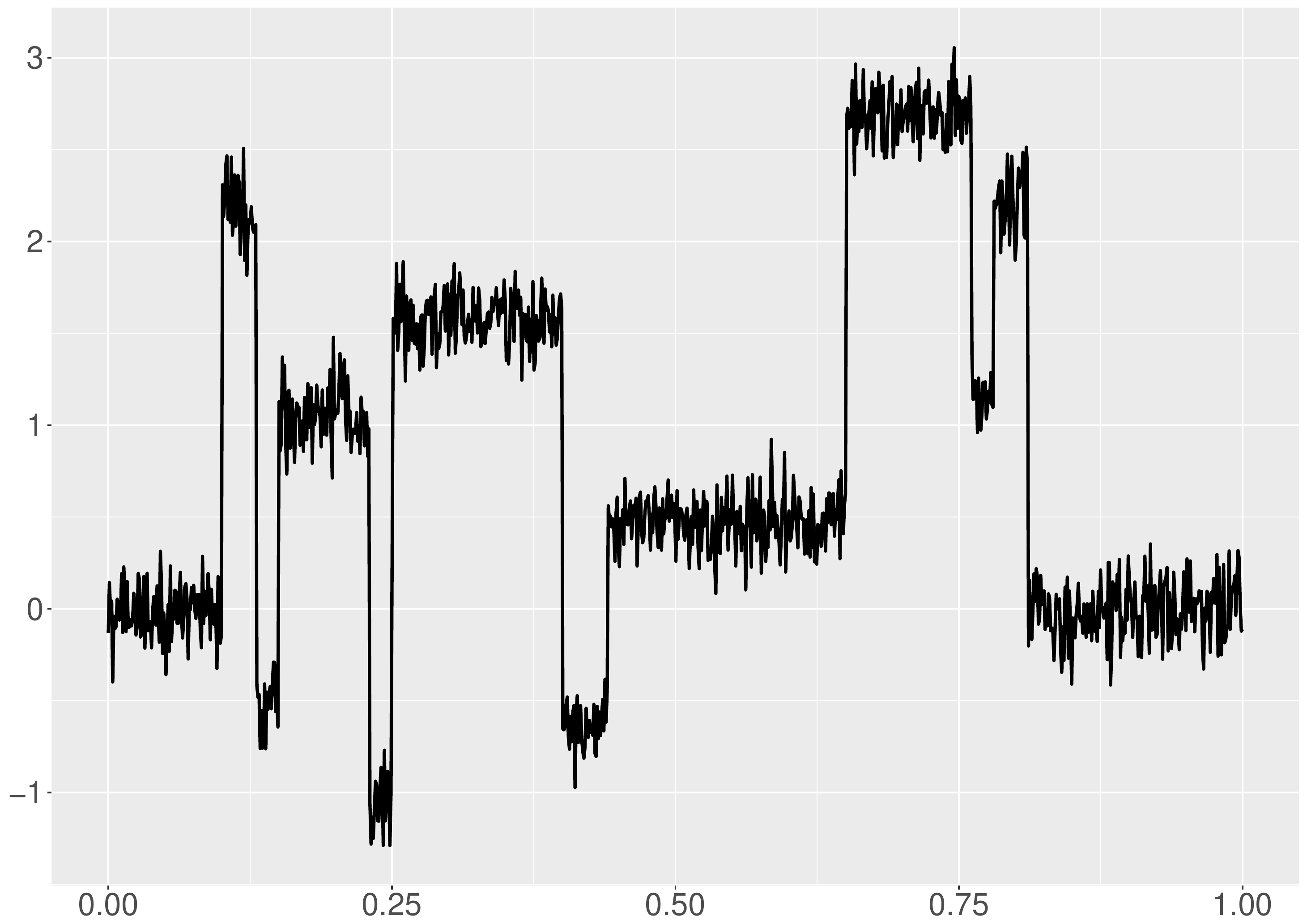}
    \caption{Noisy \textit{Blocks} at \textit{SNR}=7}
    \end{subfigure}
    \begin{subfigure}{0.45\textwidth}
    \centering
    \includegraphics[width=\linewidth,height=0.45\textwidth]{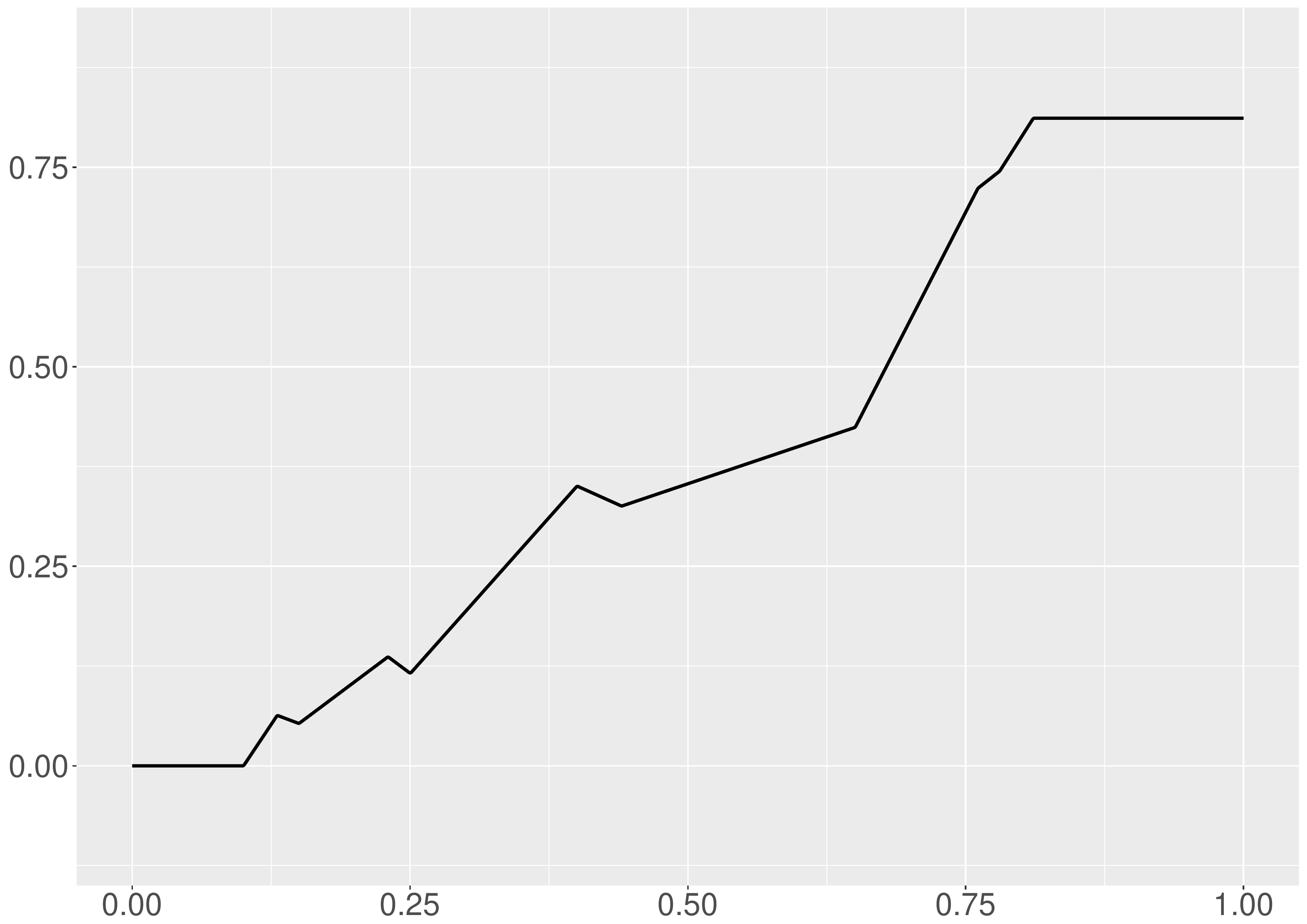}
    \caption{Generated positions}
    \end{subfigure}
    \begin{subfigure}{0.45\textwidth}
    \centering
    \includegraphics[width=\linewidth,height=0.45\textwidth]{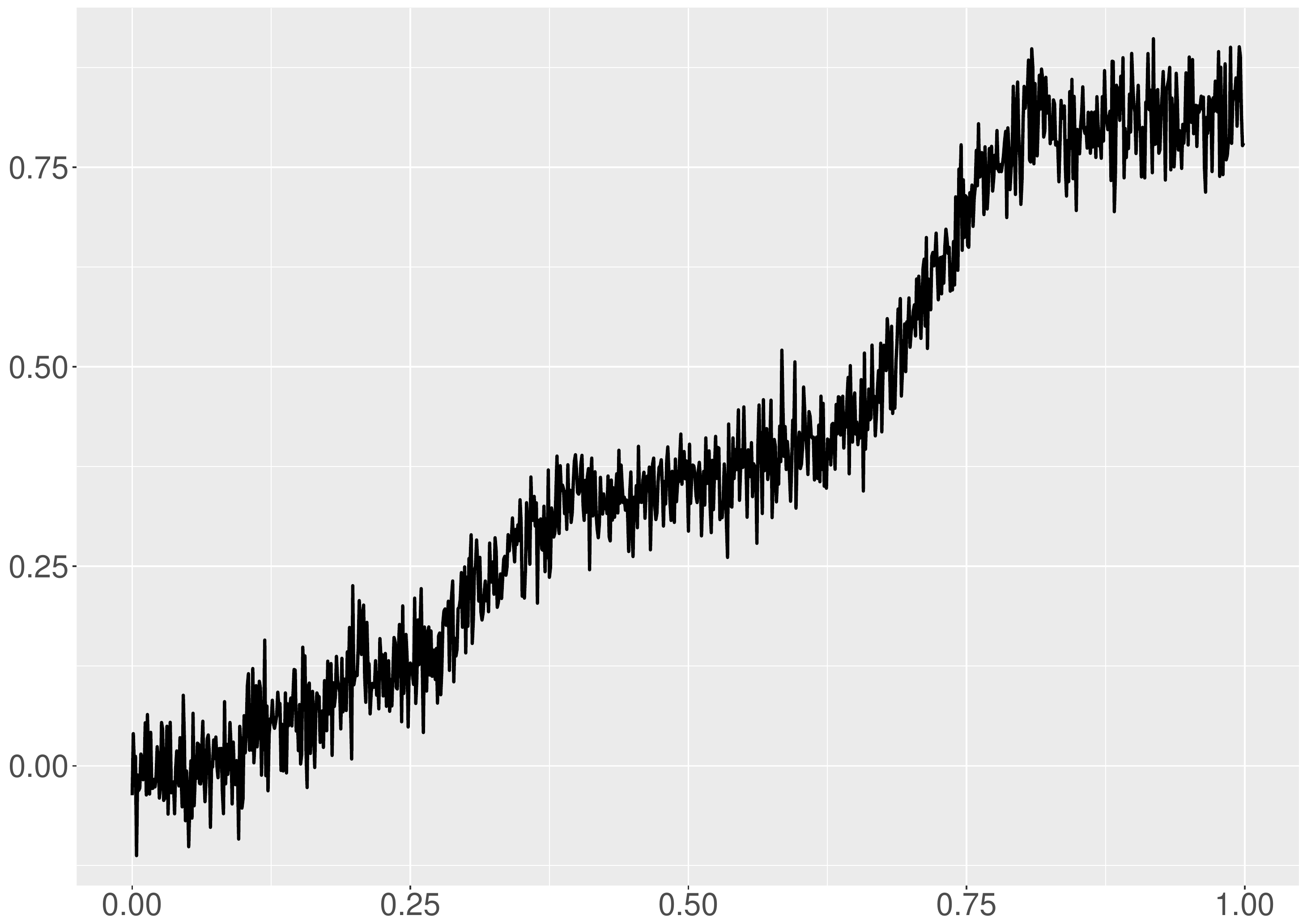}
    \caption{Noisy position at \textit{SNR}=7}
    \end{subfigure}
    \begin{subfigure}{0.45\textwidth}
    \centering
    \includegraphics[width=\linewidth,height=0.45\textwidth]{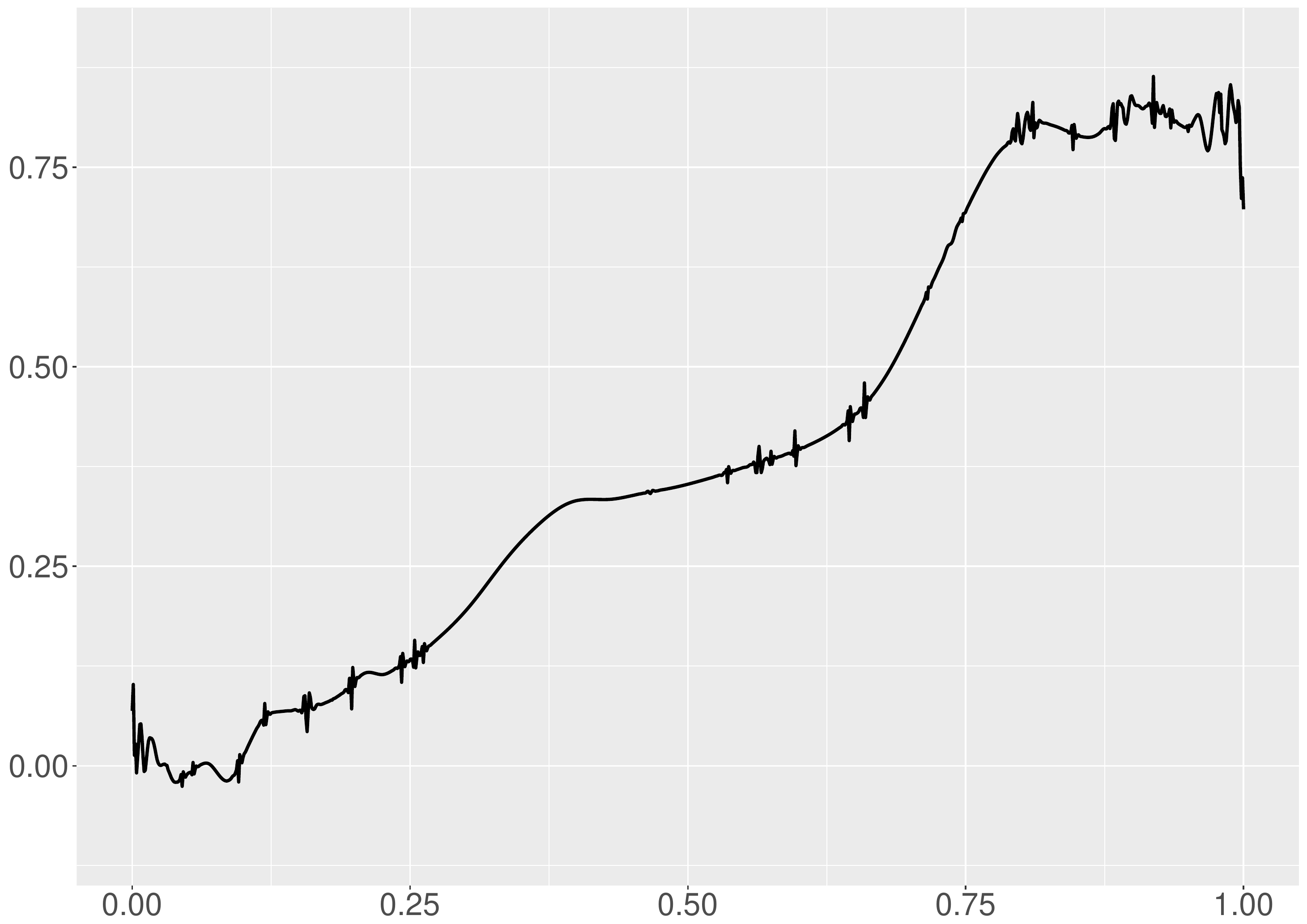}
    \caption{Reconstruction from Wavelet by sure threshold}
    \end{subfigure}
    \begin{subfigure}{0.45\textwidth}
    \centering
    \includegraphics[width=\linewidth,height=0.45\textwidth]{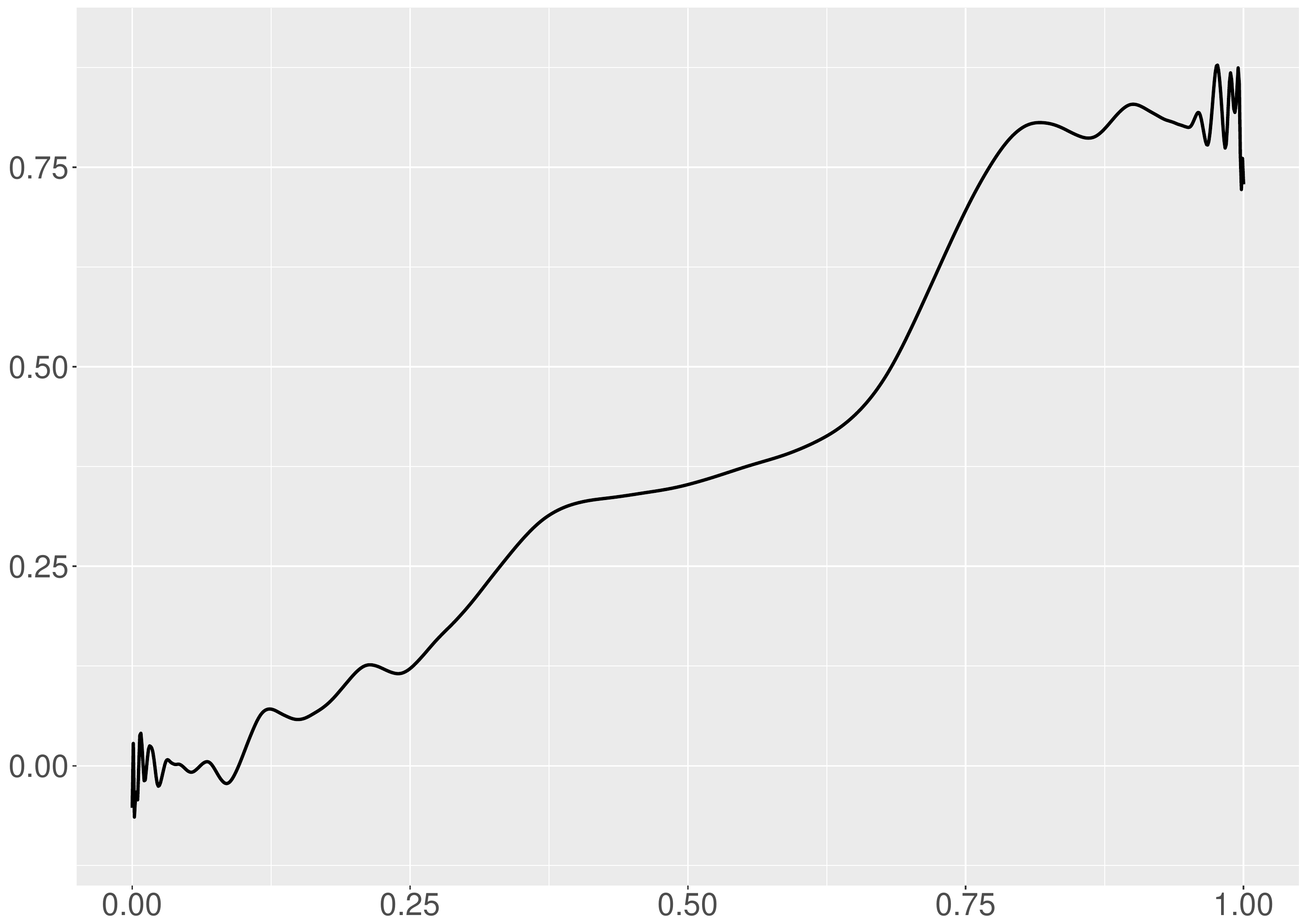}
    \caption{Reconstruction from Wavelet by BayesThresh approach}
    \end{subfigure}
    \begin{subfigure}{0.45\textwidth}
    \centering
    \includegraphics[width=\linewidth,height=0.45\textwidth]{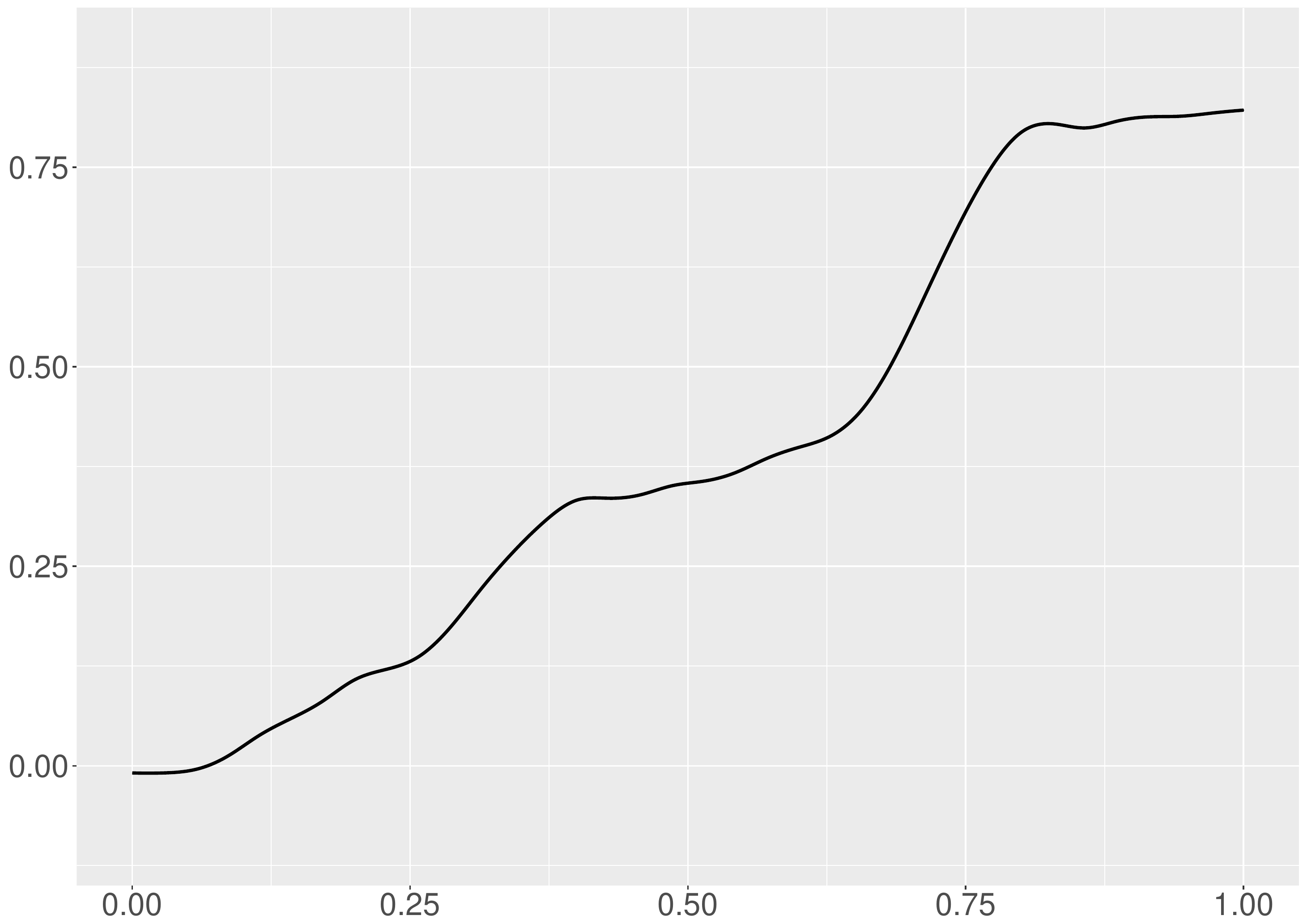}
    \caption{Reconstruction by P-spline\\{}}
    \end{subfigure}
    \begin{subfigure}{0.45\textwidth}
    \centering
    \includegraphics[width=\linewidth,height=0.45\textwidth]{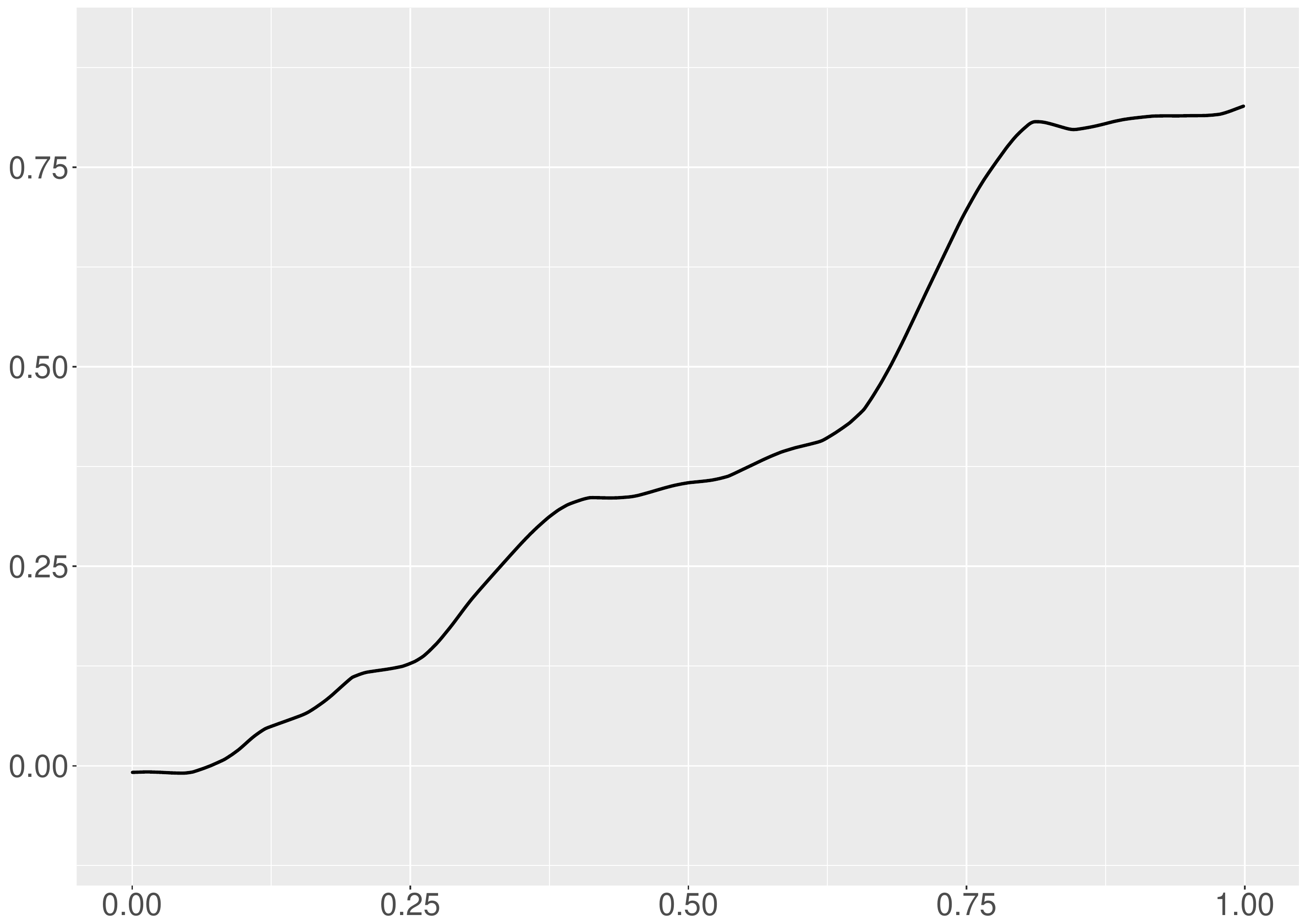}
    \caption{Reconstruction by adaptive V-spline, $\gamma=0$}
    \end{subfigure}
  \begin{subfigure}[t]{0.45\textwidth}
    \centering
    \includegraphics[width=\linewidth,height=0.45\textwidth]{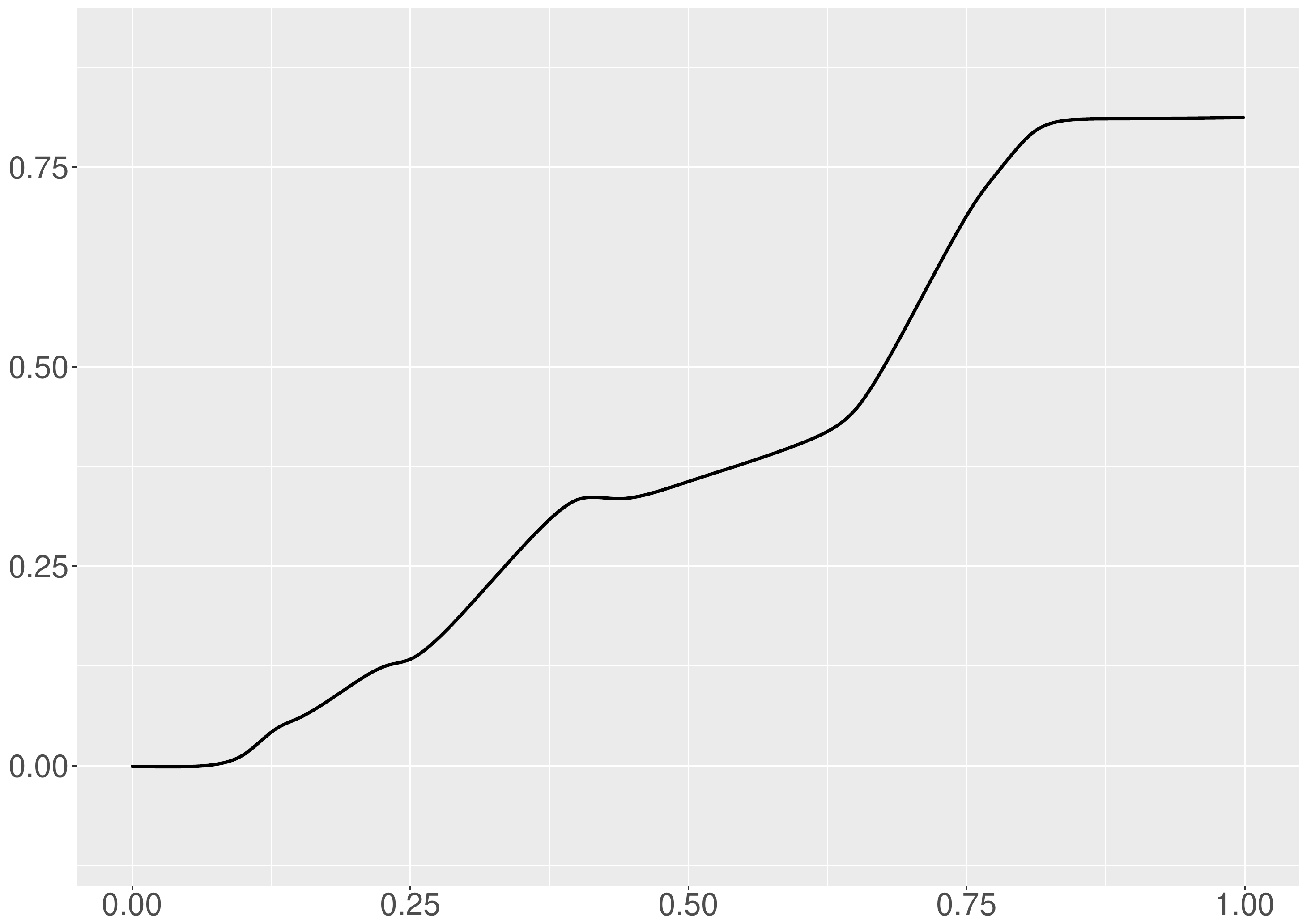}
    \caption{Reconstruction by non-adaptive V-spline}
    \end{subfigure}
    \begin{subfigure}[t]{0.45\textwidth}
    \centering
    \includegraphics[width=\linewidth,height=0.45\textwidth]{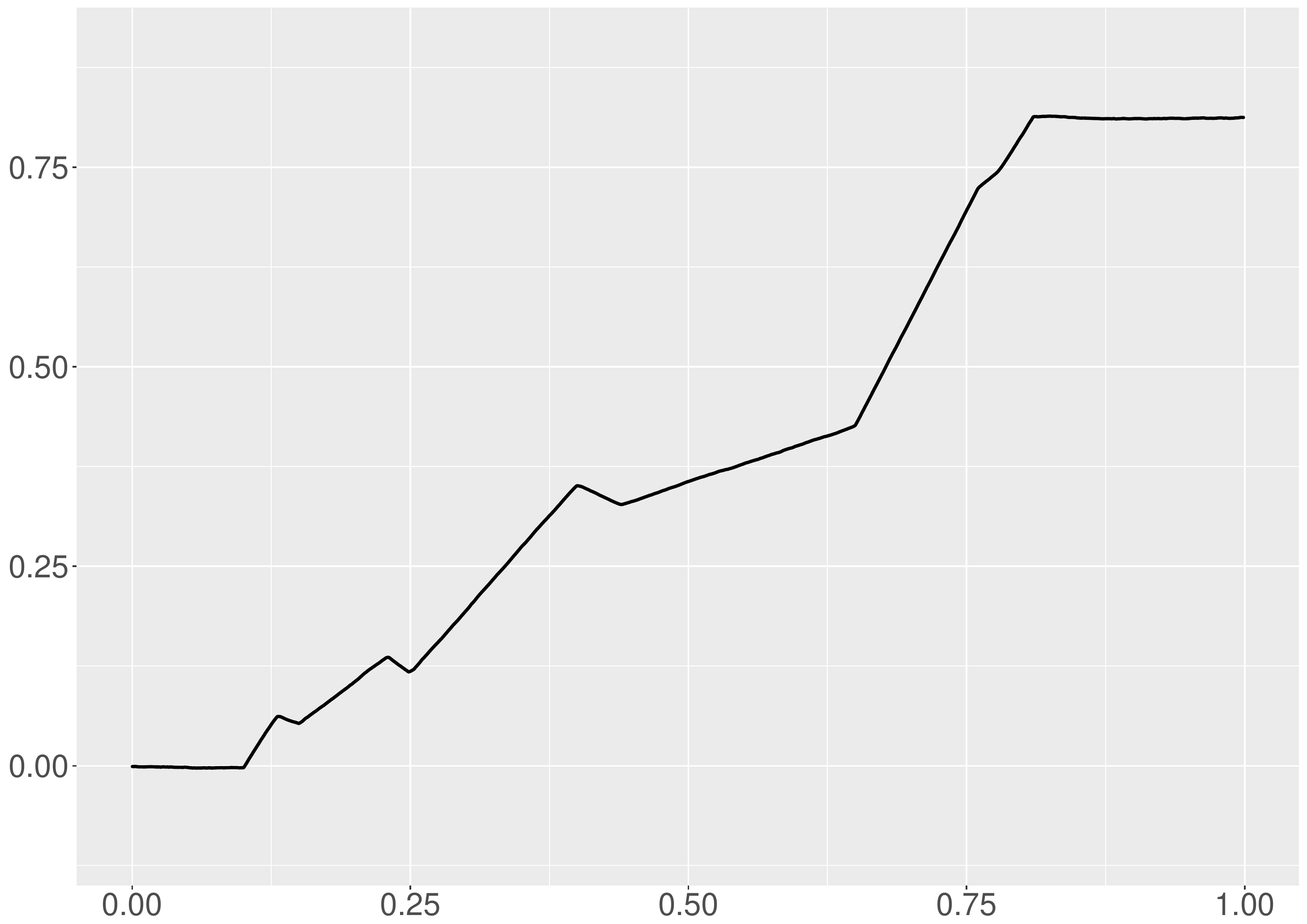}
    \caption{Reconstruction by adaptive V-spline\\\mbox{  } }
    \end{subfigure}
\caption{Numerical example: $\textit{Blocks}$. Comparison of different reconstruction methods with simulated data.}\label{num1}
 \end{figure}

\begin{figure}
    \centering
    \begin{subfigure}{0.45\textwidth}
    \centering
    \includegraphics[width=\linewidth,height=0.45\textwidth]{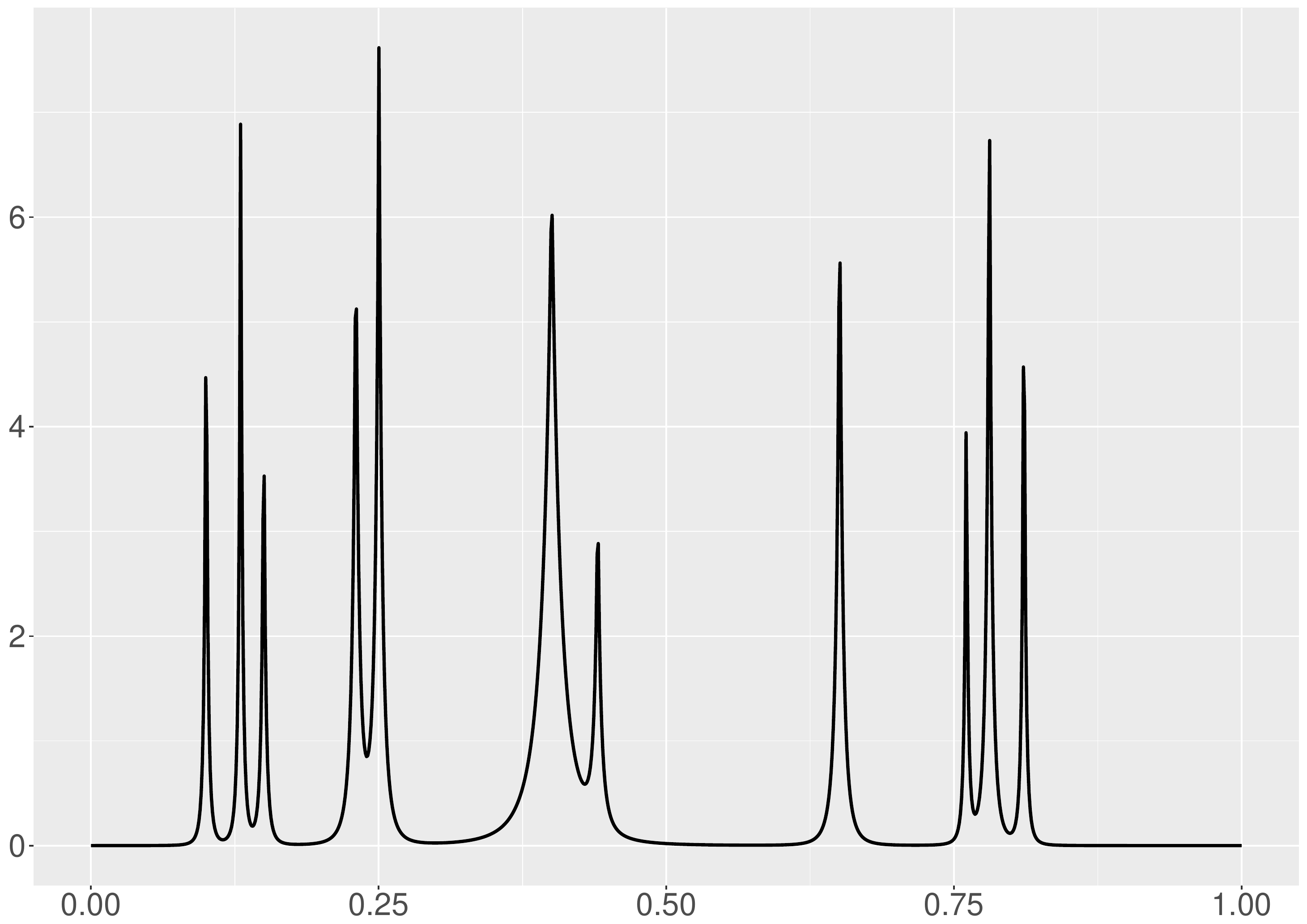}
    \caption{True \textit{Bumps} function}
    \end{subfigure}%
    \begin{subfigure}{0.45\textwidth}
    \centering
    \includegraphics[width=\linewidth,height=0.45\textwidth]{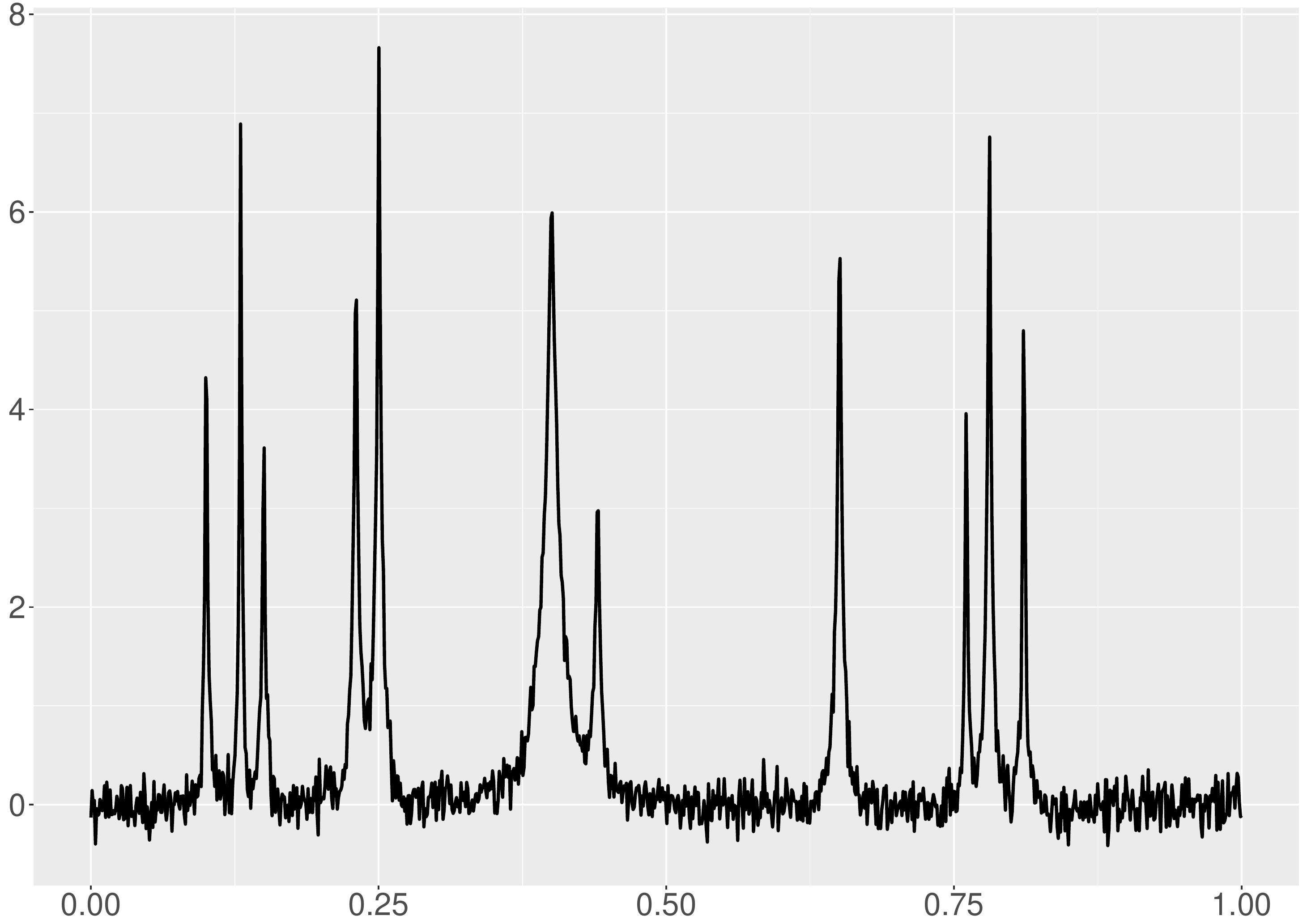}
    \caption{Noisy \textit{Bumps} at \textit{SNR}=7}
    \end{subfigure}
    \begin{subfigure}{0.45\textwidth}
    \centering
    \includegraphics[width=\linewidth,height=0.45\textwidth]{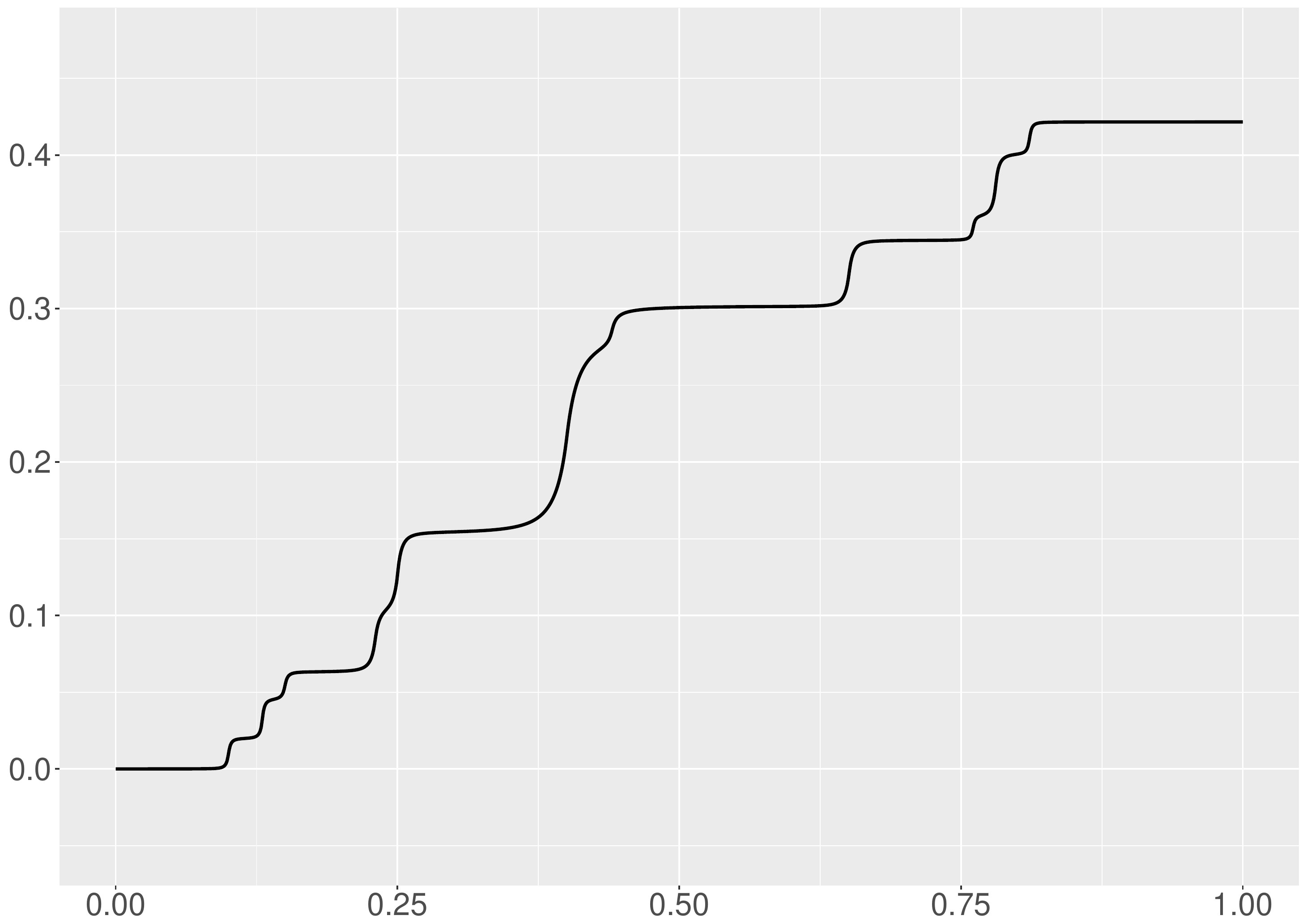}
    \caption{Generated positions}
    \end{subfigure}
    \begin{subfigure}{0.45\textwidth}
    \centering
    \includegraphics[width=\linewidth,height=0.45\textwidth]{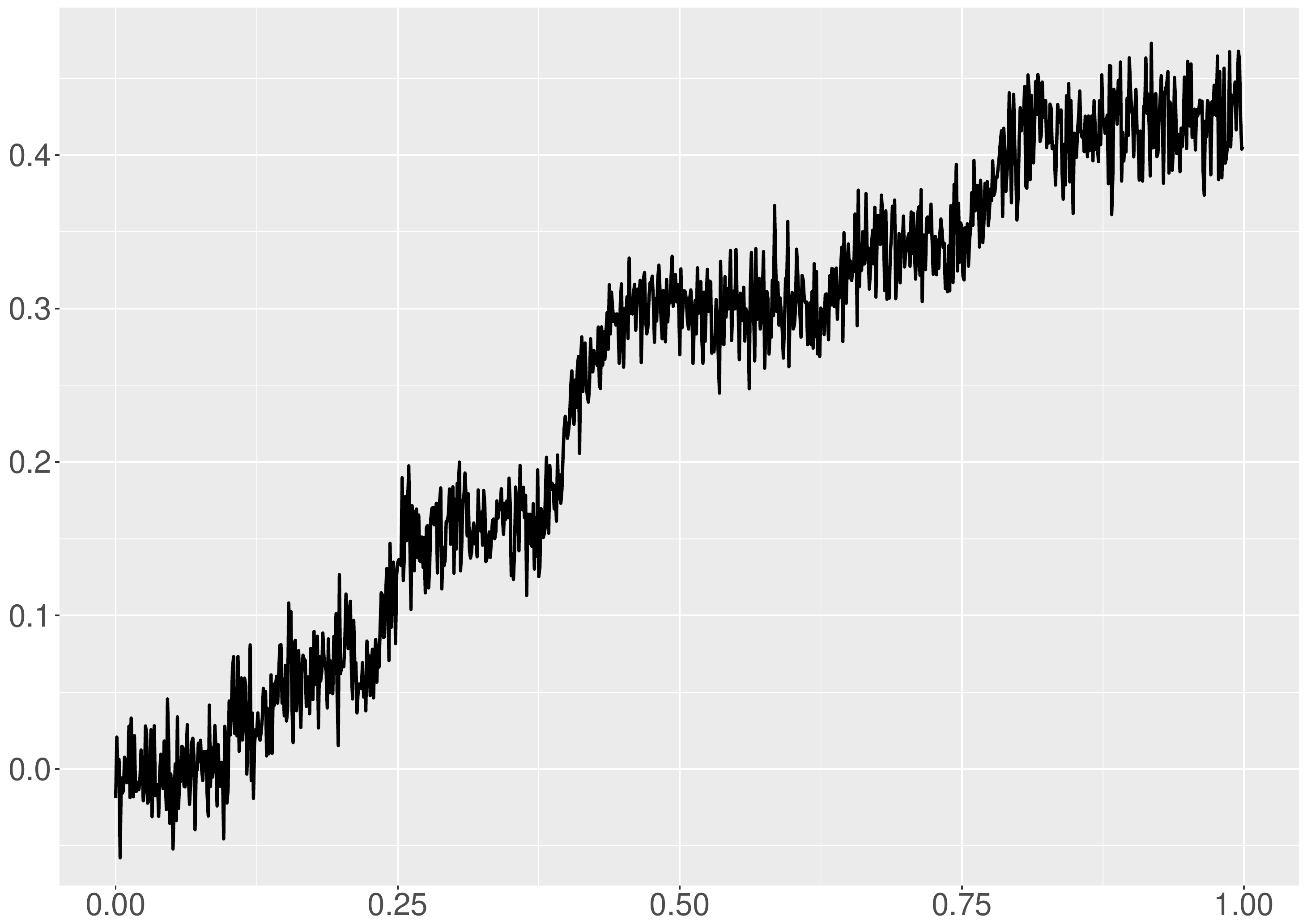}
    \caption{Noisy position at \textit{SNR}=7}
    \end{subfigure}
    \begin{subfigure}{0.45\textwidth}
    \centering
    \includegraphics[width=\linewidth,height=0.45\textwidth]{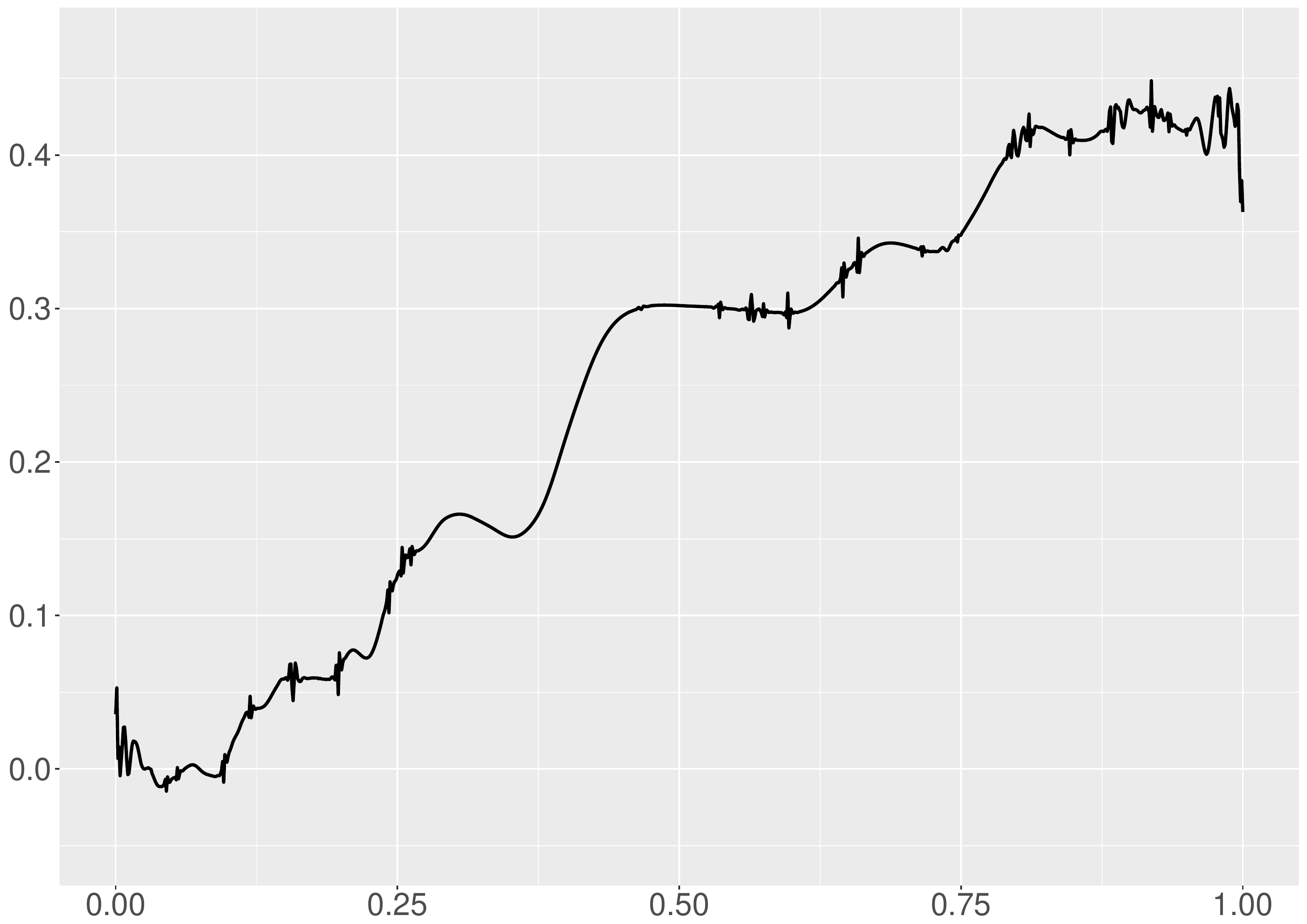}
    \caption{Reconstruction from Wavelet by sure threshold}
    \end{subfigure}
    \begin{subfigure}{0.45\textwidth}
    \centering
    \includegraphics[width=\linewidth,height=0.45\textwidth]{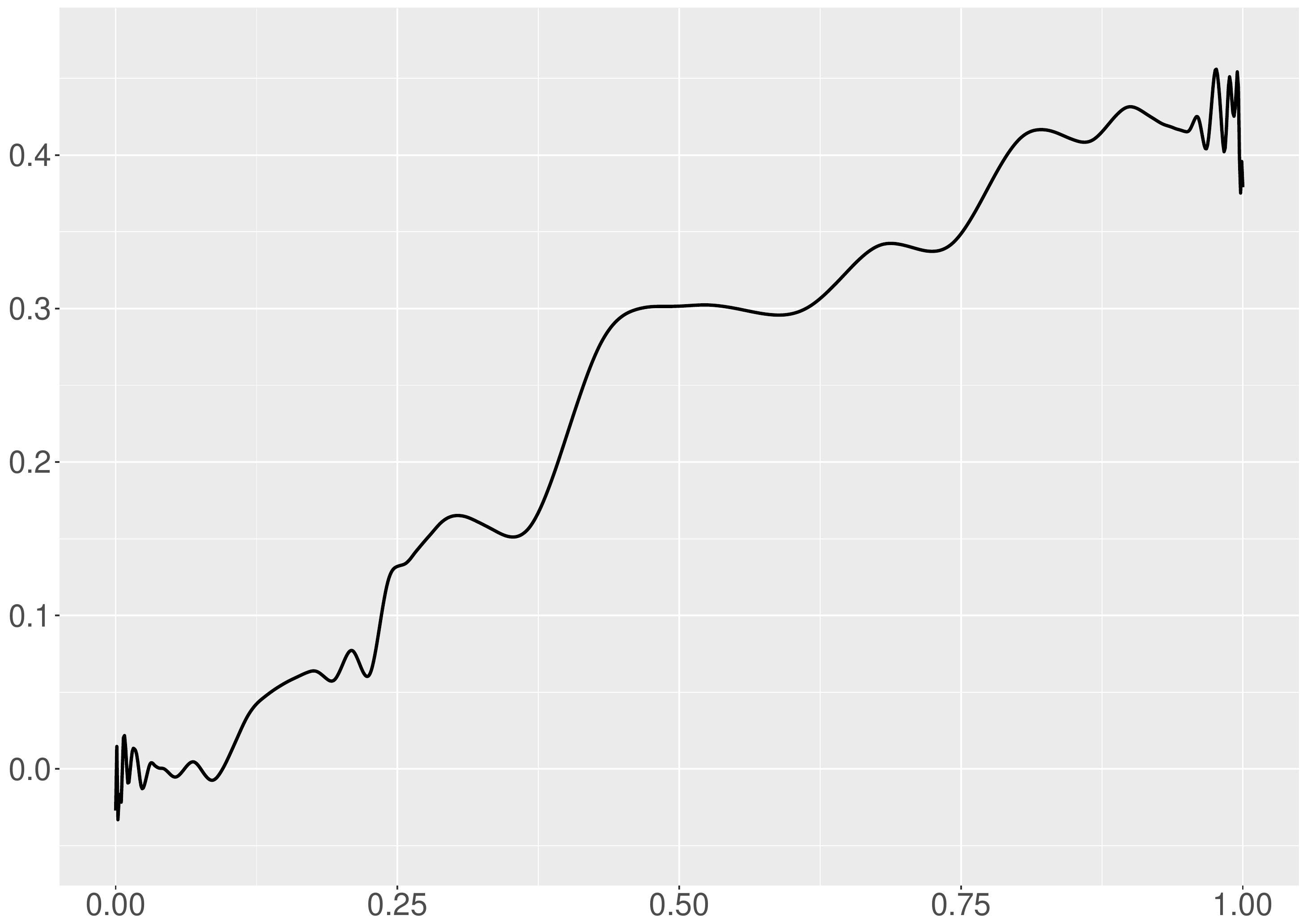}
    \caption{Reconstruction from Wavelet by BayesThresh approach}
    \end{subfigure}
    \begin{subfigure}{0.45\textwidth}
    \centering
    \includegraphics[width=\linewidth,height=0.45\textwidth]{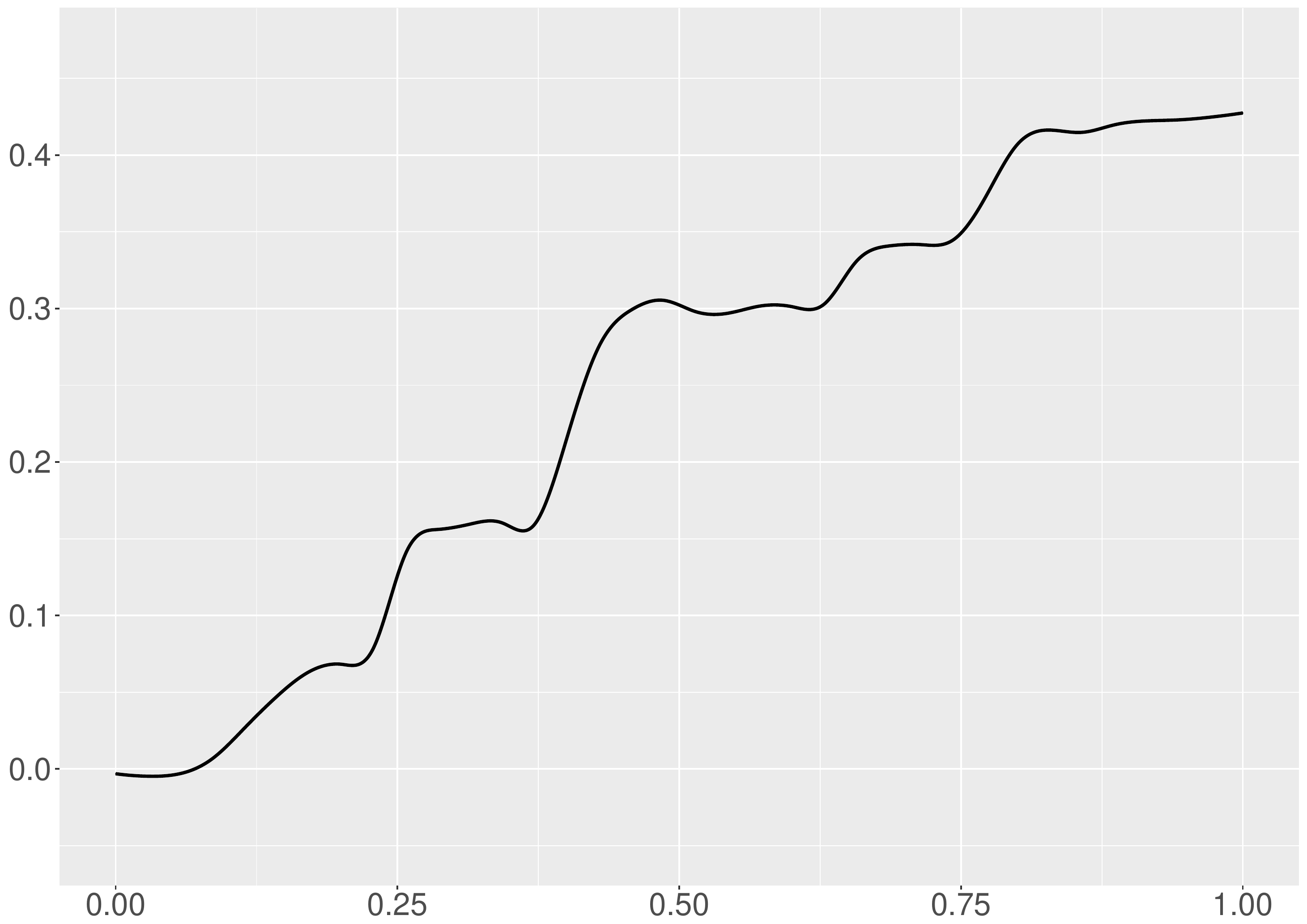}
    \caption{Reconstruction by P-spline }
    \end{subfigure}
    \begin{subfigure}{0.45\textwidth}
    \centering
    \includegraphics[width=\linewidth,height=0.45\textwidth]{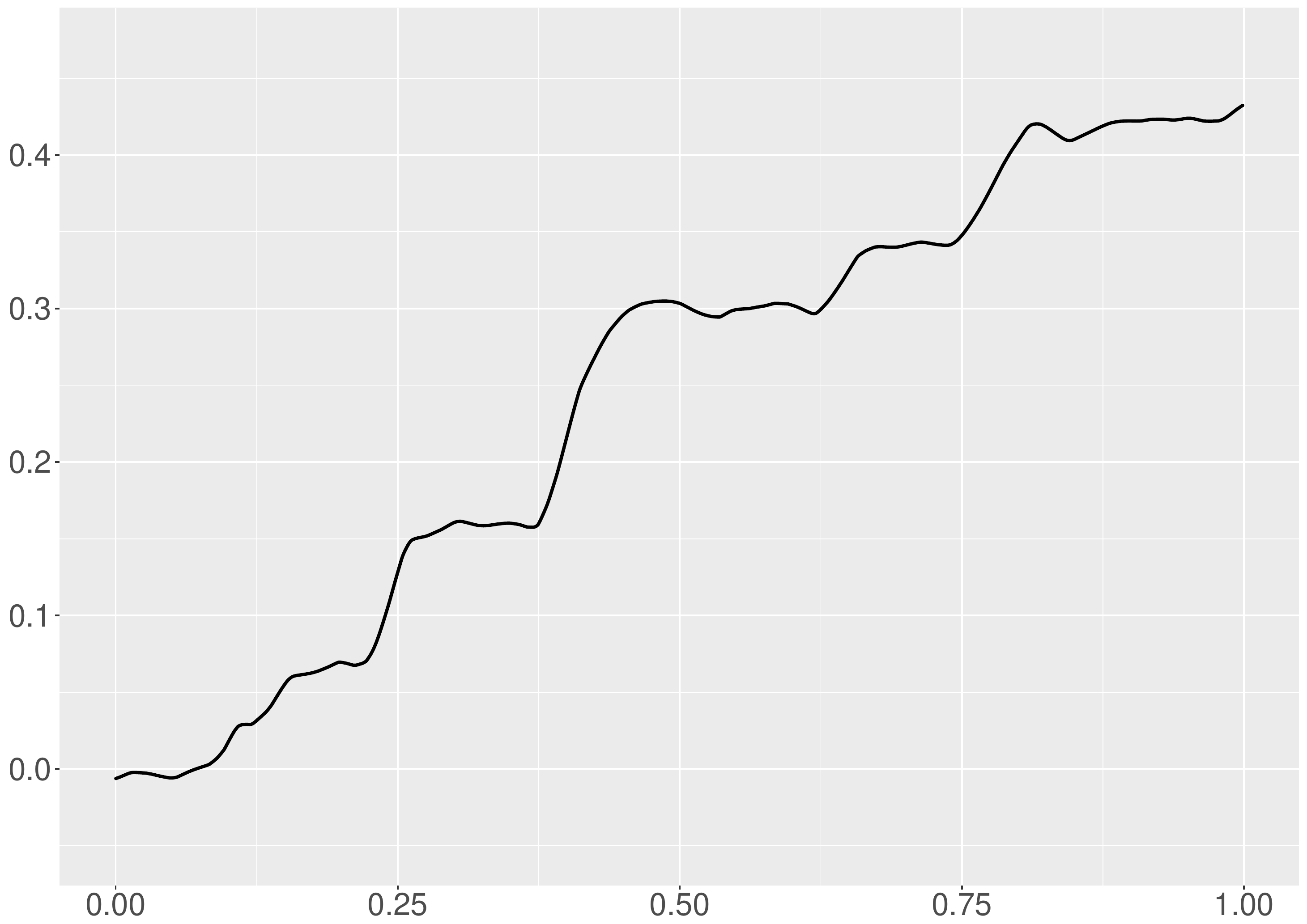}
    \caption{Reconstruction by adaptive V-spline , $\gamma=0$}
    \end{subfigure}
  \begin{subfigure}[t]{0.45\textwidth}
    \centering
    \includegraphics[width=\linewidth,height=0.45\textwidth]{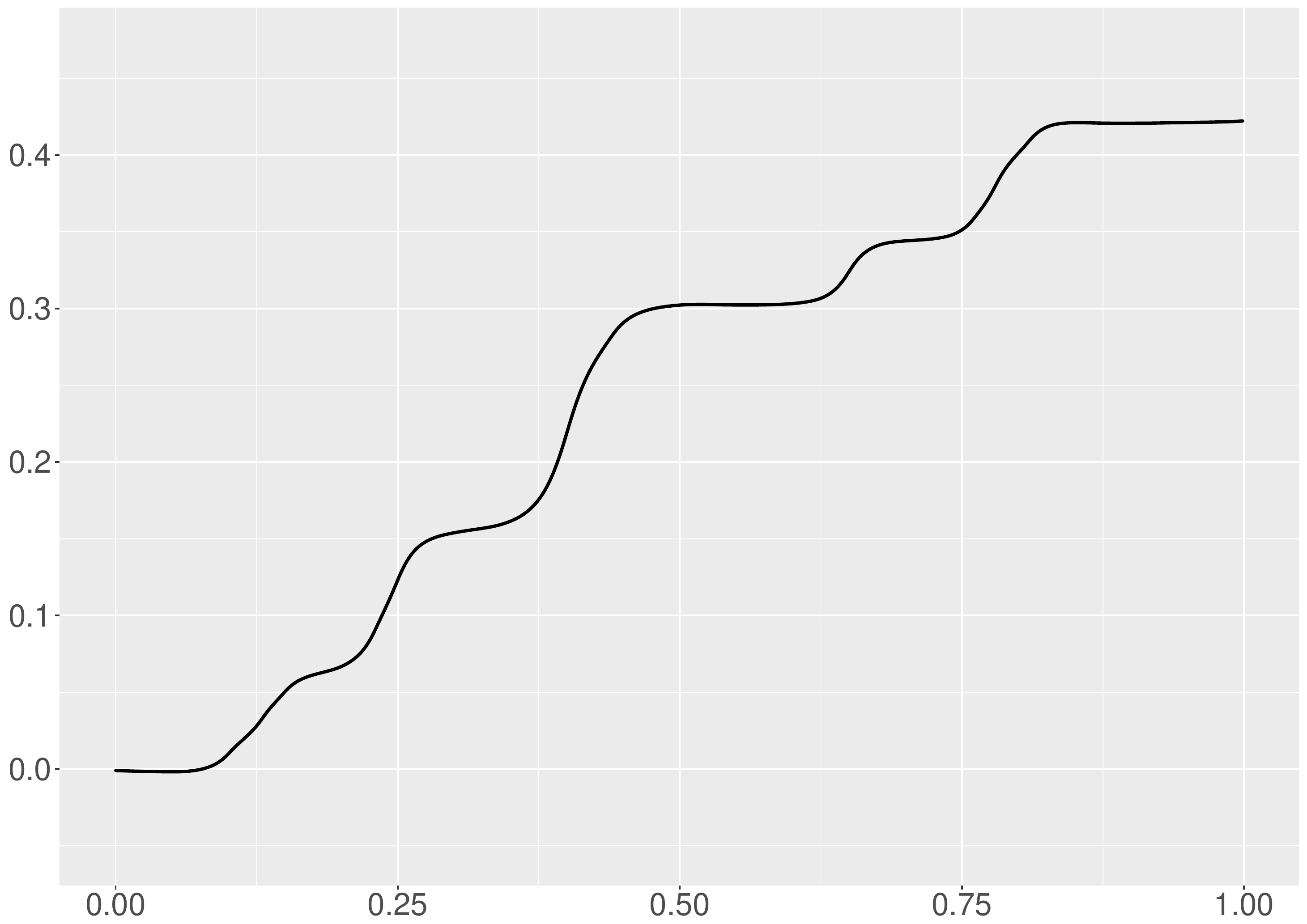}
    \caption{Reconstruction by non-adaptive V-spline}
    \end{subfigure}
    \begin{subfigure}[t]{0.45\textwidth}
    \centering
    \includegraphics[width=\linewidth,height=0.45\textwidth]{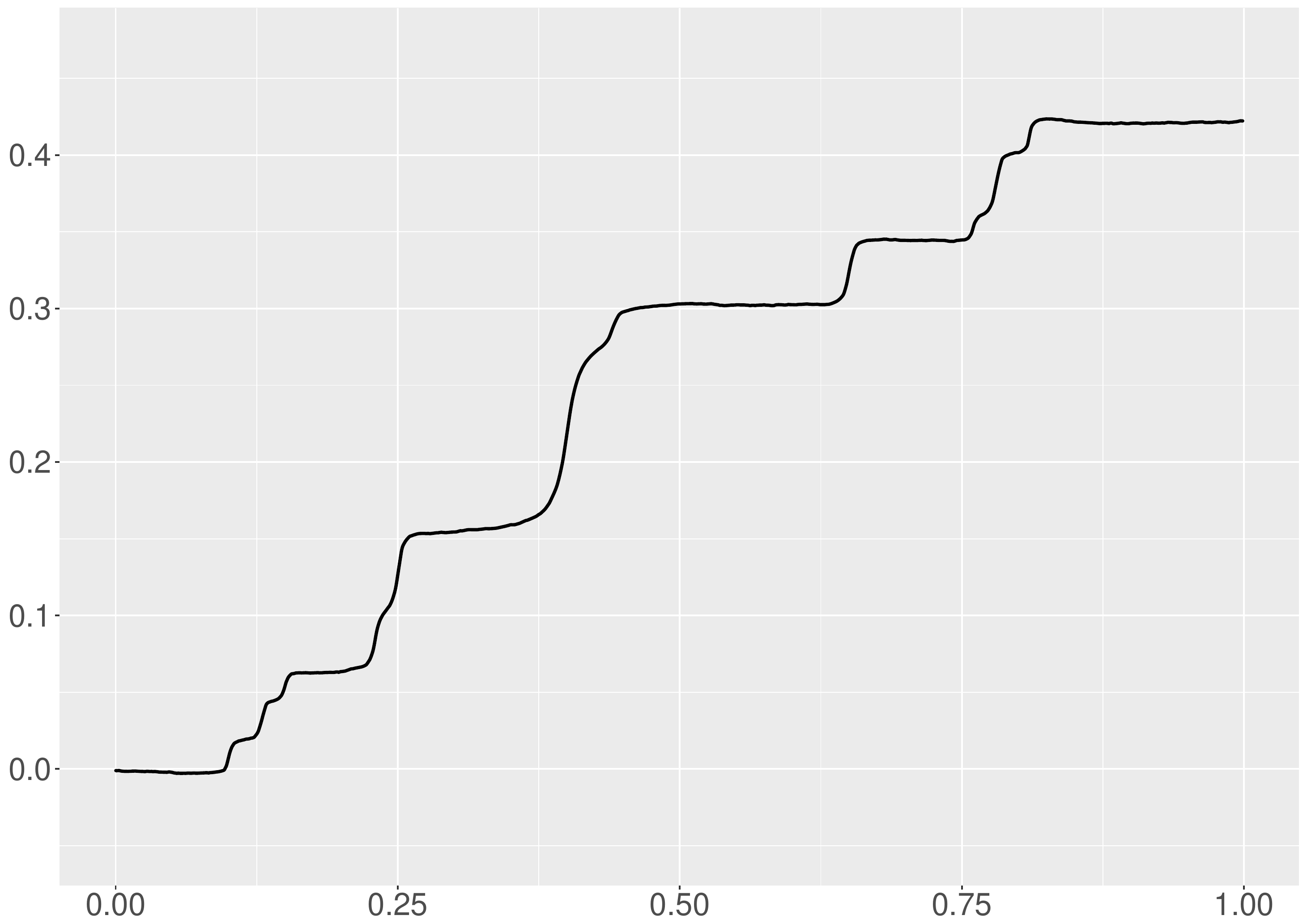}
    \caption{Reconstruction by adaptive V-spline \\\mbox{  } }
    \end{subfigure}
\caption{Numerical example: $\textit{Bumps}$. Comparison of different reconstruction methods with simulated data.}\label{num2}
 \end{figure}

\begin{figure}
    \centering
    \begin{subfigure}{0.45\textwidth}
    \centering
    \includegraphics[width=\linewidth,height=0.45\textwidth]{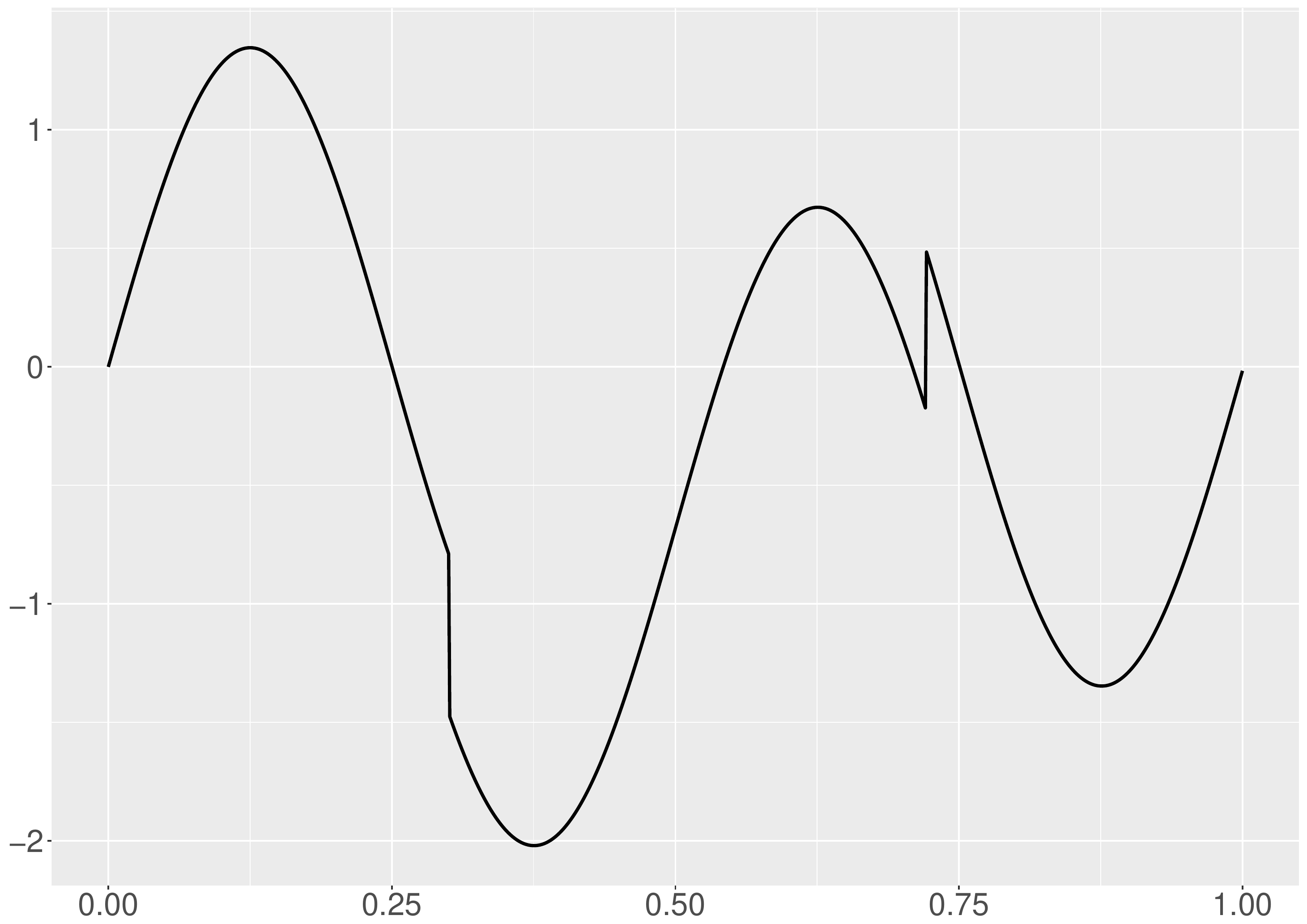}
    \caption{True \textit{HeaviSine} function}
    \end{subfigure}%
    \begin{subfigure}{0.45\textwidth}
    \centering
    \includegraphics[width=\linewidth,height=0.45\textwidth]{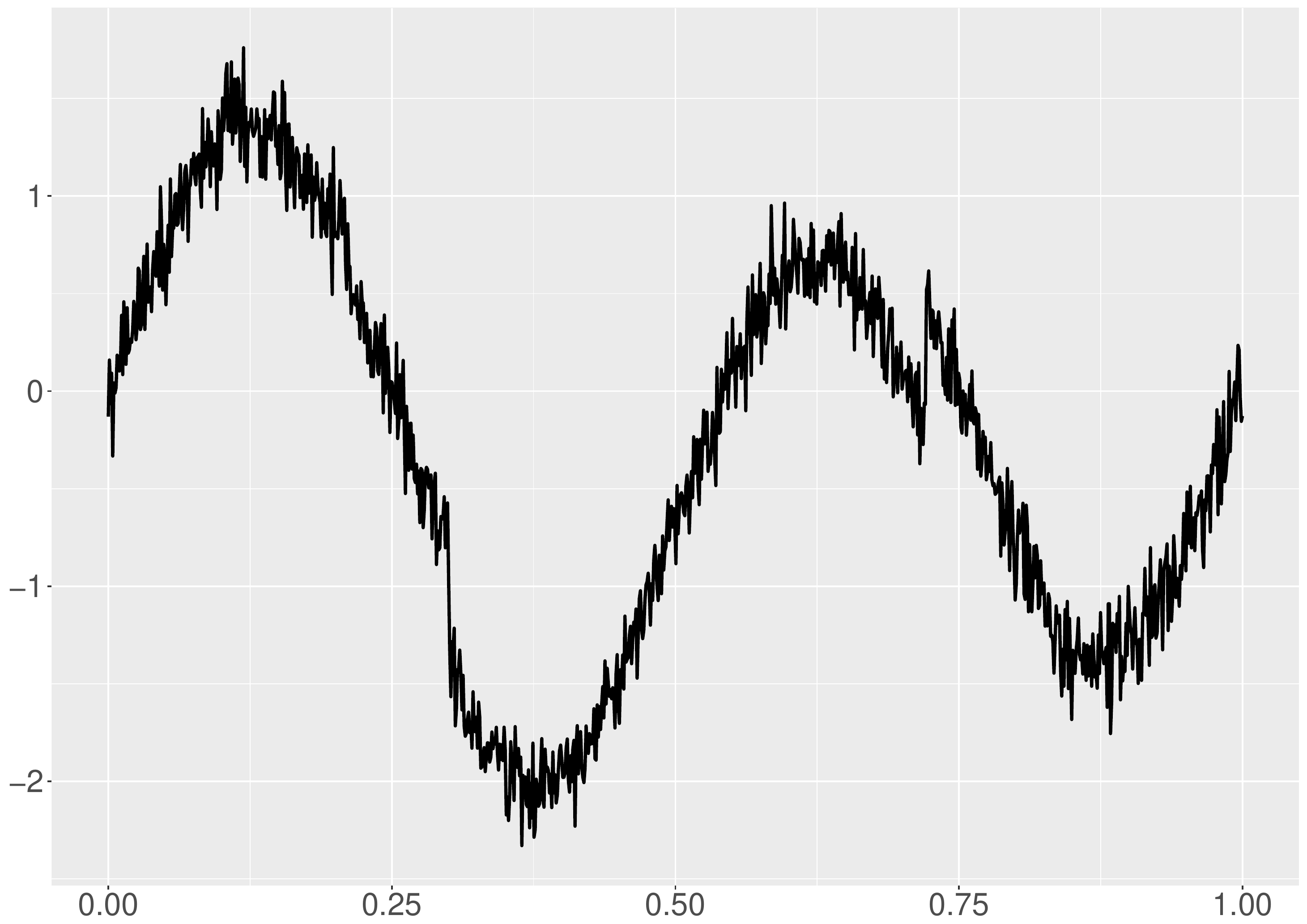}
    \caption{Noisy \textit{HeaviSine} at \textit{SNR}=7}
    \end{subfigure}
    \begin{subfigure}{0.45\textwidth}
    \centering
    \includegraphics[width=\linewidth,height=0.45\textwidth]{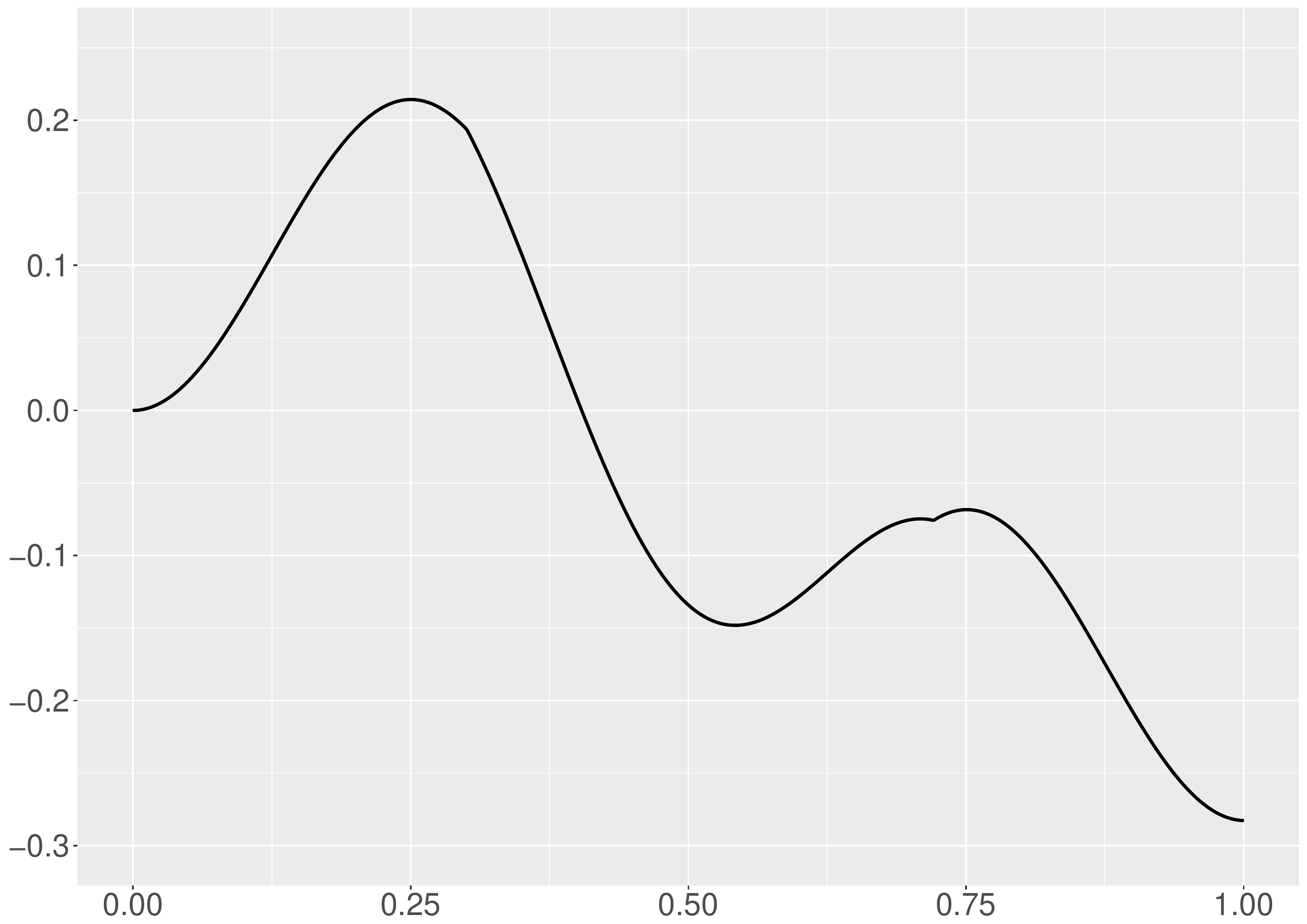}
    \caption{Generated positions}
    \end{subfigure}
    \begin{subfigure}{0.45\textwidth}
    \centering
    \includegraphics[width=\linewidth,height=0.45\textwidth]{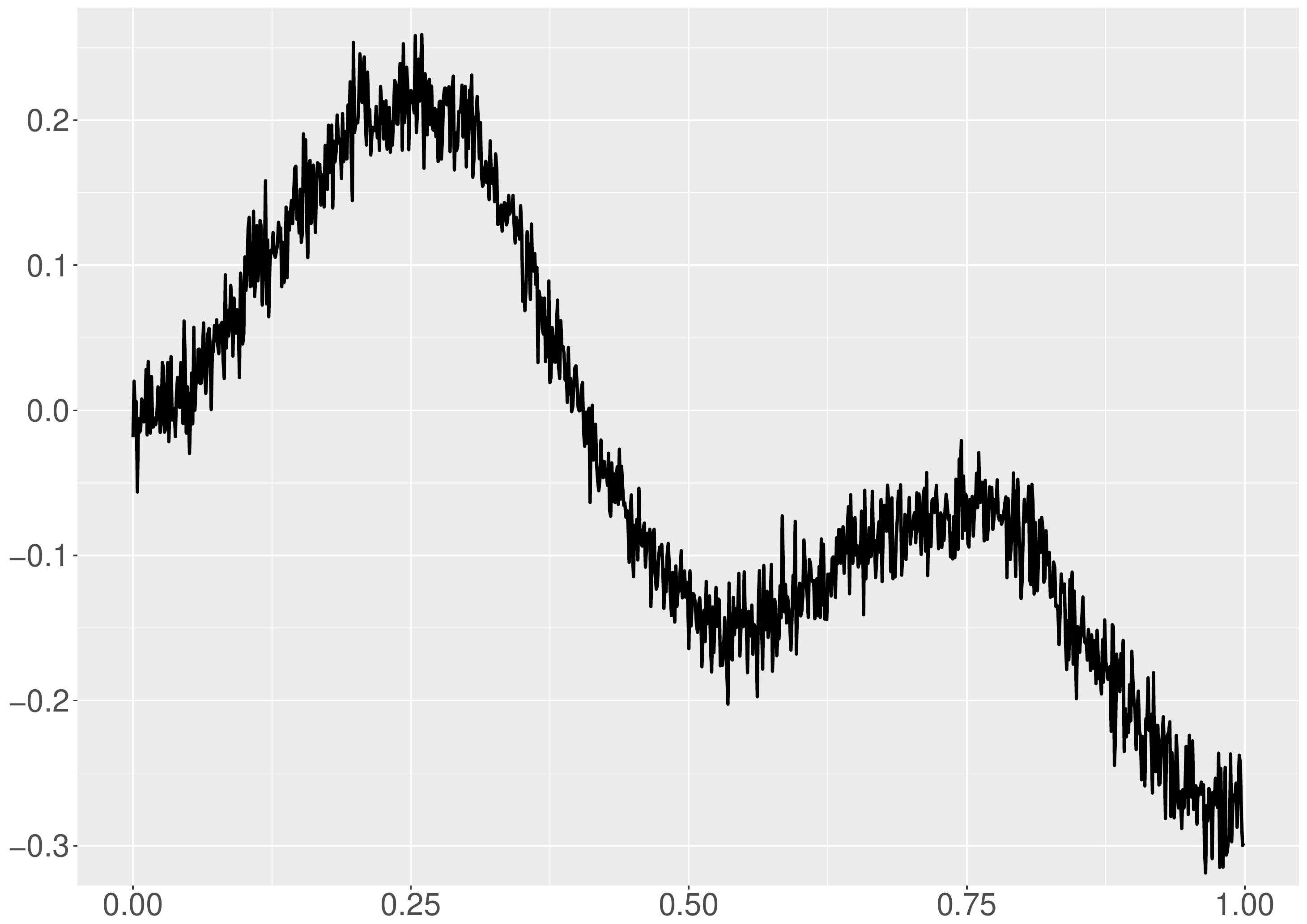}
    \caption{Noisy position at \textit{SNR}=7}
    \end{subfigure}
    \begin{subfigure}{0.45\textwidth}
    \centering
    \includegraphics[width=\linewidth,height=0.45\textwidth]{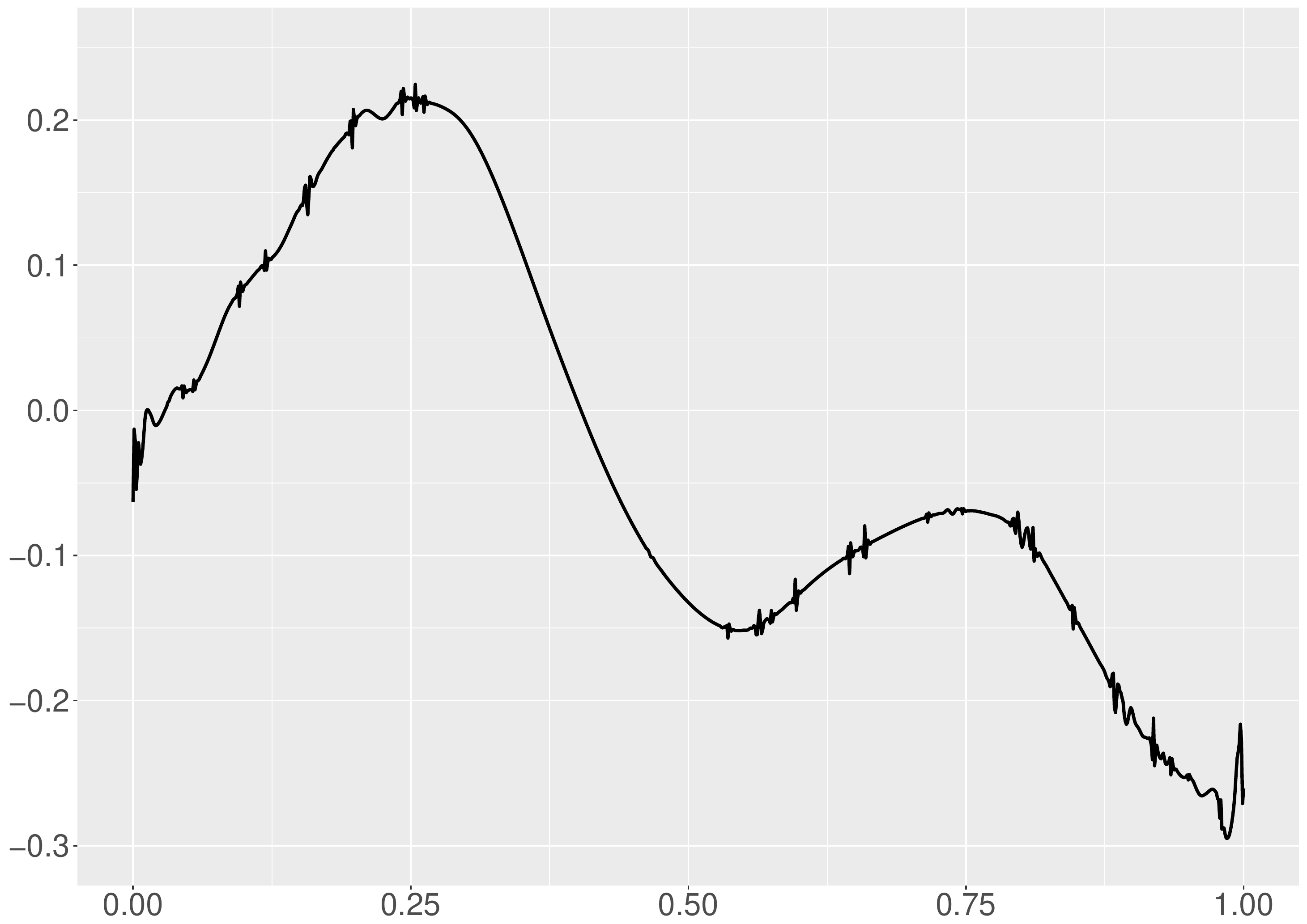}
    \caption{Reconstruction from Wavelet by sure threshold}
    \end{subfigure}
    \begin{subfigure}{0.45\textwidth}
    \centering
    \includegraphics[width=\linewidth,height=0.45\textwidth]{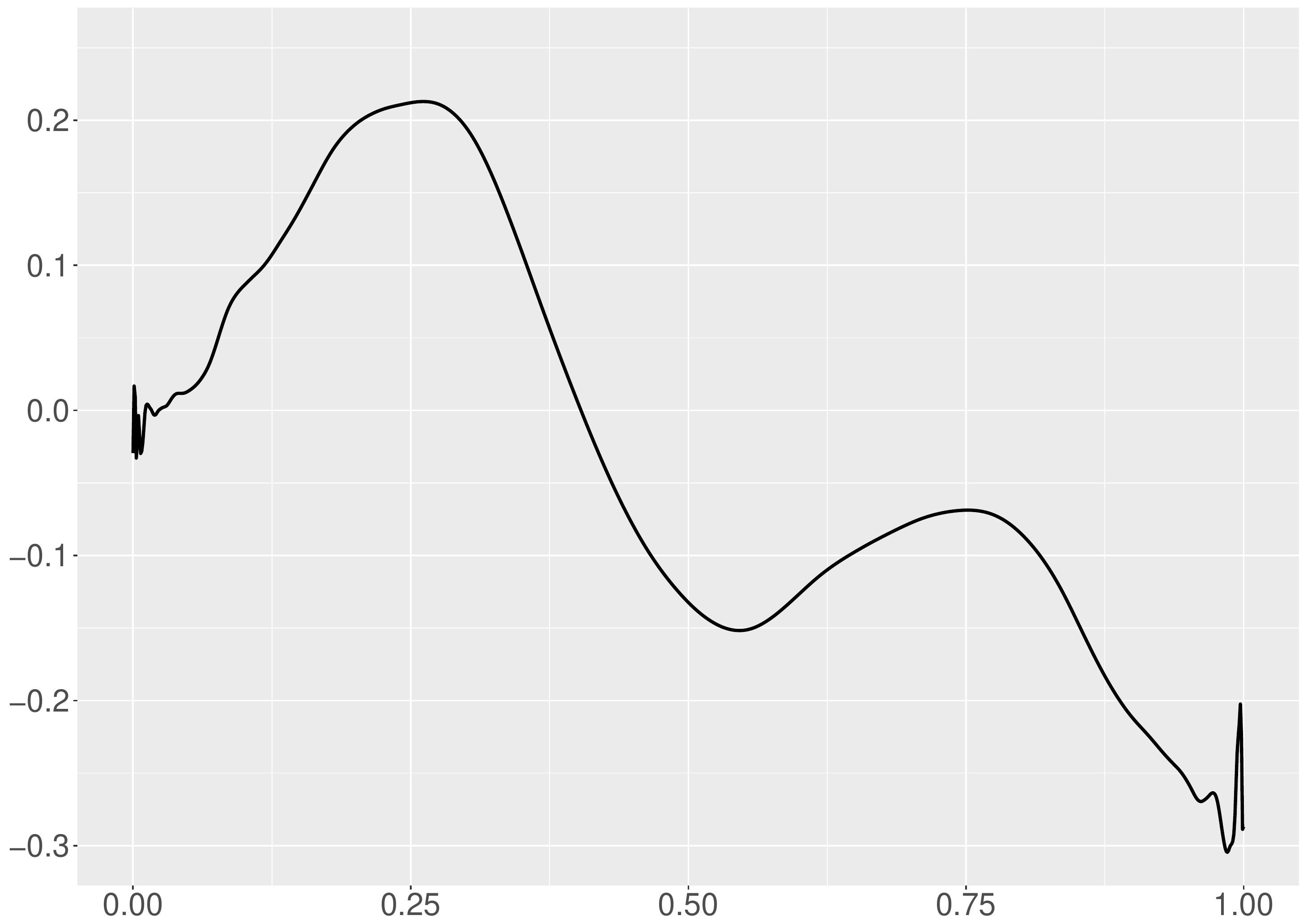}
    \caption{Reconstruction from Wavelet by BayesThresh approach}
    \end{subfigure}
    \begin{subfigure}{0.45\textwidth}
    \centering
    \includegraphics[width=\linewidth,height=0.45\textwidth]{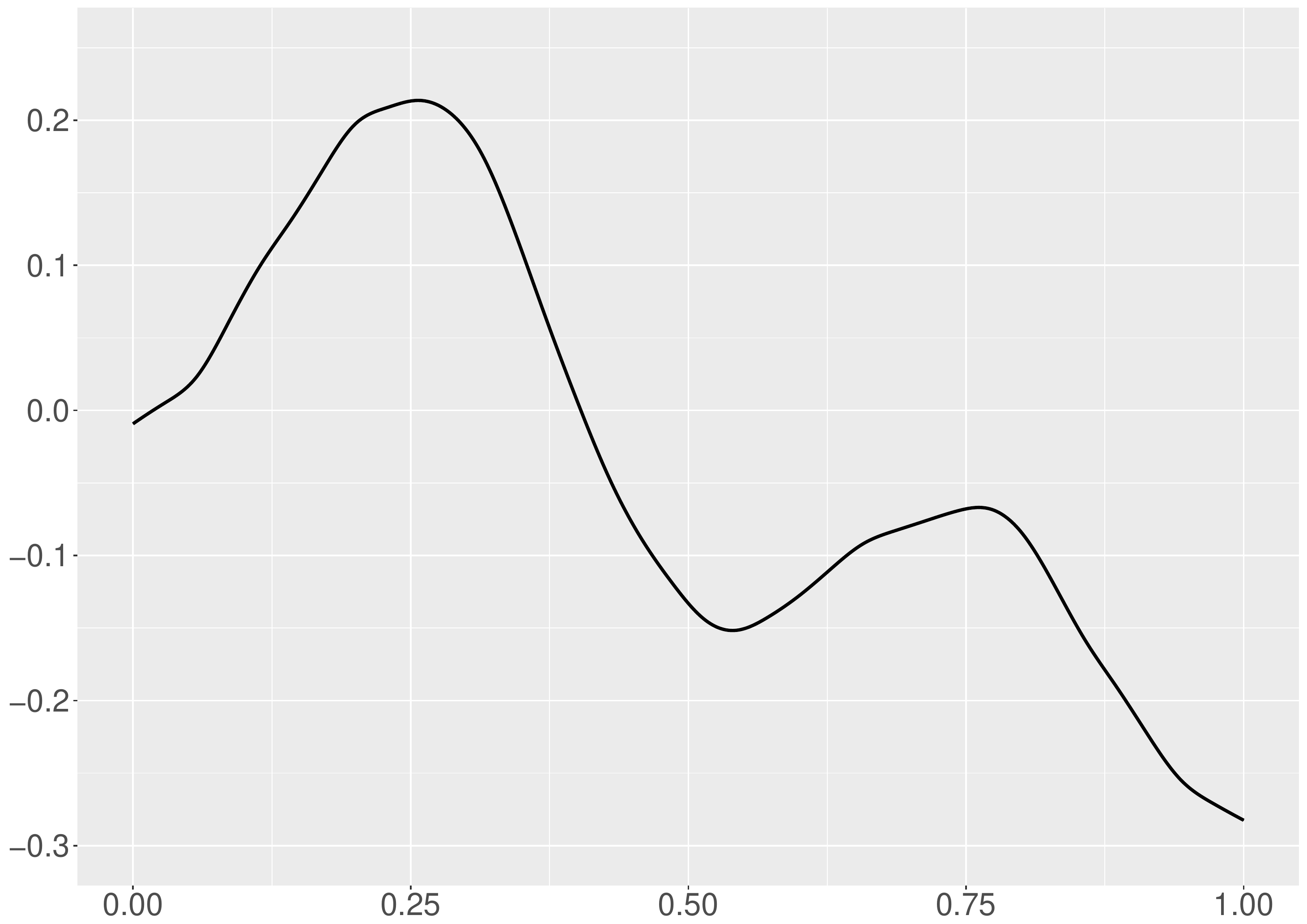}
    \caption{Reconstruction by P-spline}
    \end{subfigure}
    \begin{subfigure}{0.45\textwidth}
    \centering
    \includegraphics[width=\linewidth,height=0.45\textwidth]{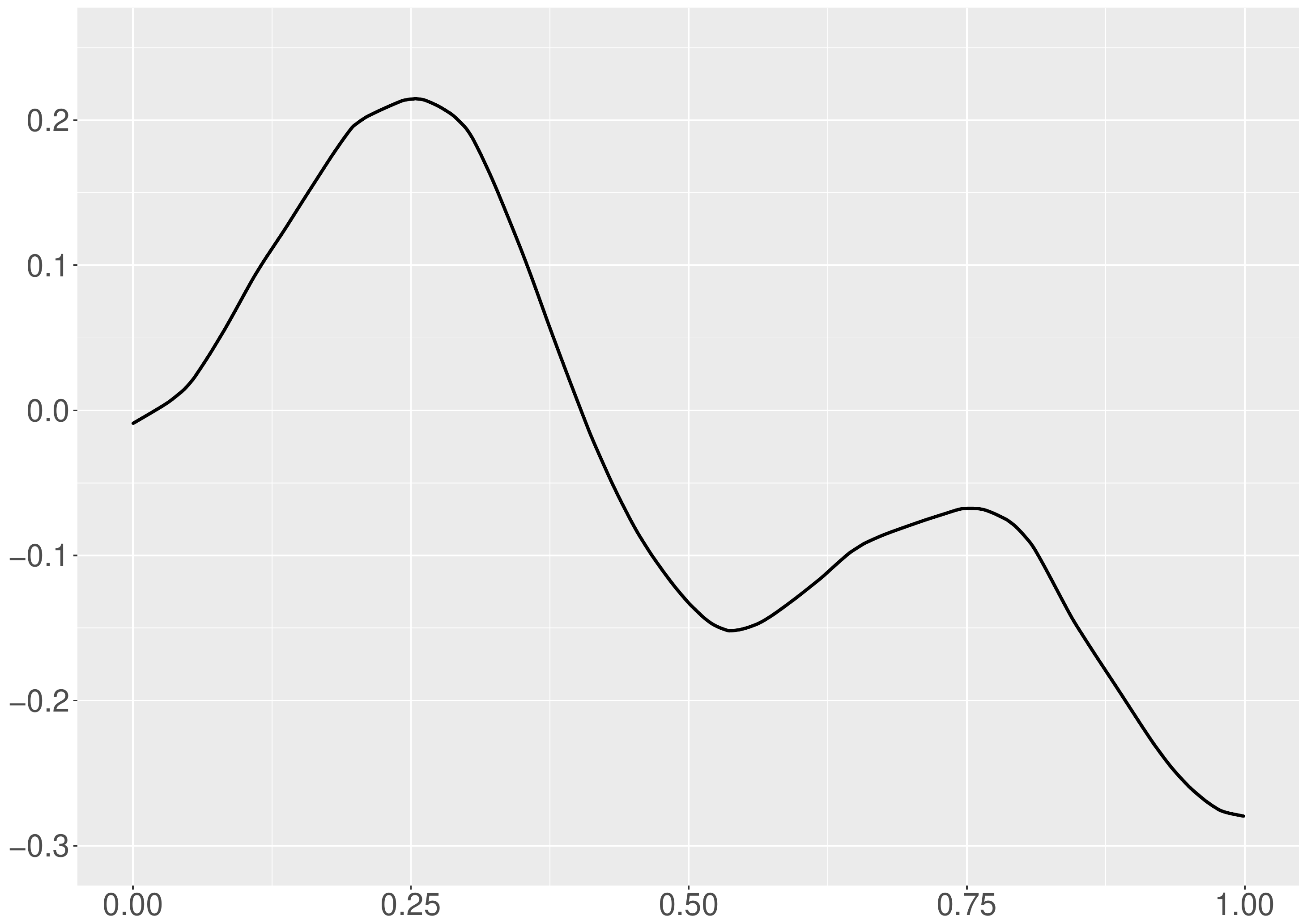}
    \caption{Reconstruction by adaptive V-spline , $\gamma=0$}
    \end{subfigure}
  \begin{subfigure}[t]{0.45\textwidth}
    \centering
    \includegraphics[width=\linewidth,height=0.45\textwidth]{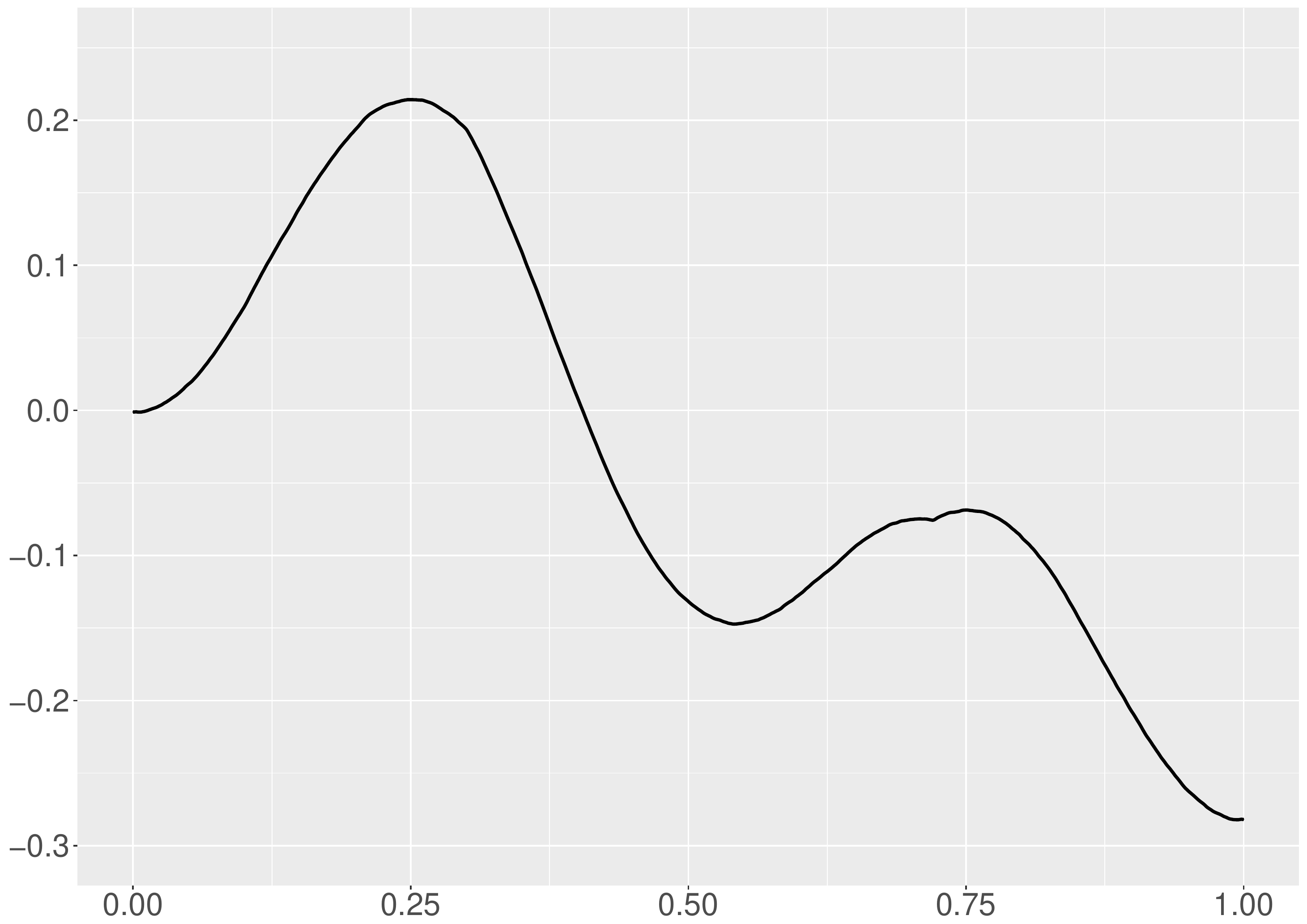}
    \caption{Reconstruction by non-adaptive V-spline}
    \end{subfigure}
    \begin{subfigure}[t]{0.45\textwidth}
    \centering
    \includegraphics[width=\linewidth,height=0.45\textwidth]{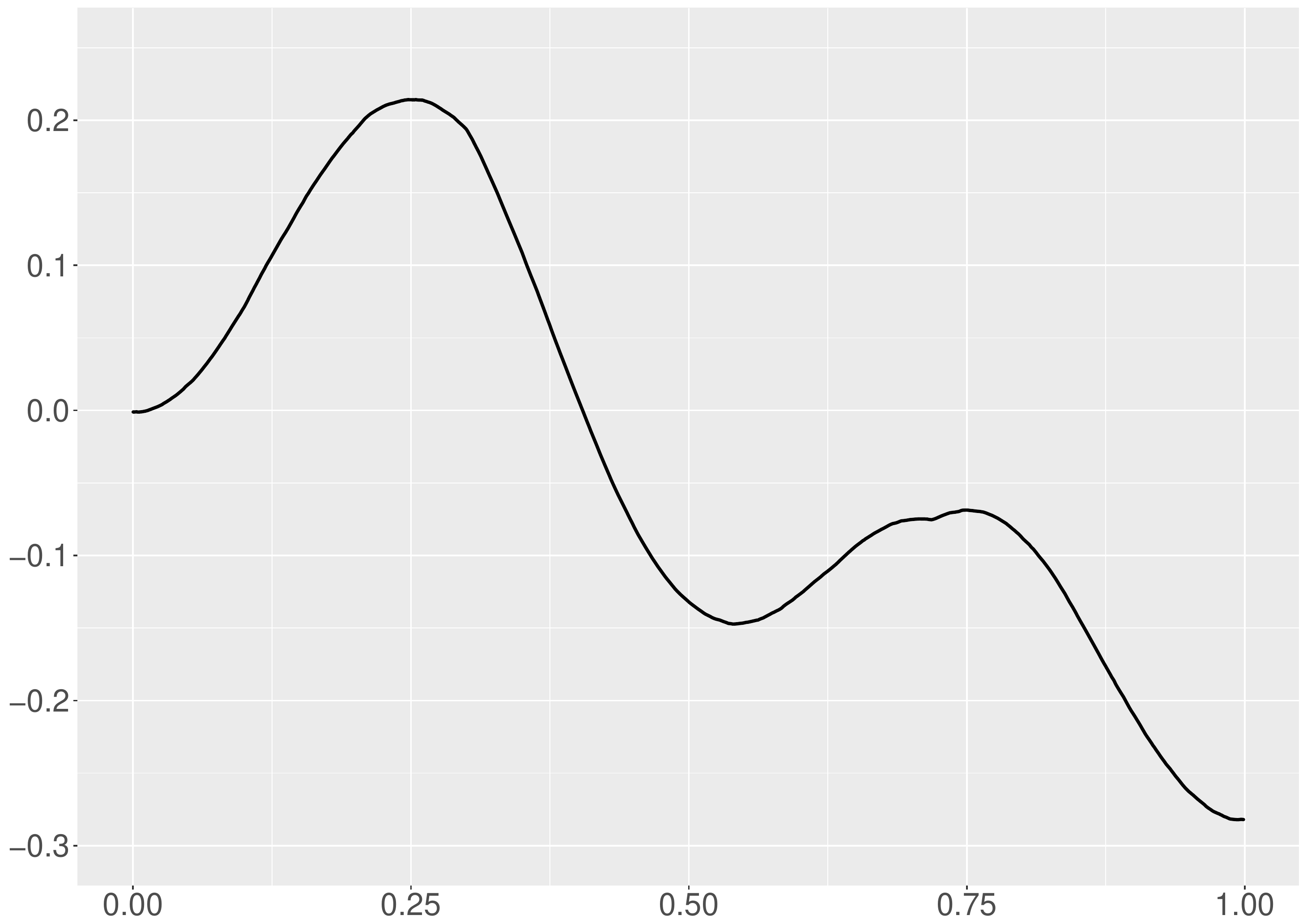}
    \caption{Reconstruction by adaptive V-spline \\ \mbox{  }}
    \end{subfigure}
\caption{Numerical example: $\textit{HeaviSine}$. Comparison of different reconstruction methods with simulated data.}\label{num3}
 \end{figure}

\begin{figure}
    \centering
    \begin{subfigure}{0.45\textwidth}
    \centering
    \includegraphics[width=\linewidth,height=0.45\textwidth]{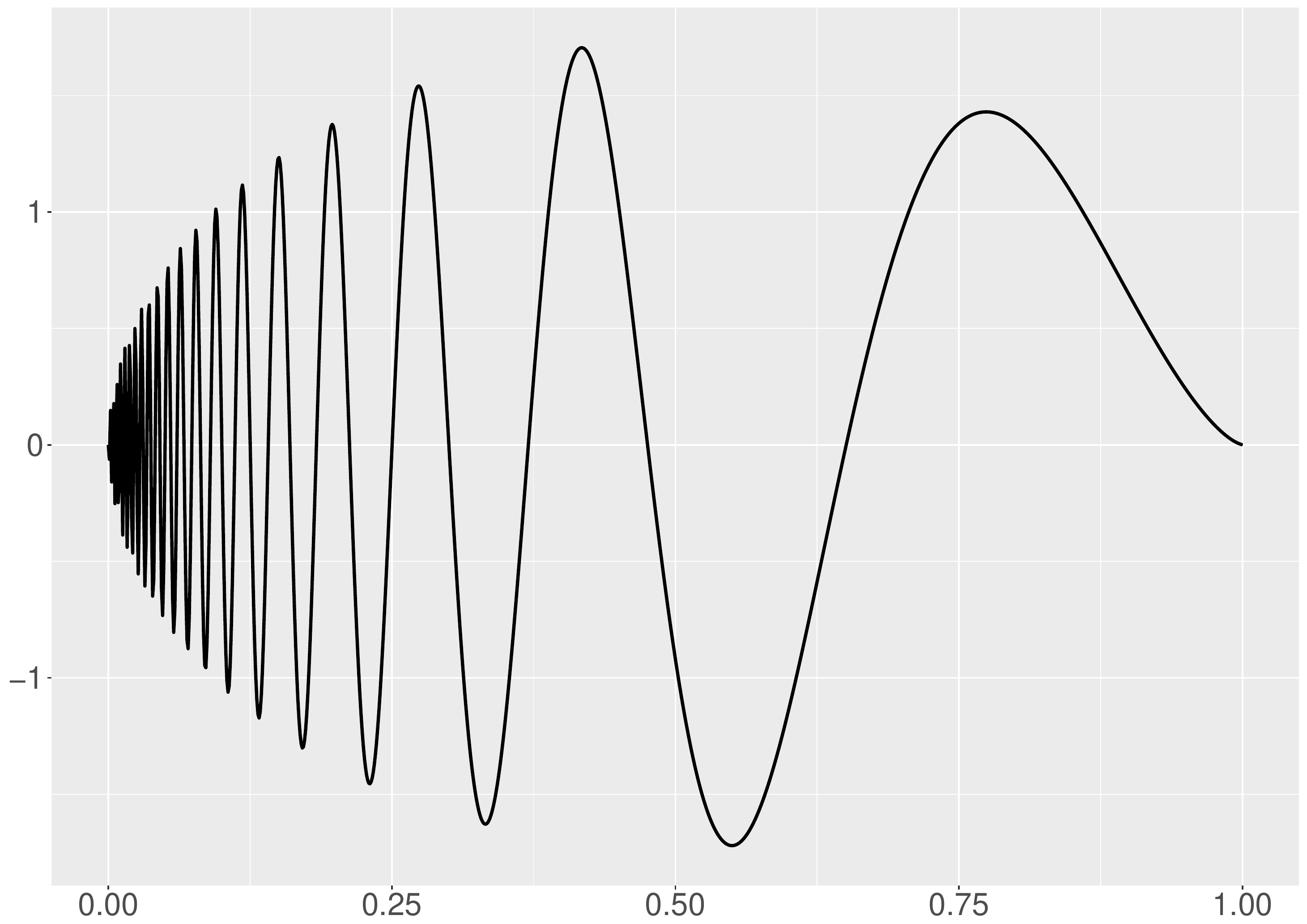}
    \caption{True \textit{Doppler} function}
    \end{subfigure}%
    \begin{subfigure}{0.45\textwidth}
    \centering
    \includegraphics[width=\linewidth,height=0.45\textwidth]{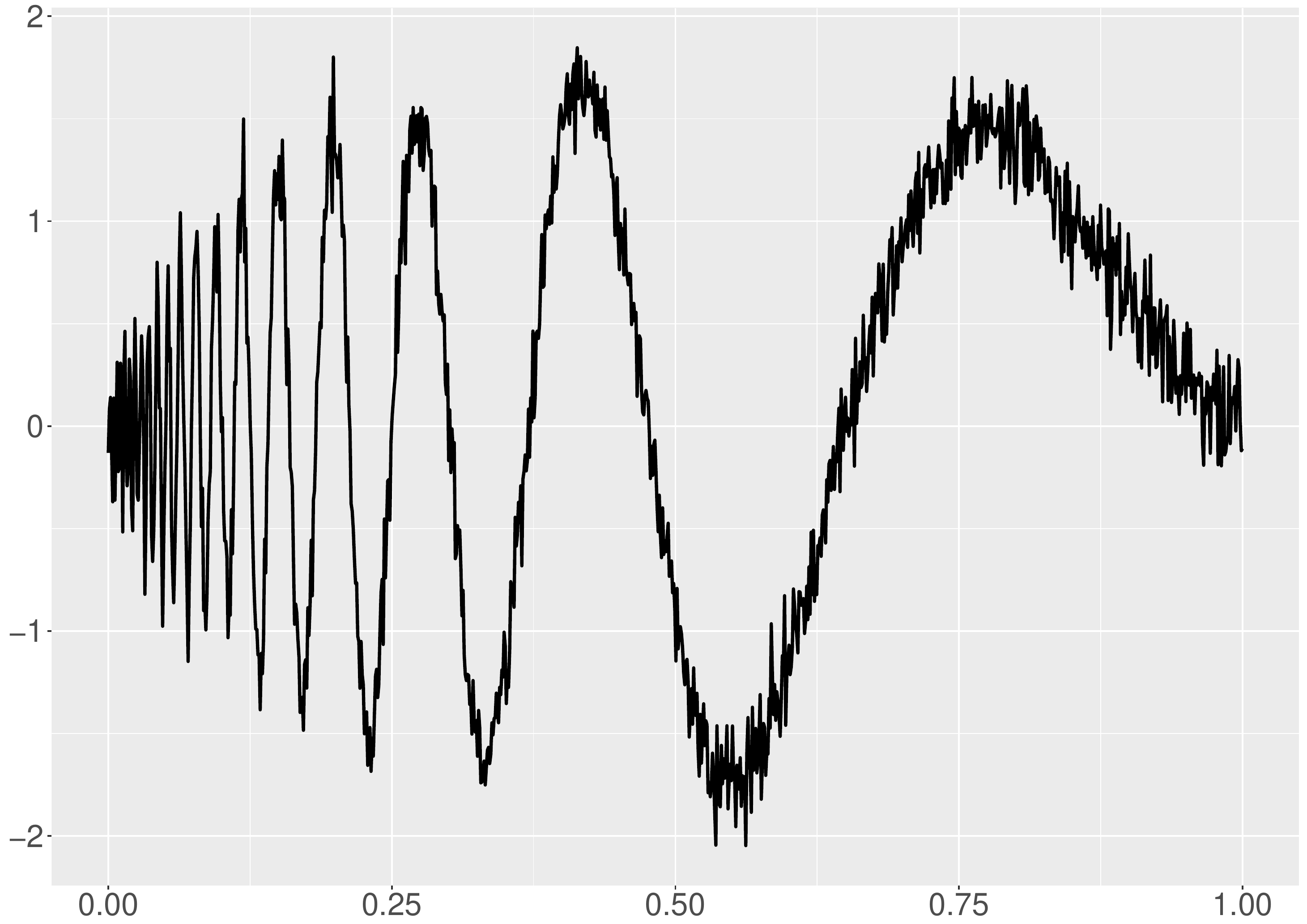}
    \caption{Noisy \textit{Doppler} at \textit{SNR}=7}
    \end{subfigure}
    \begin{subfigure}{0.45\textwidth}
    \centering
    \includegraphics[width=\linewidth,height=0.45\textwidth]{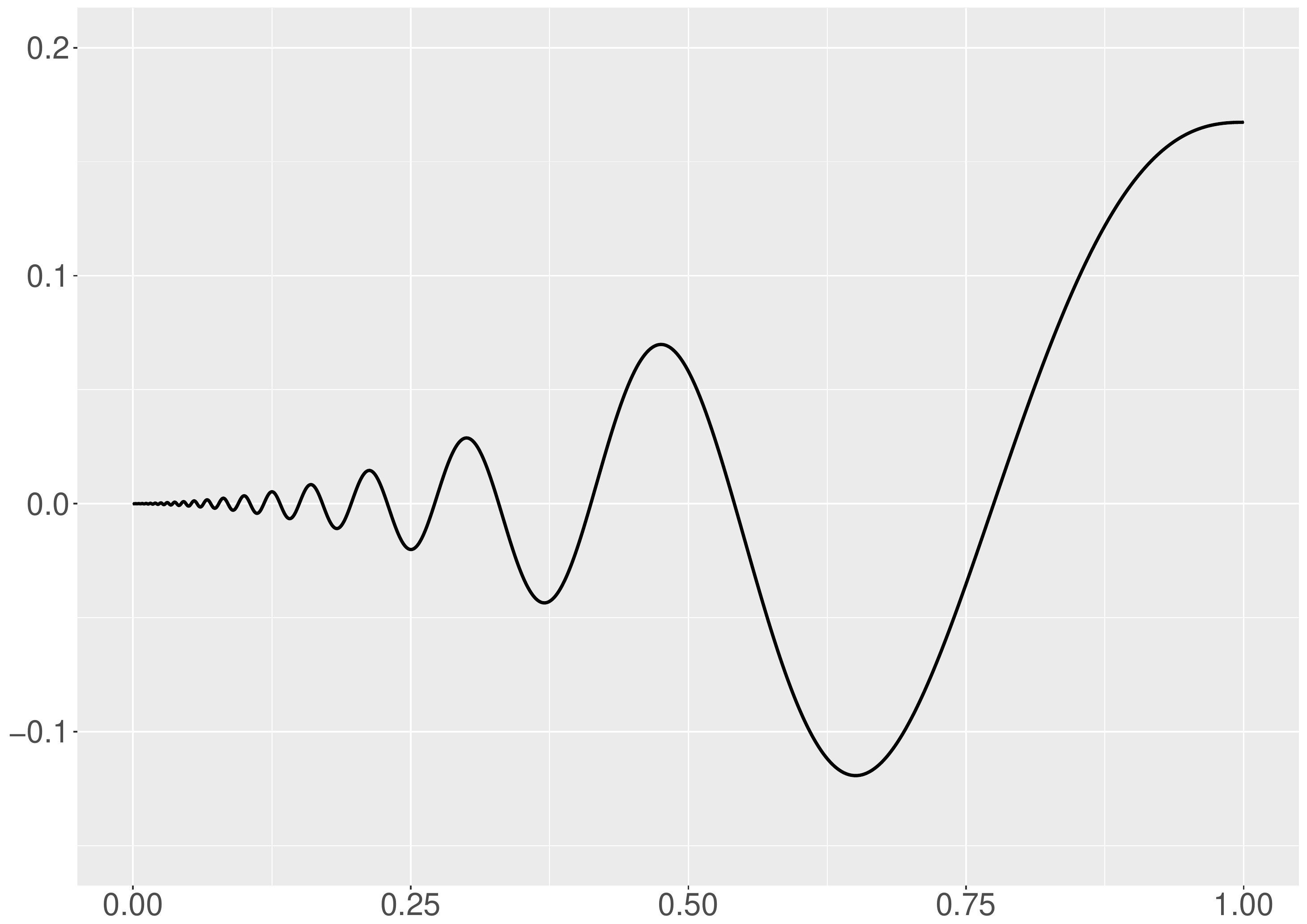}
    \caption{Generated positions}
    \end{subfigure}
    \begin{subfigure}{0.45\textwidth}
    \centering
    \includegraphics[width=\linewidth,height=0.45\textwidth]{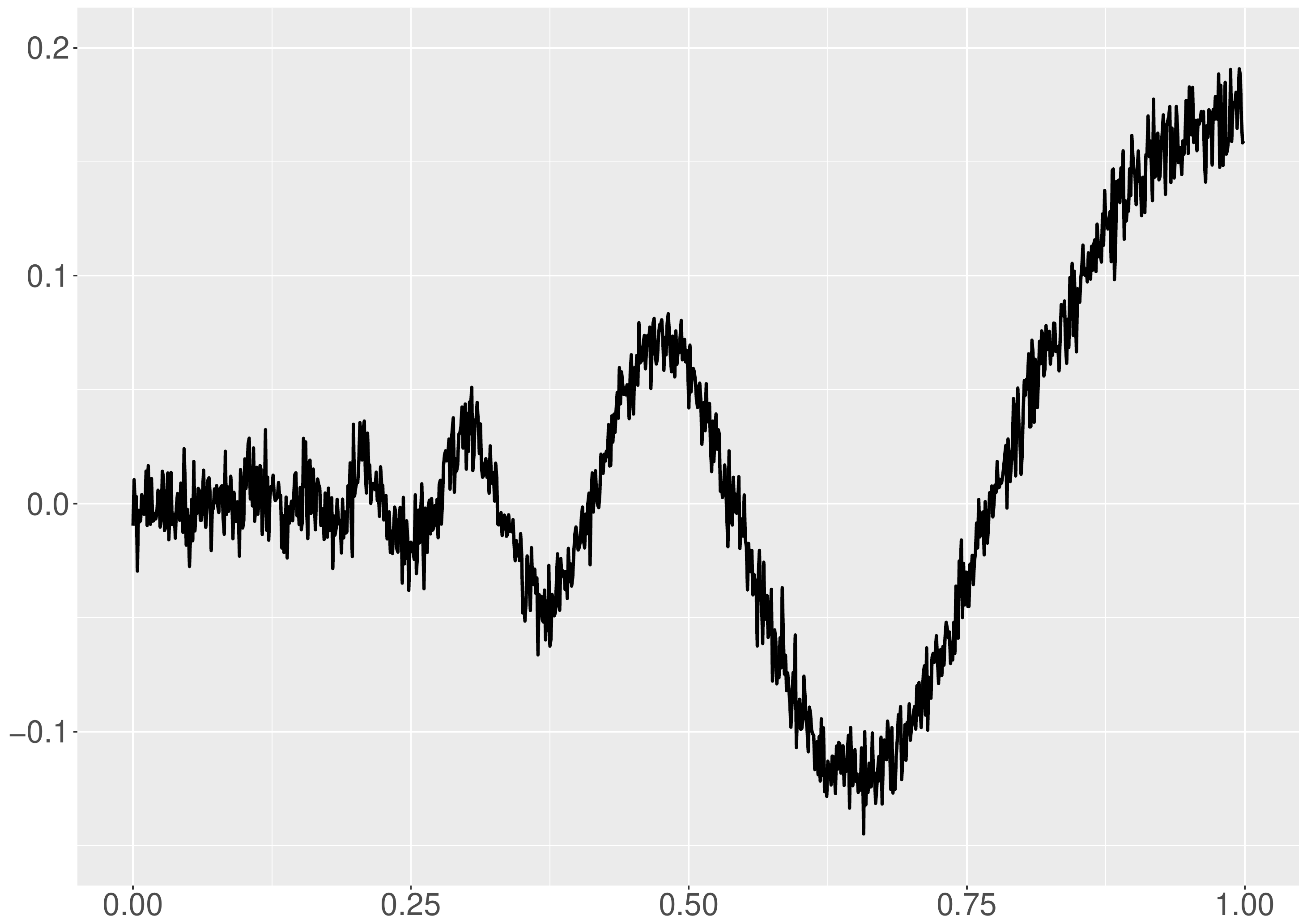}
    \caption{Noisy position at \textit{SNR}=7}
    \end{subfigure}
    \begin{subfigure}{0.45\textwidth}
    \centering
    \includegraphics[width=\linewidth,height=0.45\textwidth]{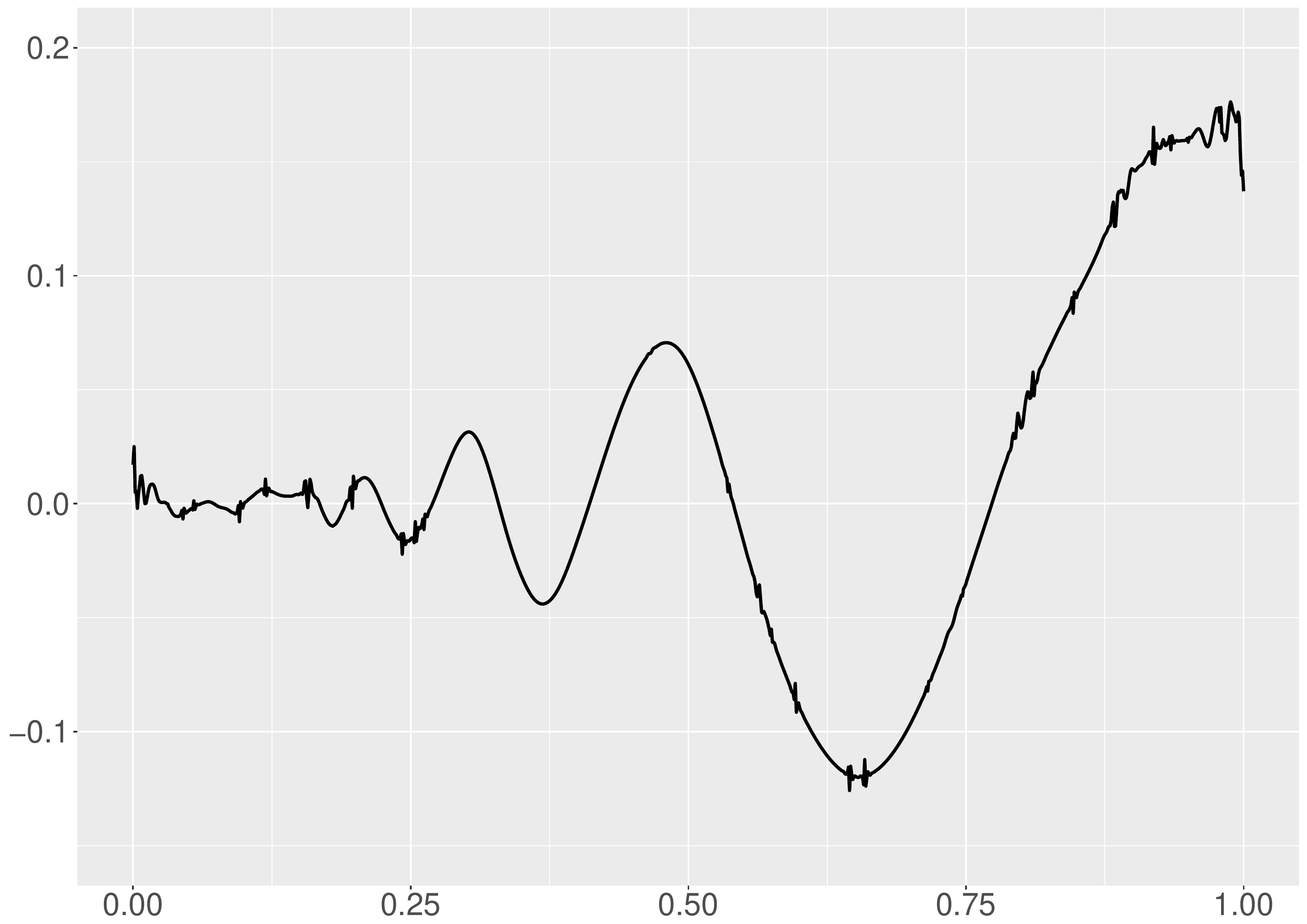}
    \caption{Reconstruction from Wavelet by sure threshold}
    \end{subfigure}
    \begin{subfigure}{0.45\textwidth}
    \centering
    \includegraphics[width=\linewidth,height=0.45\textwidth]{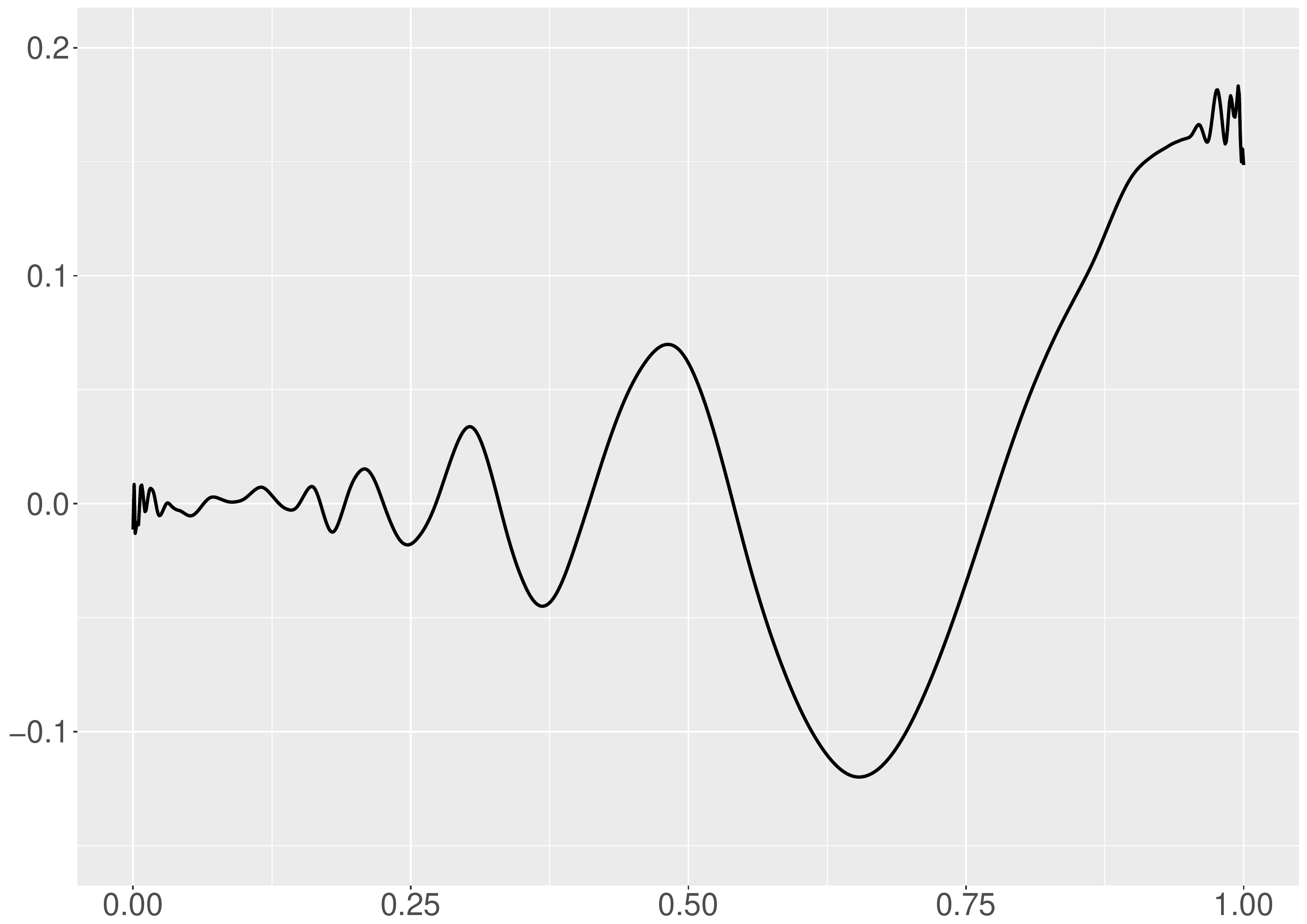}
    \caption{Reconstruction from Wavelet by BayesThresh approach}
    \end{subfigure}
    \begin{subfigure}{0.45\textwidth}
    \centering
    \includegraphics[width=\linewidth,height=0.45\textwidth]{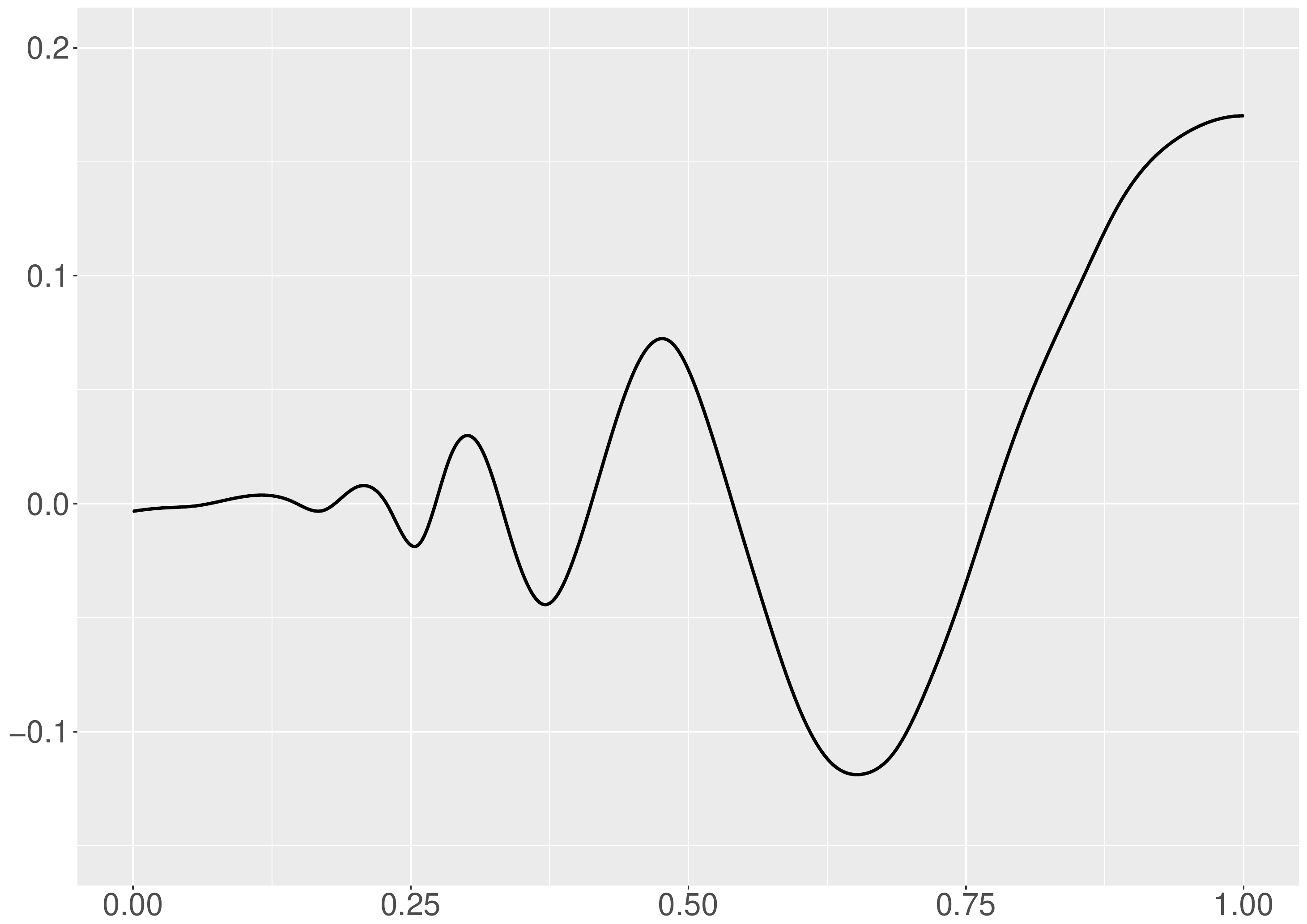}
    \caption{Reconstruction by P-spline}
    \end{subfigure}
    \begin{subfigure}{0.45\textwidth}
    \centering
    \includegraphics[width=\linewidth,height=0.45\textwidth]{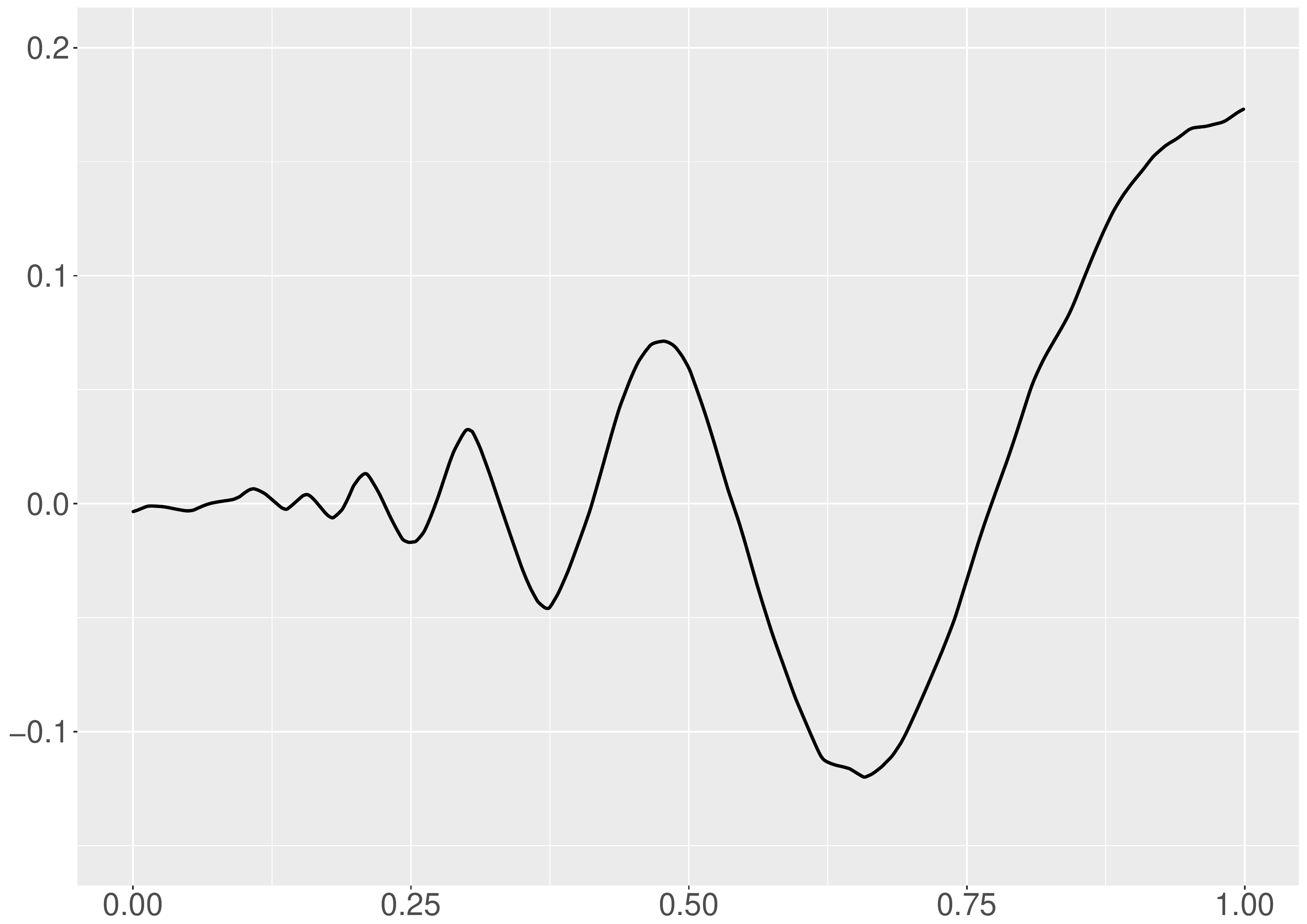}
    \caption{Reconstruction by adaptive V-spline, $\gamma=0$}
    \end{subfigure}
  \begin{subfigure}[t]{0.45\textwidth}
    \centering
    \includegraphics[width=\linewidth,height=0.45\textwidth]{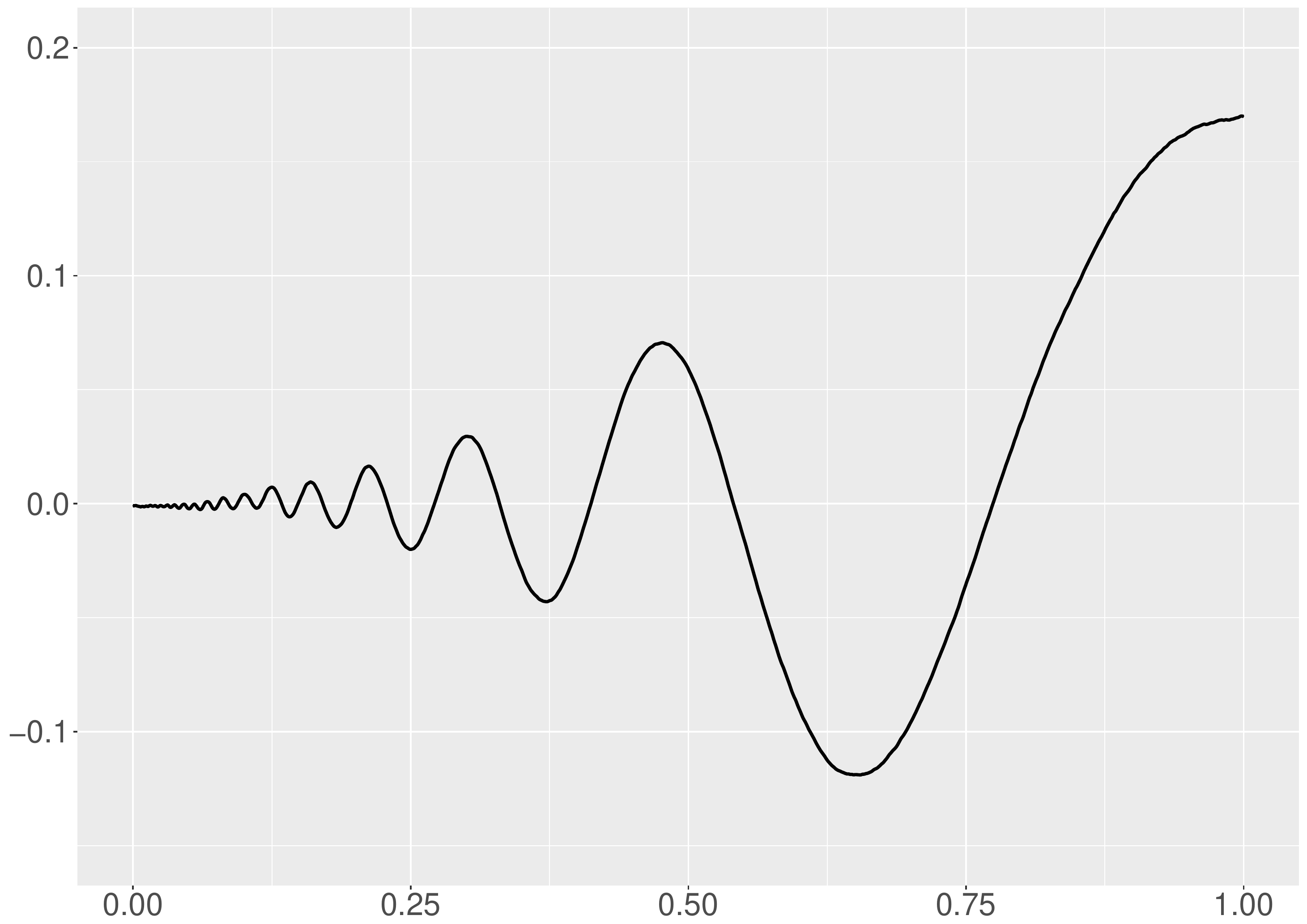}
    \caption{Reconstruction by non-adaptive V-spline}
    \end{subfigure}
    \begin{subfigure}[t]{0.45\textwidth}
    \centering
    \includegraphics[width=\linewidth,height=0.45\textwidth]{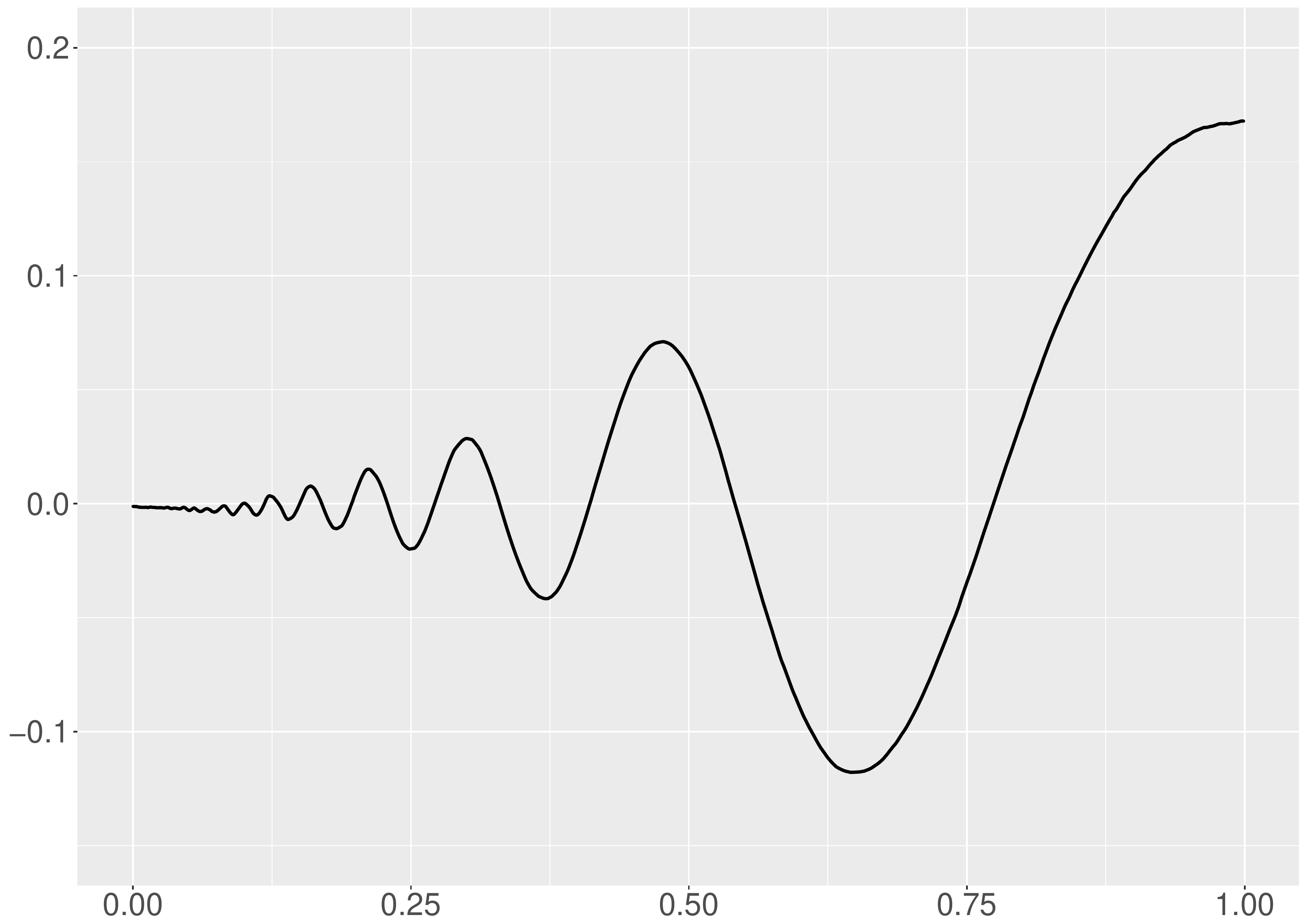}
    \caption{Reconstruction by adaptive V-spline\\\mbox{  } }
    \end{subfigure}
\caption{Numerical example: $\textit{Doppler}$. Comparison of different reconstruction methods with simulated data.}\label{num4}
 \end{figure}

Figures \ref{num1} to \ref{num4} give reconstructions of the trajectories based on the {\it Blocks}, {\it Bumps}, {\it Heaviside} and {\it Doppler} functions respectively, when SNR$=7$. Apart from the adaptive V-spline, we find all methods tend to oversmooth the reconstructions. Wavelet (\textit{sure}) suffers from an occasional ``high-frequency'' breakdown and wavelet (\textit{BayesThresh}) gives spurious fluctuations near the boundary knots. The \textit{P-spline} gives a smoother reconstruction than wavelet reconstructions, but does not perform as well as the adaptive V-spline with $\gamma=0$. The value of incorporating velocity information can be seen in the performance of the non-adaptive and adaptive V-splines. However, the non-adaptive V-spline can perform poorly when there are large absolute changes in velocity, e.g. in the \textit{Blocks} and \textit{Bumps} trajectories. Overall, the adaptive V-spline performs much better than the other methods and returns near-true reconstructions.

\begin{figure}
    \centering
    \begin{subfigure}{0.45\textwidth}
    \centering
    \includegraphics[width=\textwidth]{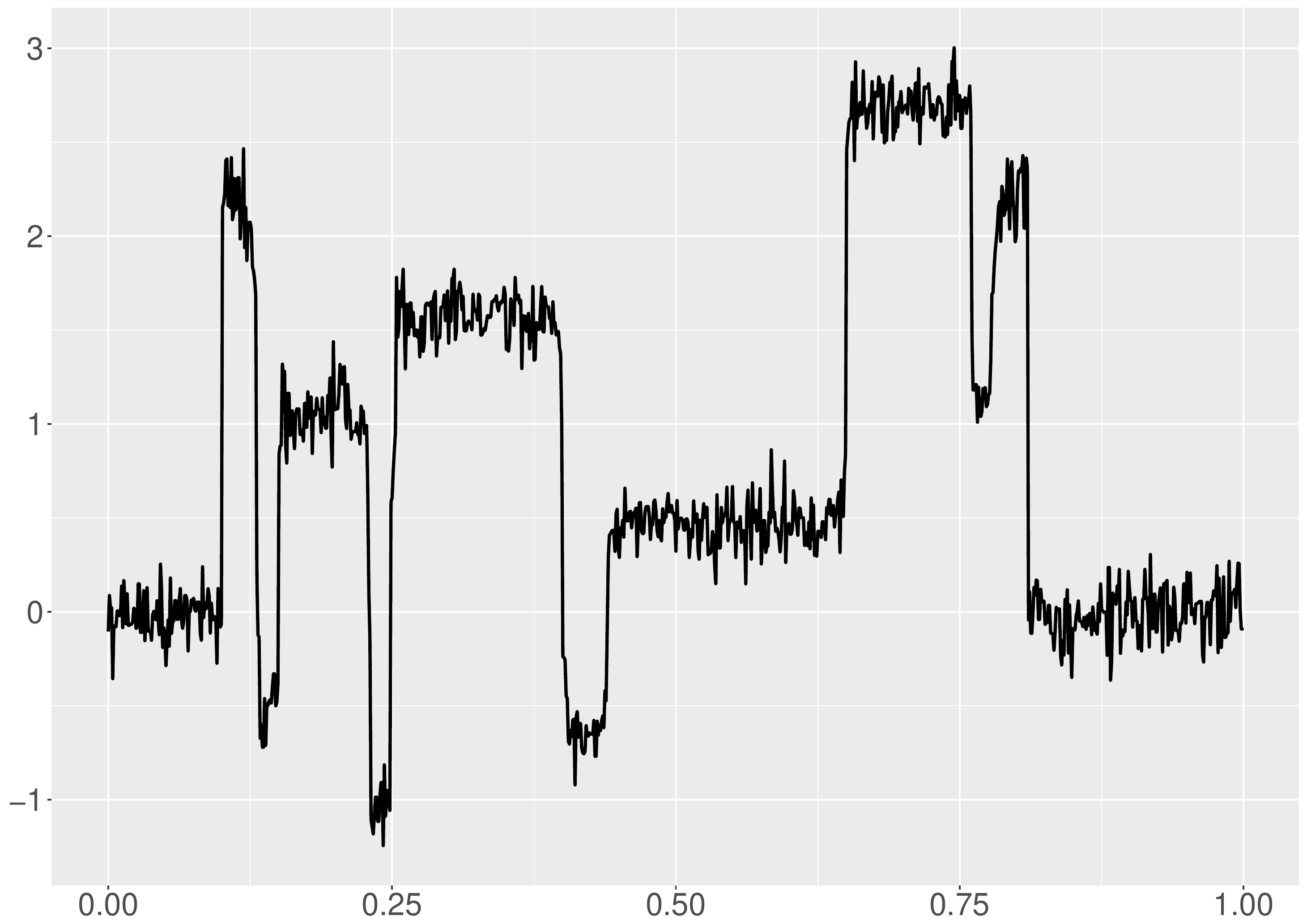}
    \caption{Estimated \textit{Blocks}  }
    \end{subfigure}%
    \begin{subfigure}{0.45\textwidth}
    \centering
    \includegraphics[width=\textwidth]{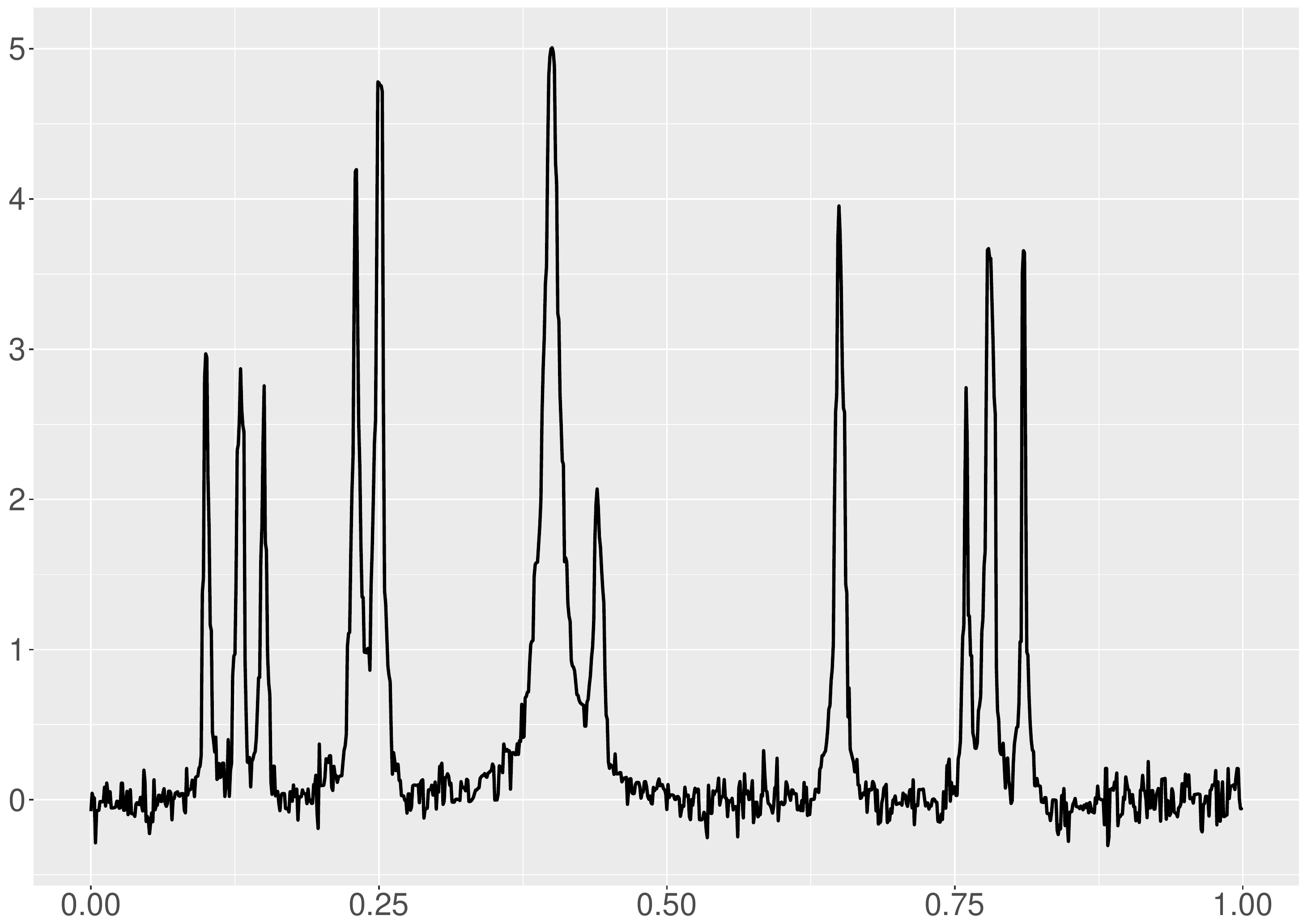}
    \caption{Estimated \textit{Bumps}  }
    \end{subfigure}
    \begin{subfigure}{0.45\textwidth}
    \centering
    \includegraphics[width=\textwidth]{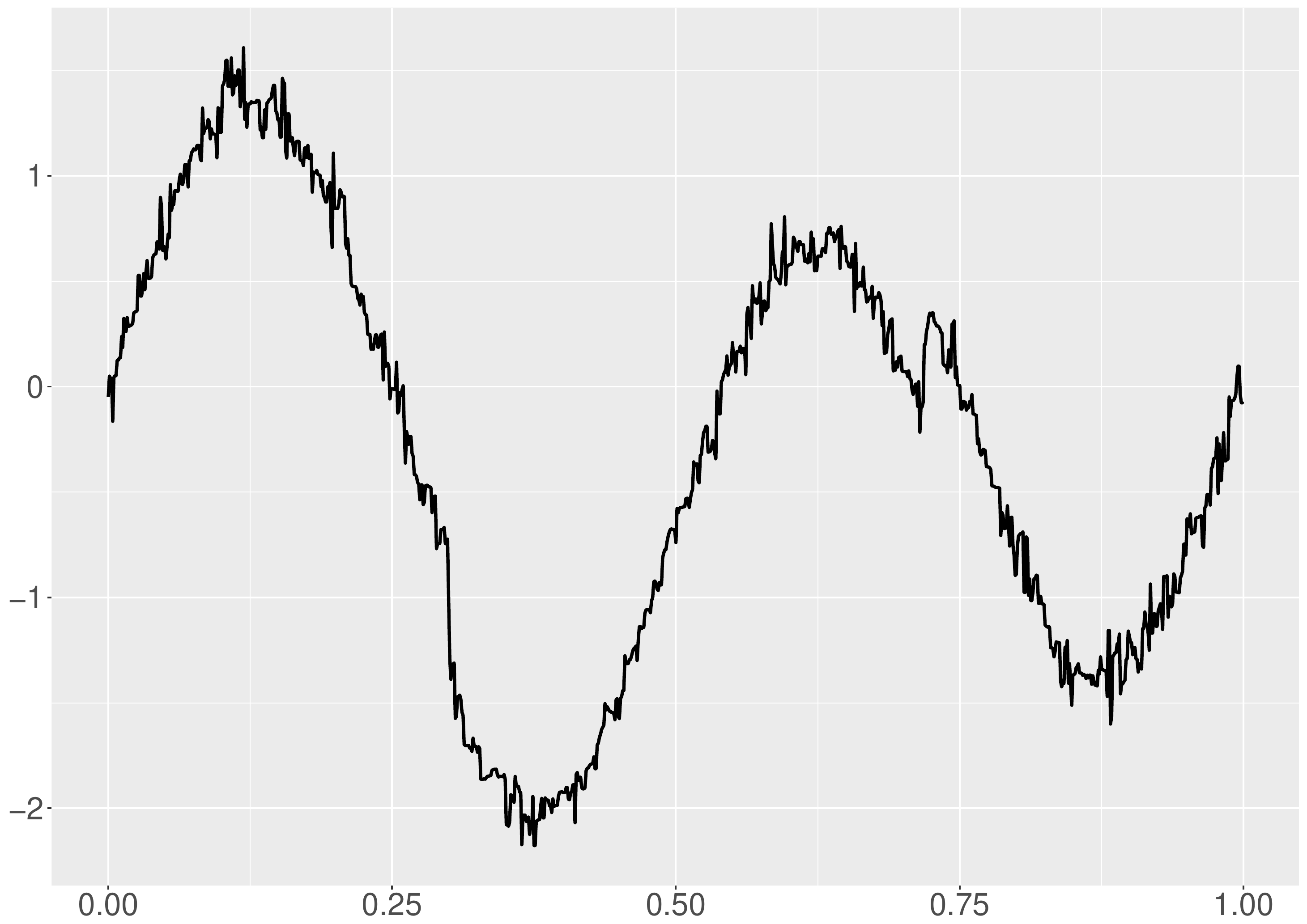}
    \caption{Estimated \textit{HeaviSine}  }
    \end{subfigure}
    \begin{subfigure}{0.45\textwidth}
    \centering
    \includegraphics[width=\textwidth]{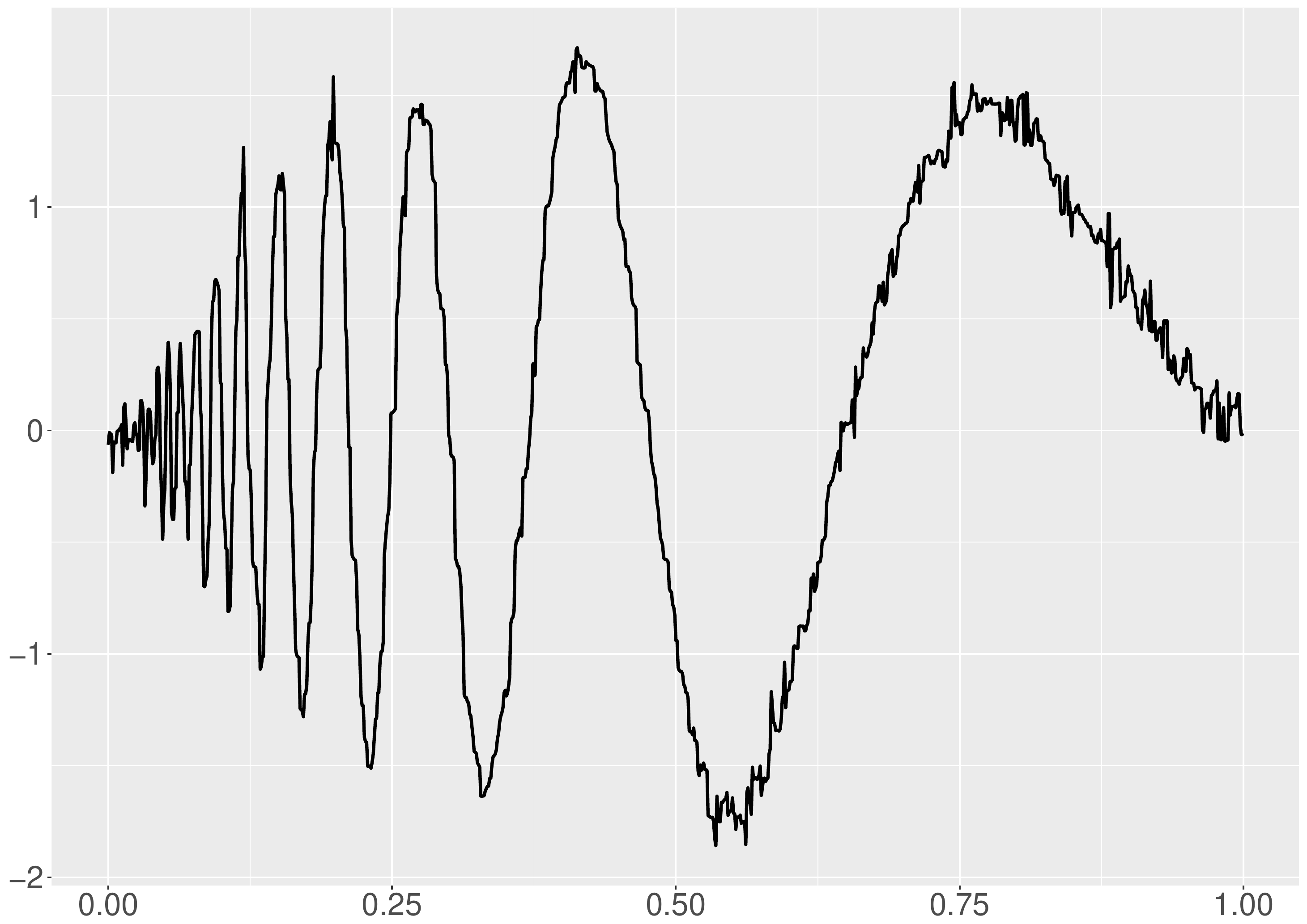}
    \caption{Estimated \textit{Doppler}  }
    \end{subfigure}
\caption{Trajectory velocities estimated by the adaptive V-spline.}\label{numvtractor}
 \end{figure}

Figure \ref{numvtractor} illustrates the trajectory velocities obtained by taking the first derivative of the adaptive V-spline trajectory. It is clear that the V-spline incorporates velocity information in order to improve trajectory reconstruction at the expense of smooth velocity reconstruction.

\subsubsection{Evaluation and residual analysis}

To examine the performance of the adaptive V-spline, we compute the true mean squared error for each of the reconstructions:
\begin{align}
\mbox{TMSE}&= \frac{1}{n} \sum_{i=1}^{n} \left( f(t_i)-\hat{f}(t_i) \right)^2.
\end{align}
The results are shown in Table \ref{tmse3200}. The adaptive V-spline returns the smallest true mean squared errors for the {\it Blocks} and {\it Bumps} trajectories. For the {\it Heaviside} and {\it Doppler} trajectories, only the non-adaptive V-spline shows similar performance. Further analysis also shows that the residuals from the V-splines are not significantly correlated at any lag, as expected.
  

\begin{table}
	\centering
	\caption{True mean squared errors (TMSE) of different reconstruction methods. The numbers in bold indicate the least error in each scenario. VS$_{\scriptsize \lambda_0}$ refers to the non-adaptive V-spline; VS$_{\footnotesize\gamma=0}$ refers to the adaptive V-spline with $\gamma=0$; W refers to wavelet reconstruction.}\label{tmse3200}
	\setlength\tabcolsep{1.5pt}
	\begin{tabular}{|c|c|C{1.9cm}|C{1.9cm}|C{1.9cm}|C{1.9cm}|C{1.9cm}|C{1.9cm}|}
\hline	TMSE $\left(10^{-6}\right)$  & SNR & V-spline & VS$_{\footnotesize\gamma=0}$ & VS$_{\scriptsize \lambda_0}$   & P-spline & W(sure) &  W(Bayes)\\ \hline
		
\multirow{2}{*}{\textit{Blocks}}  & 7   & \textbf{1.75} & 54.25 &  28.68   & 54.76   & 201.02   & 182.12   \\ \cline{2-8}
	     & 3   & \textbf{16.44} & 152.5 & 30.76  & 171.59   & 1138.08  & 712.36  \\ \hline
\multirow{2}{*}{\textit{Bumps}}     & 7  & \textbf{1.64} & 23.44  & 21.10     & 24.21 & 71.71 & 69.26 \\ \cline{2-8}
        & 3  & \textbf{8.51} & 77.78  &37.12     & 77.52 & 330.77 & 238.79 \\ \hline
\multirow{2}{*}{\textit{HeaviSine}}  & 7 & \textbf{1.53}& 7.80  & 1.56     & 9.54   & 55.37  &44.88  \\ \cline{2-8}
      & 3 & \textbf{8.21}& 33.56  & 8.49 & 34.26 & 240.72& 110.49\\ \hline
\multirow{2}{*}{\textit{Doppler}}    & 7   & 1.51& 6.67  & \textbf{1.08}   &  8.26   & 14.87  & 12.01  \\ \cline{2-8}
    & 3   & \textbf{8.10} & 22.14  & 8.25   & 19.95    &81.48  &50.33   \\ \hline
	\end{tabular}	
\end{table}



Table \ref{tablecompareSNR} shows the ability of the adaptive V-spline to retrieve the true SNR, calculated by $\sigma_{\hat{f}} / \sigma_{(\hat{f}-y)}$.

\begin{table}
	\centering
    \caption{Retrieved SNR.}\label{tablecompareSNR}
	\begin{tabular}{|c|C{3cm}|C{3cm}|C{3cm}|}
\hline	 SNR   & predefined value & generated $f$ & V-spline $\hat{f}$ \\ \hline
\multirow{2}{*}{\textit{Blocks}}  & 7   & 6.9442    &  6.9485     \\ \cline{2-4}
		   & 3   &  2.9761   &  2.9817   \\ \hline
\multirow{2}{*}{\textit{Bumps}}    & 7  & 6.9442    &  6.9548  \\ \cline{2-4}
		   & 3  & 2.9761    &   2.9953 	   \\ \hline
\multirow{2}{*}{\textit{HeaviSine}}  & 7 & 6.9442    &   6.9207   \\ \cline{2-4}
		  & 3 & 2.9761    &   2.9706  \\ \hline
\multirow{2}{*}{\textit{Doppler}}     & 7   & 6.9442   &  6.8757   \\ \cline{2-4}
		  & 3   & 2.9761   &  2.9625   \\ \hline
	\end{tabular}
\end{table}

\subsection{Irregularly Sampled Time Series Data}


In this section, we show that the proposed V-spline has the ability to reconstruct the true trajectory even though the data is irregularly sampled. For each of the functions of the previous section, we set SNR to 7 and generate a mother simulation of length $n=1024$. We then obtain a regularly sampled daughter simulation of length 512 using the indices $1,3,\ldots,1023$, and an irregularly sampled daughter simulation with 512 indices chosen at random. 

Table \ref{tablecompareSMEIreReg} shows that the TMSE tends to increase when the data are irregularly sampled. In the case of {\it Blocks} and {\it Bumps}, the TMSE increases by a factor 3 or 4. However the ability to retrieve the true SNR does not appear to be affected; see Table \ref{tablecompareSNRIreReg}.




The reconstructions themselves are shown in Figure \ref{irregularFigure}. In the irregularly sampled cases, the reconstruction tends to smooth some features in the underlying trajectories. In {\it Bumps} and, to a lesser extent, in {\it Doppler}, the reconstruction occasionally adds in features. This occurs when there are no sampled data at times when the underlying velocity is changing rapidly.

\begin{figure}[!h]
    \centering
    \begin{subfigure}{\textwidth}
    \centering
    \includegraphics[width=0.45\textwidth]{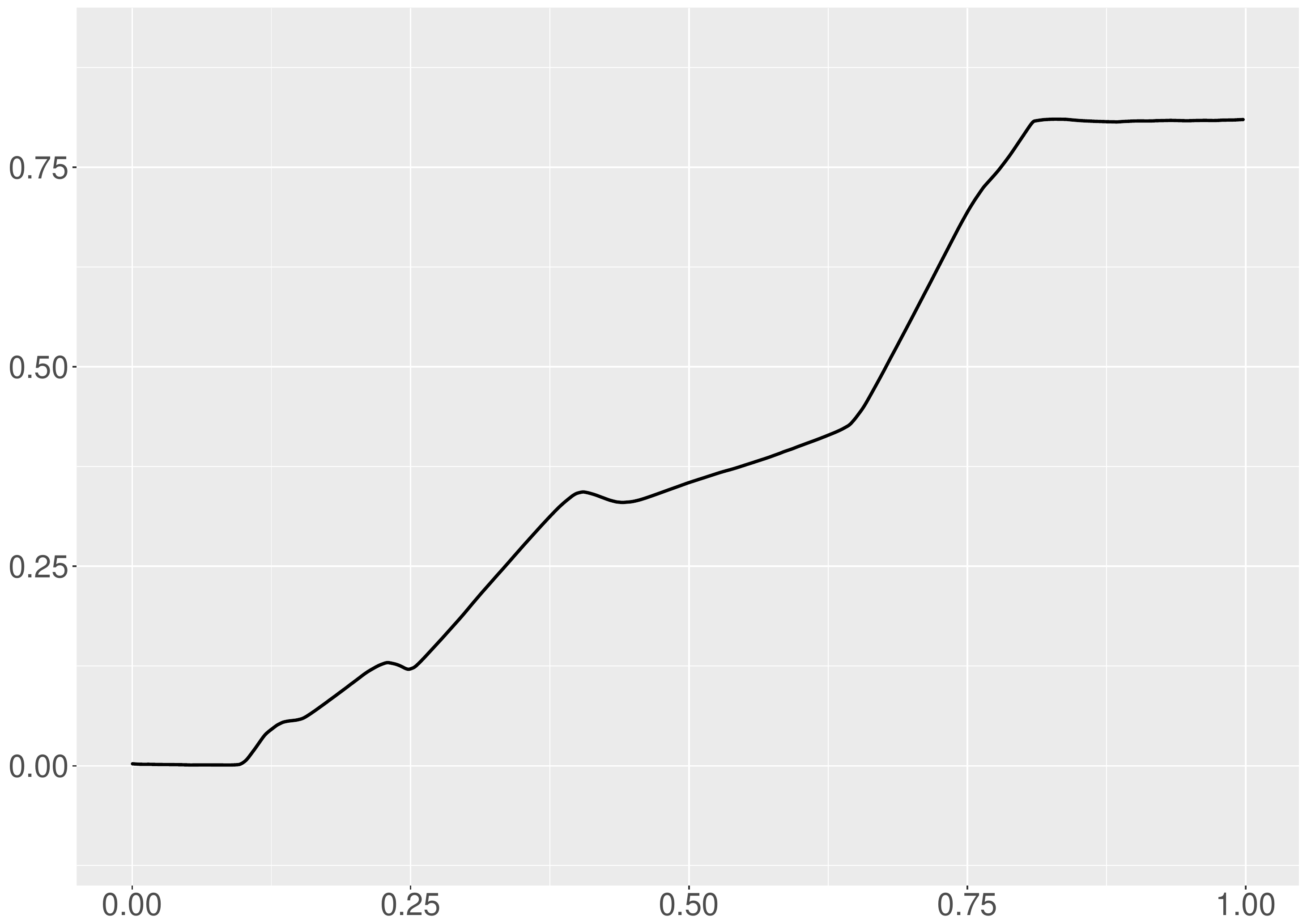}
    \includegraphics[width=0.45\textwidth]{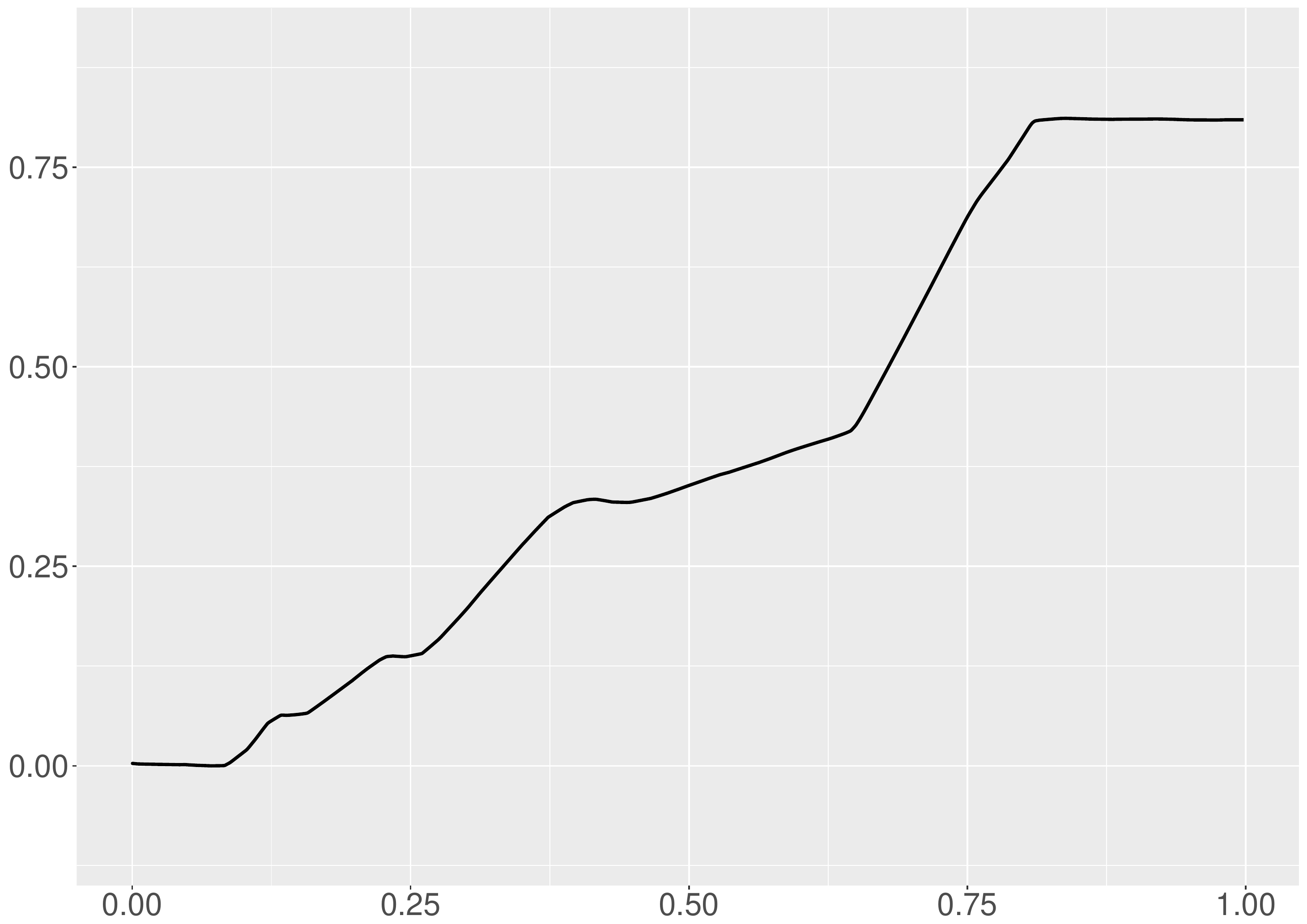}
    \caption{Reconstruction of \textit{Blocks} from regularly and irregularly sampled data}
    \end{subfigure}
    \begin{subfigure}{\textwidth}
    \centering
    \includegraphics[width=0.45\textwidth]{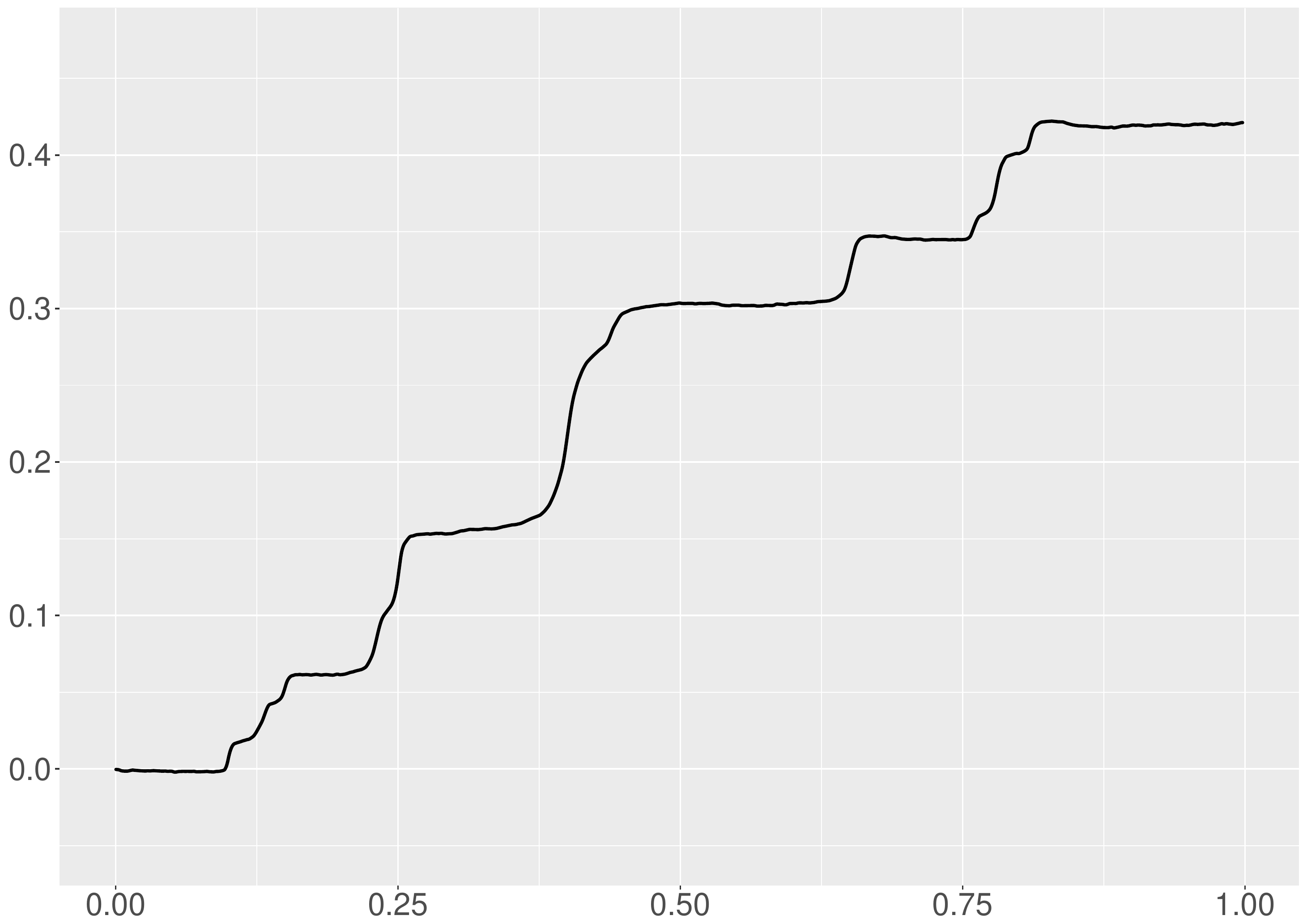}
    \includegraphics[width=0.45\textwidth]{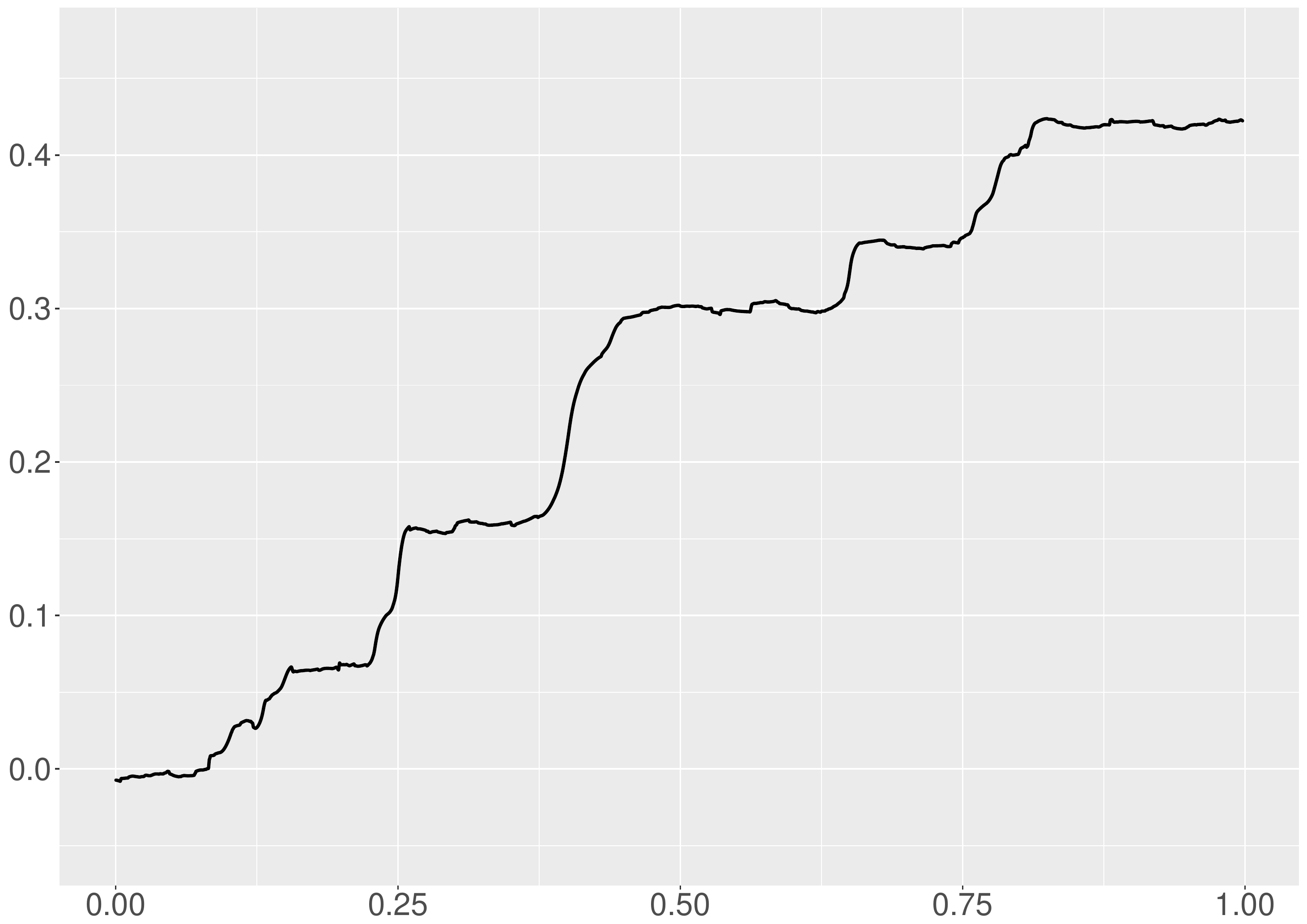}
    \caption{Reconstruction of \textit{Bumps} from regularly and irregularly sampled data}
    \end{subfigure}
    \begin{subfigure}{\textwidth}
    \centering
    \includegraphics[width=0.45\textwidth]{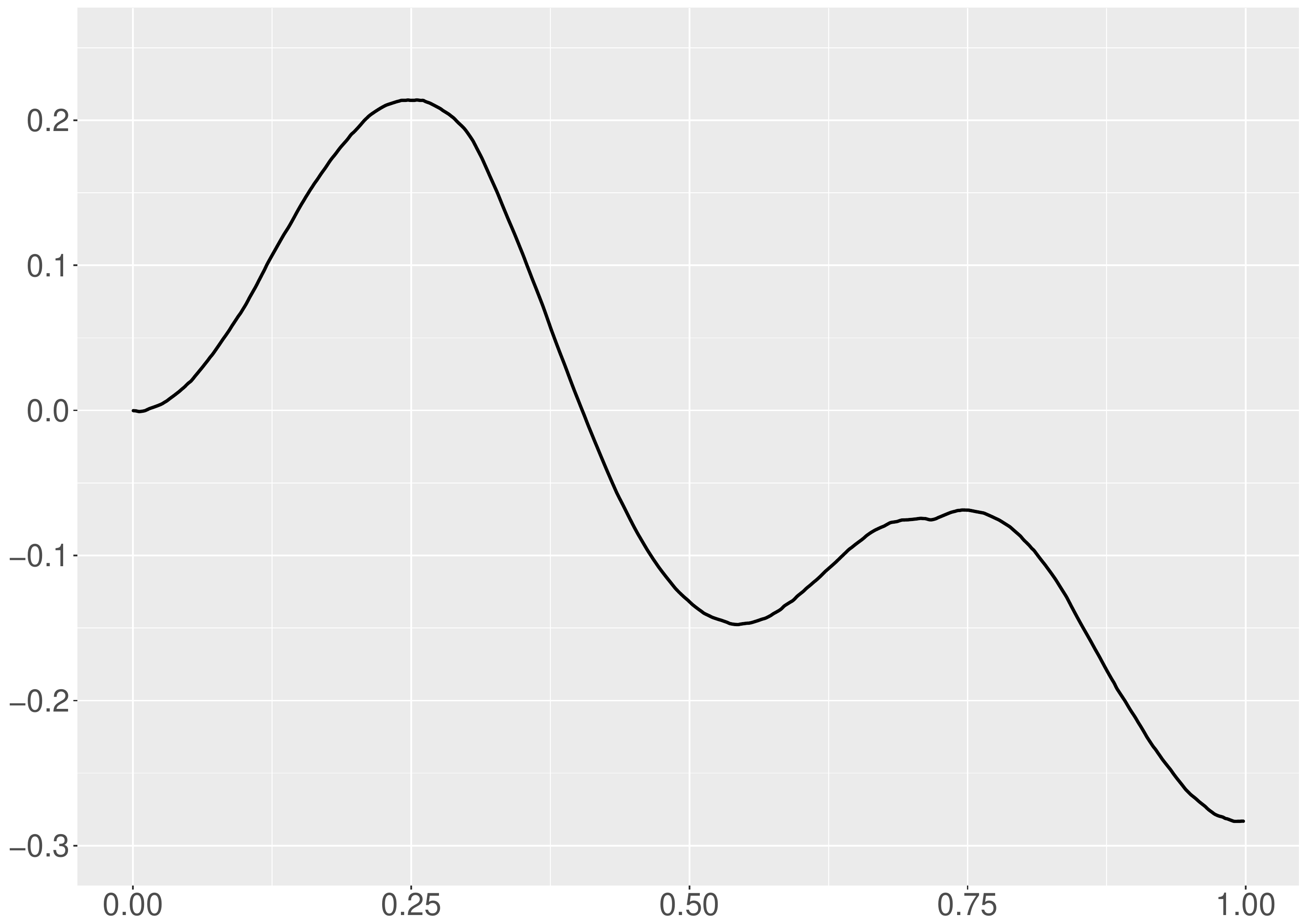}
    \includegraphics[width=0.45\textwidth]{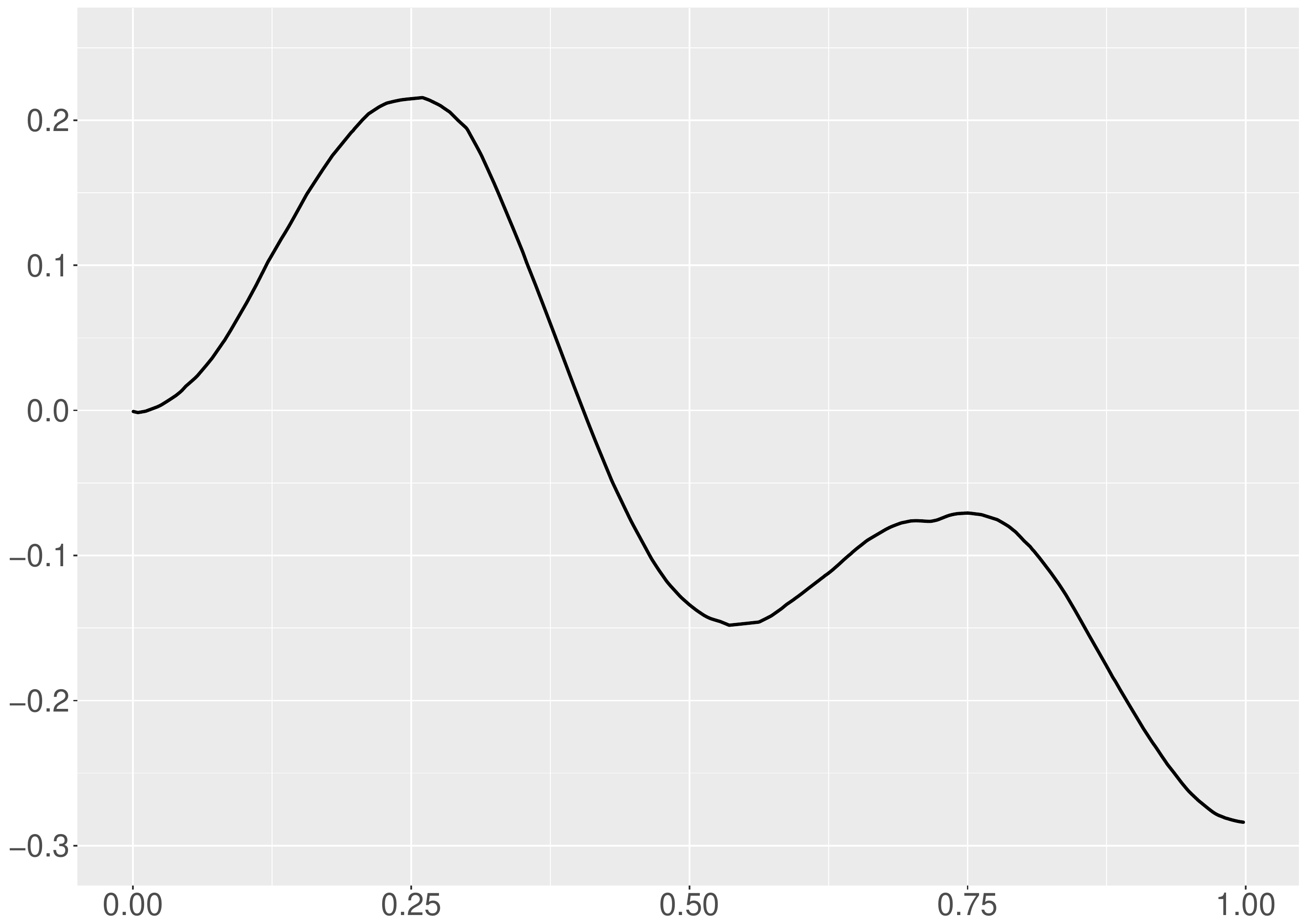}
    \caption{Reconstruction of \textit{HeaviSine} from regularly and irregularly sampled data}
    \end{subfigure}
    \begin{subfigure}{\textwidth}
    \centering
    \includegraphics[width=0.45\textwidth]{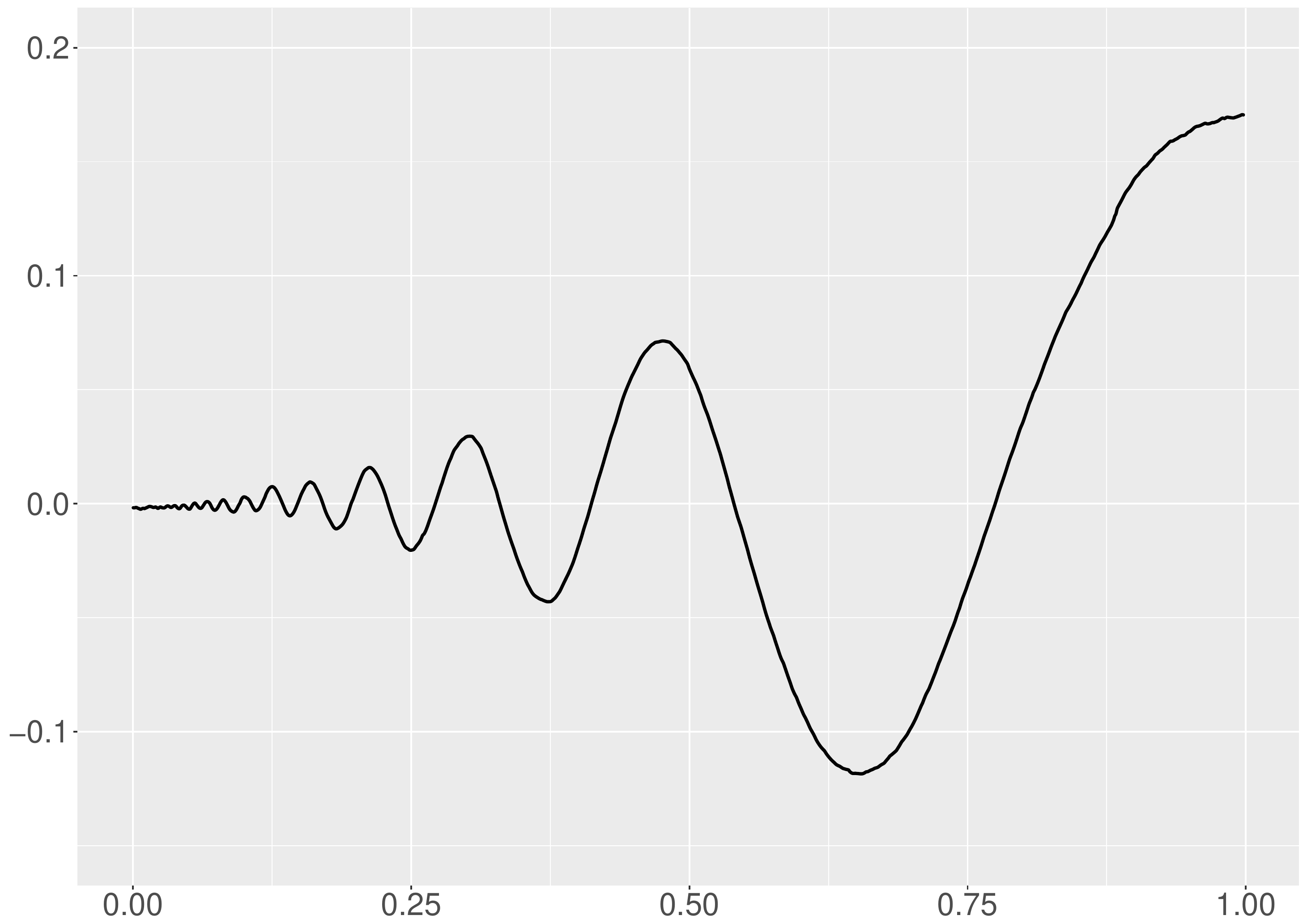}
    \includegraphics[width=0.45\textwidth]{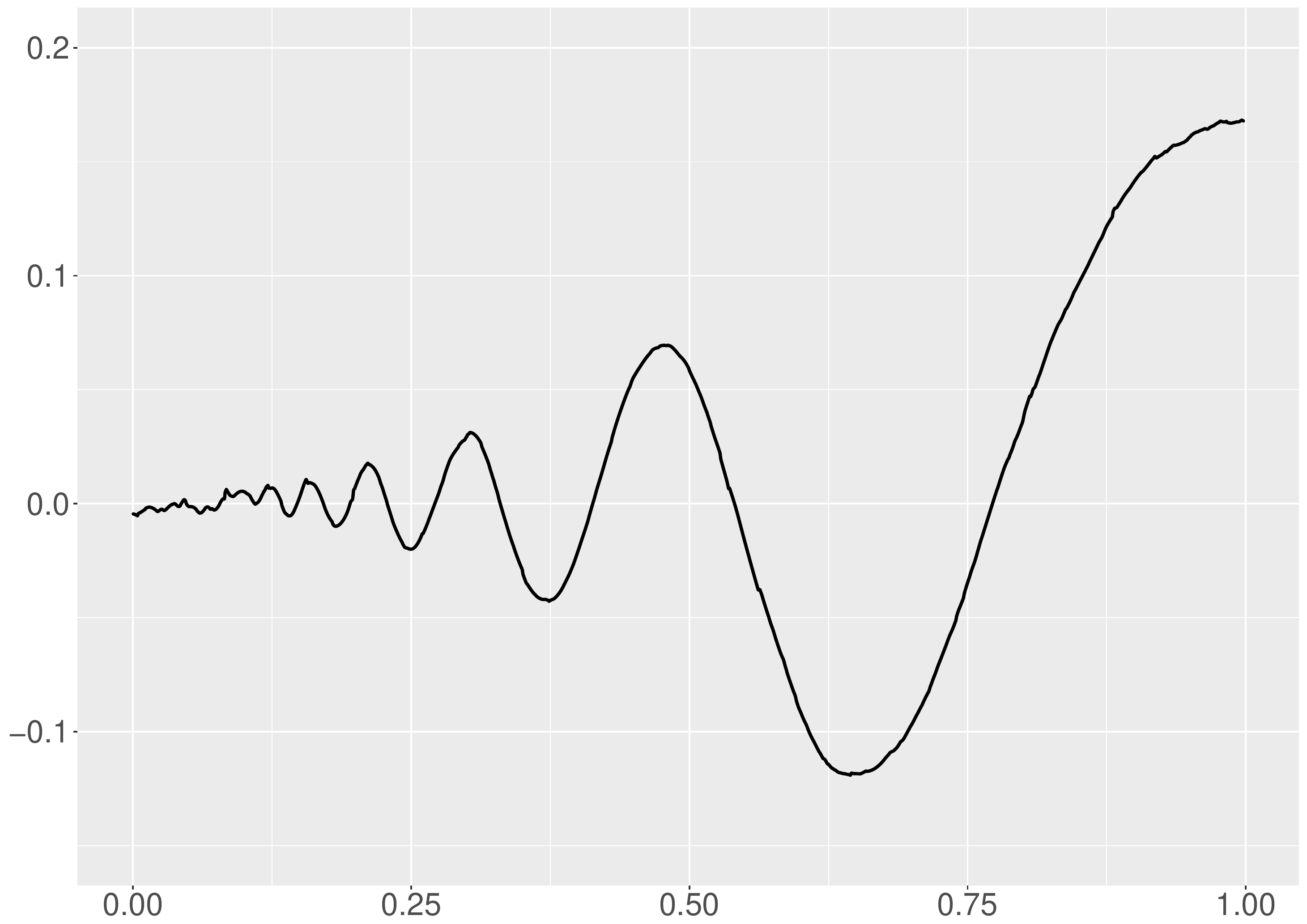}
    \caption{Reconstruction of \textit{Doppler} from regularly and irregularly sampled data}
    \end{subfigure}
 \caption{Comparison of adaptive V-spline trajectory reconstructions of regularly and irregularly sampled data.}\label{irregularFigure}
 \end{figure}


\begin{table}
	\centering
    \caption{TMSE of reconstructions from regularly and irregularly sampled data  }\label{tablecompareSMEIreReg}
	\begin{tabular}{|c|C{2cm}|C{2cm}|} \hline	
	 & \multicolumn{2}{|c|}{TMSE $\times 10^{-6}$} \\ \cline{2-3}
	                  & Regular & Irregular \\ \hline
\textit{Blocks}         & 3.5197 & 10.8596  \\  \hline
\textit{Bumps}        & 1.6662 & 6.2586 \\  \hline
\textit{HeaviSine}    &  1.1275 & 1.1077  \\ \hline
\textit{Doppler}      & 1.0101 & 1.7832   \\ \hline
	\end{tabular}
\end{table}

\begin{table}
	\centering
    \caption{Retrieved SNR of reconstructions from regularly and irregularly sampled data  }\label{tablecompareSNRIreReg}
	\begin{tabular}{|c|C{3cm}|C{3cm}|}
\hline	 SNR           & Regular & Irregular  \\ \hline
\textit{Blocks}        &    7.0479  & 6.8692    \\  \hline
\textit{Bumps}       &    7.0241  & 7.1937     \\  \hline
\textit{HeaviSine}  &   7.2367    & 6.8793   \\ \hline
\textit{Doppler}     &    6.8692   & 7.3645    \\ \hline
	\end{tabular}
\end{table}


\section{Inference of Tractor Trajectories}\label{splineapplication}

Real world vehicle trajectories exhibit many of the features seen in the test trajectories considered in the previous section, particularly the \textit{Blocks}, \textit{Bumps} and \textit{HeaviSine} trajectories. In this section, we extend the V-spline to more than one dimension and apply it to a real dataset, which was obtained from a GPS unit mounted on a tractor working in a horticultural setting. The data was irregularly recorded with highly variable time differences $\Delta T_i$. The original dataset contains $n=928$ records of longitude, latitude, speed, bearing and the status of the tractor's boom sprayer. These data were converted into northing and easting, and the velocities in those directions. The boom status, ``up'' or ``down'', denotes the operational state of the tractor, and may indicate different types of trajectories.

\subsection{The V-spline in $d$-dimensions}

To generalize the V-spline to $d$-dimensions, we consider the situation preceding eq. \eqref{tractorsplineObjective} but where now $y_i,v_i\in\mathbb{R}^d$. Then the function $f:[a,b]\rightarrow\mathbb{R}^d$ is a $d$-dimensional V-spline if it minimizes: 
\begin{equation}\label{tractorsplineObjective2D}
J[f]= \frac{1}{n} \sum_{i=1}^{n} \lVert y_i-f(t_i)\rVert_2^2 + \frac{\gamma}{n} \sum_{i=1}^{n} \lVert v_i-f'(t_i) \rVert_2^2 +\sum_{i=0}^{n} \lambda_i\int_{t_i}^{t_{i+1}} \lVert f''(t)\rVert_2^2 dt,
\end{equation}
where $\Vert\cdot\Vert_2$ is the Euclidean norm in $d$-dimensions. For each direction $\alpha=1,\ldots,d$, the fitted V-spline has the form $\hat{f}^{\alpha}(t)=\sum_{k=1}^{2n} N_k(t)\hat{\theta}^{\alpha}_k$, where
\begin{equation}\label{thetahat_d}
\hat{\theta}^{\alpha}=\left(B^\top B+\gamma C^\top C+n\Omega_{\lambda}\right)^{-1}\left(B^\top\mathbf{y}^{\alpha}+\gamma C^\top\mathbf{v}^{\alpha}\right).
\end{equation}
The parameters $\lambda$ and $\gamma$ are estimated by minimizing the cross-validation score:
\begin{equation}\label{tractorcv_d}
\mbox{CV}\left(\lambda,\gamma\right)=\frac{1}{n}\sum_{i=1}^{n} \left\Vert \frac{y_i-\hat{f}(t_i)+\gamma \frac{T_{ii}}{1-\gamma V_{ii}}(v_i-\hat{f}'(t_i))}{1-S_{ii}-\gamma\frac{T_{ii}}{1-\gamma V_{ii}}U_{ii}} \right\Vert_2^2
\end{equation}

\begin{figure}
\centering
 \begin{subfigure}{0.45\textwidth}
    \centering
    \includegraphics[width=\textwidth]{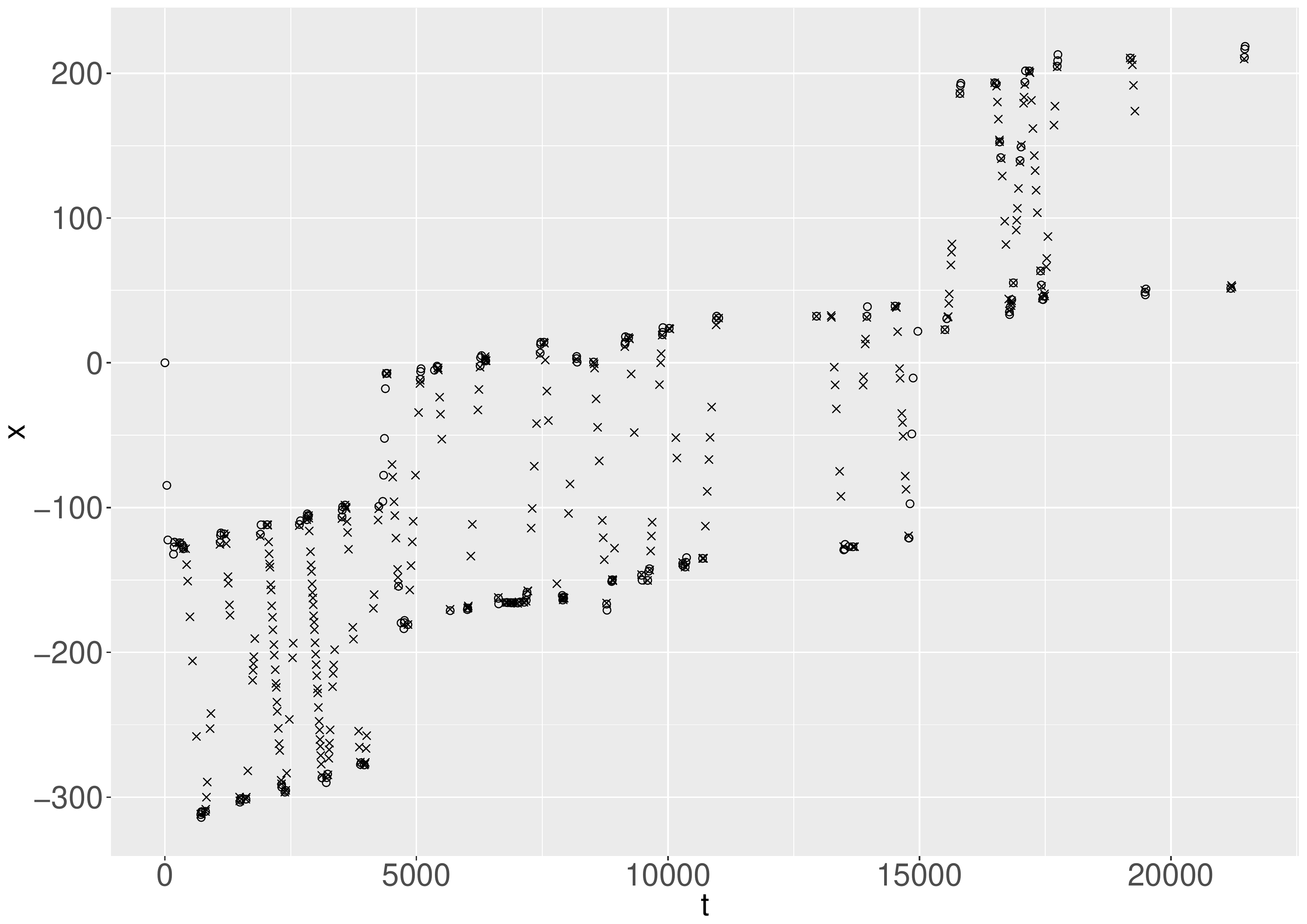}
    \caption{Easting $x$ vs time $t$}\label{gg512PointsX}
    \end{subfigure}
    \begin{subfigure}{0.45\textwidth}
    \centering
    \includegraphics[width=\textwidth]{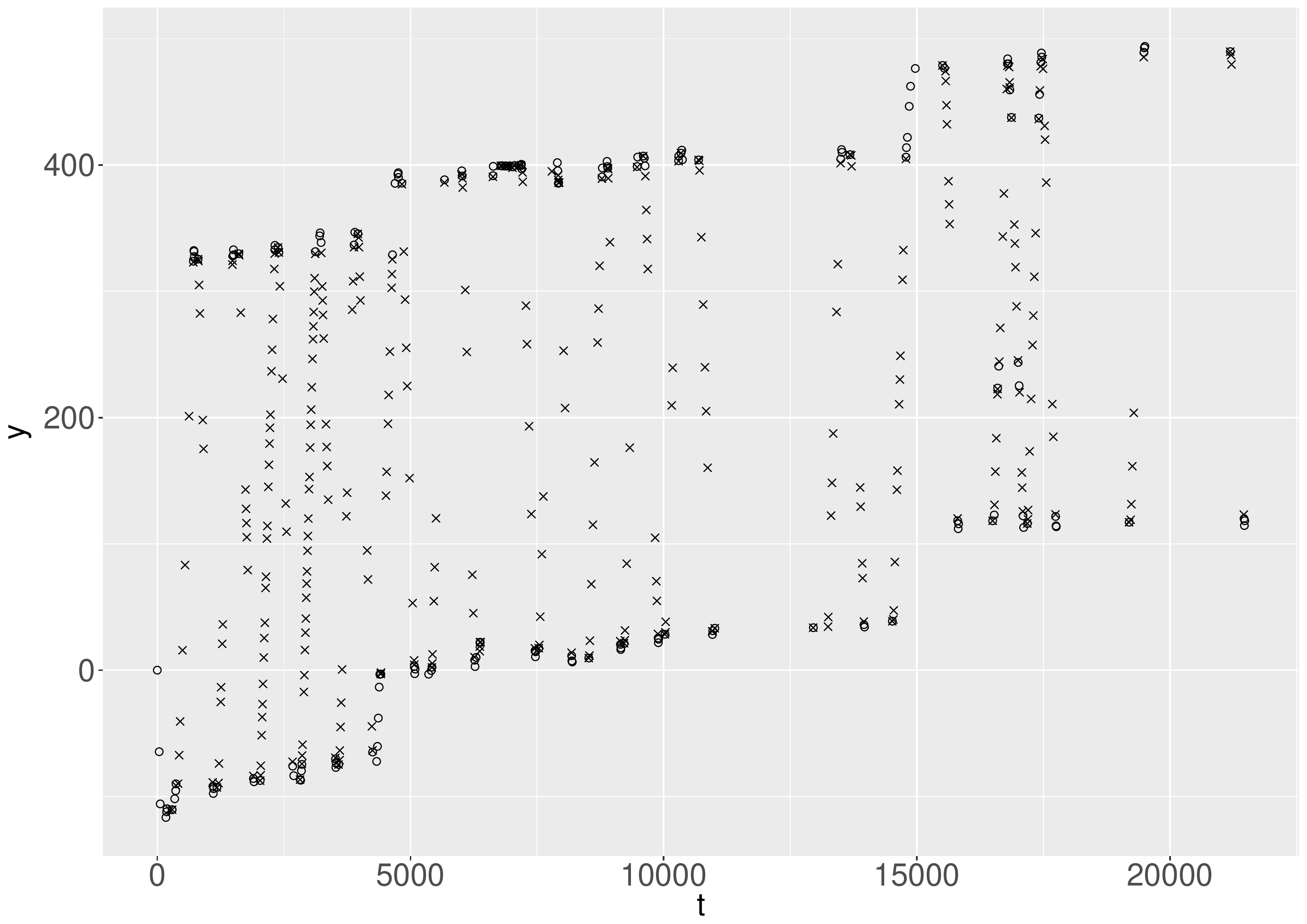}
    \caption{Northing $y$ vs time $t$}\label{gg512PointsY}
    \end{subfigure}
    \begin{subfigure}{0.45\textwidth}
    \centering
    \includegraphics[width=\textwidth]{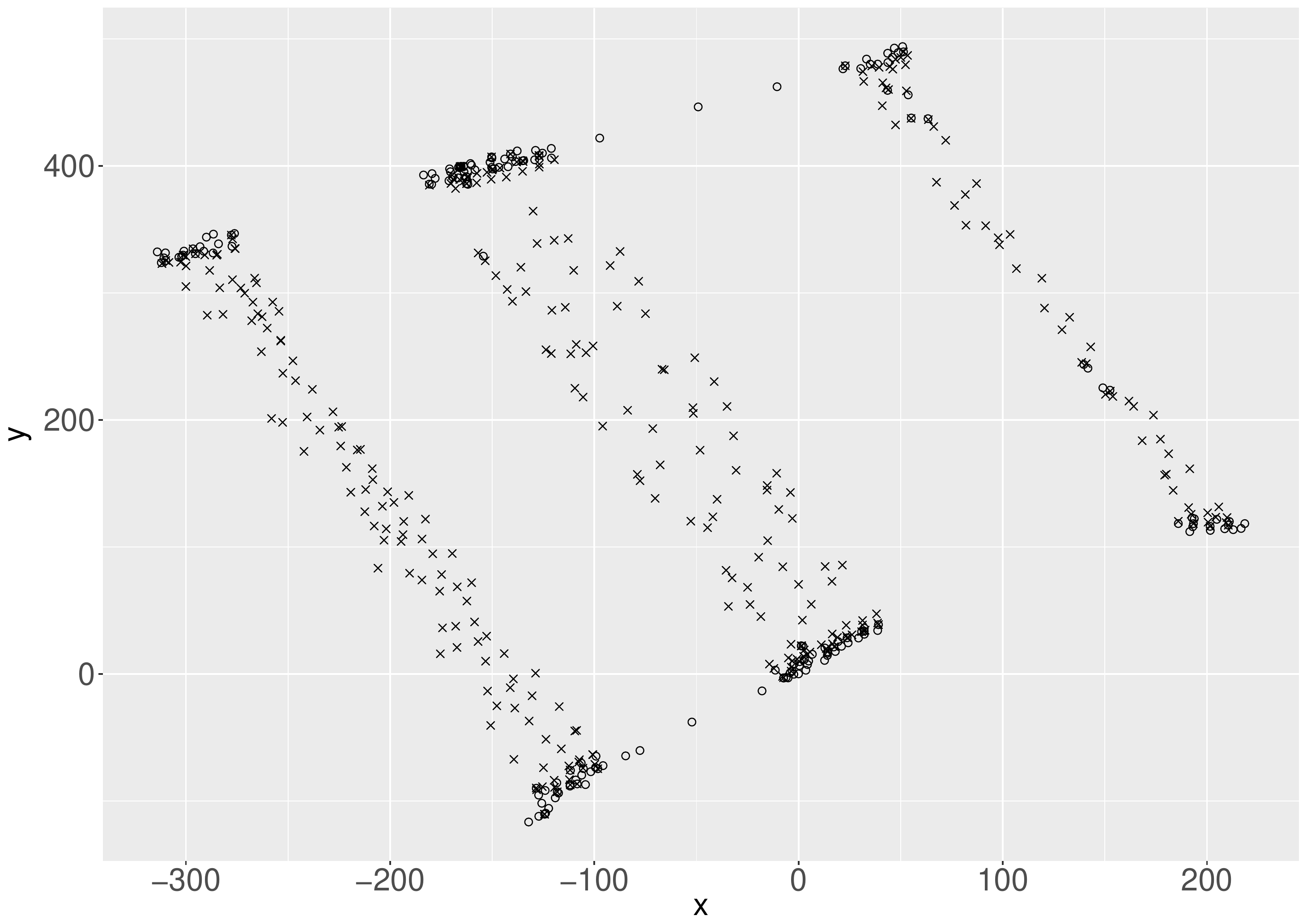}
    \caption{Northing $y$ vs easting $x$}\label{gg512Points}
    \end{subfigure}%
    \begin{subfigure}{0.45\textwidth}
    \centering
    \includegraphics[width=\textwidth]{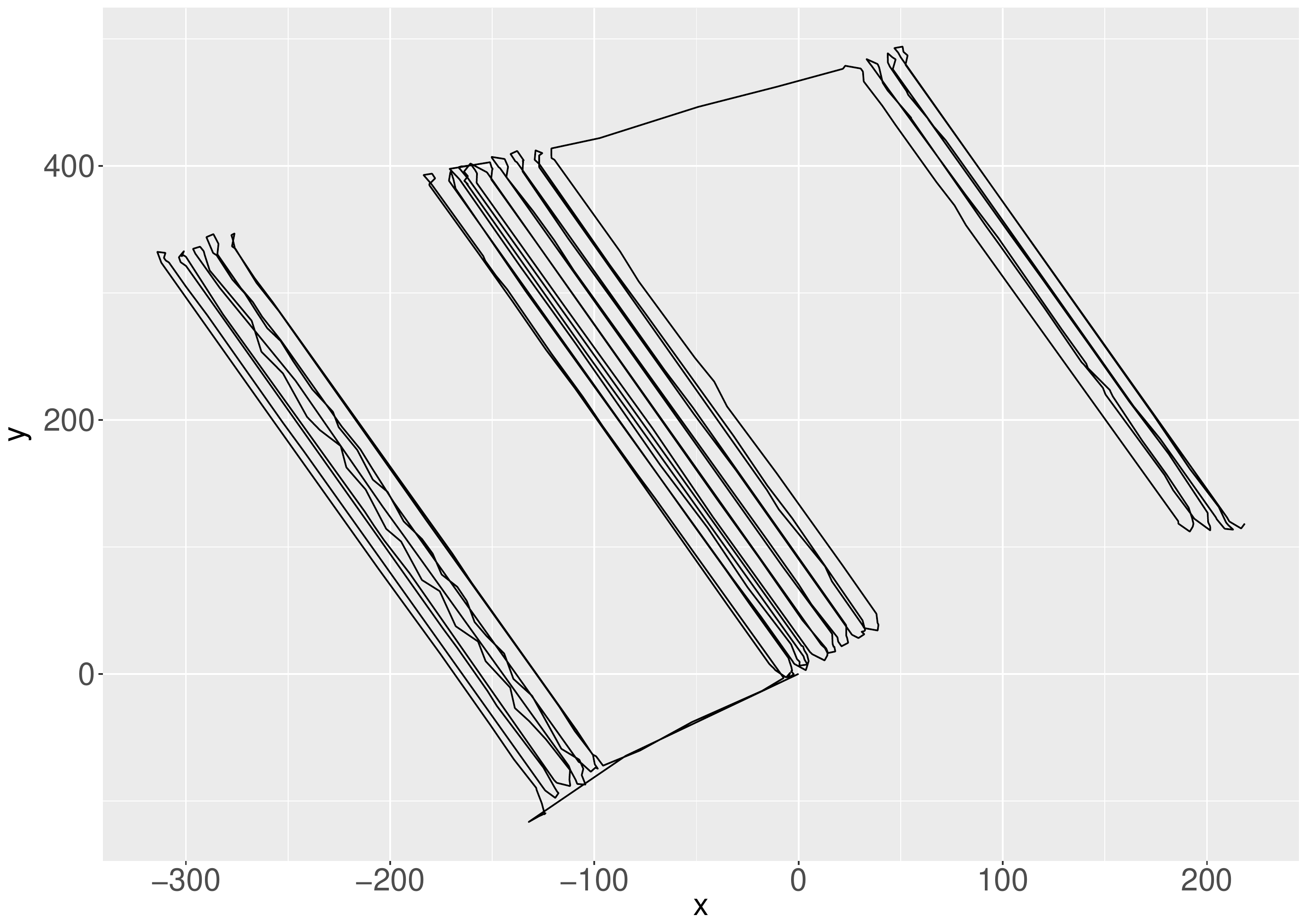}
    \caption{Line-based reconstruction}\label{gg512Path}
    \end{subfigure}
\caption{First 512 observations of position. Circles indicate the tractor boom is up; crosses indicate the boom is down. Figure \ref{gg512Path} is the line-based reconstruction in which consecutive points are connected with straight lines.}\label{original512}
 \end{figure}


In what follows, we allow the non-adaptive and adaptive V-splines to depend on the boom status. Specifically, letting $b_i=0$ denote boom ``up'', $b_i=1$ denote boom ``down'', and $\bar{v}_i=\Vert y_{i+1}-y_i\Vert_2/\Delta T_i$ be the average velocity on the interval $[t_i,t_{i+1})$, the penalty term for the non-adaptive V-spline is
\begin{equation}\label{penaltylamb0}
\lambda_i=b_i\lambda_d+(1-b_i)\lambda_u,
\end{equation}
and for the adaptive V-spline it is
\begin{equation}\label{penaltylamb}
\lambda_i=\big\{b_i\eta_d+(1-b_i)\eta_u\big\}\frac{\Delta T_i}{\bar{v}_i^2}.
\end{equation}

\subsection{Trajectory reconstruction}

It is instructive to first consider what happens when the northing and easting trajectories are reconstructed separately. To facilitate comparison with the wavelet approach we restrict attention to the first 512 observations. The data are shown in Figure \ref{original512}. It is evident that the trajectory of the tractor is typically characterized by motion along rows with boom down, and tight turns at the ends of rows with boom up. In this section, we use $x$ to denote easting and $y$ to denote northing.

Figure \ref{1dy} compares different methods for reconstructing the northing trajectory. The P-spline reconstruction fails due to instances where $y_{i+1}=y_i$. Wavelet ({\em sure}) overshoots at turning points, as does the non-adaptive V-spline, though more smoothly. The non-adaptive V-spline overshoots because it is incorporating outdated velocity information during a period where the velocity is changing quickly. On the other hand, wavelet ({\em BayesThresh}), the adaptive V-spline with $\gamma=0$ and the adaptive V-spline give acceptable results.

\begin{figure}
    \centering
    \begin{subfigure}{0.45\textwidth}
    \centering
    \includegraphics[width=\linewidth,height=0.5\textwidth]{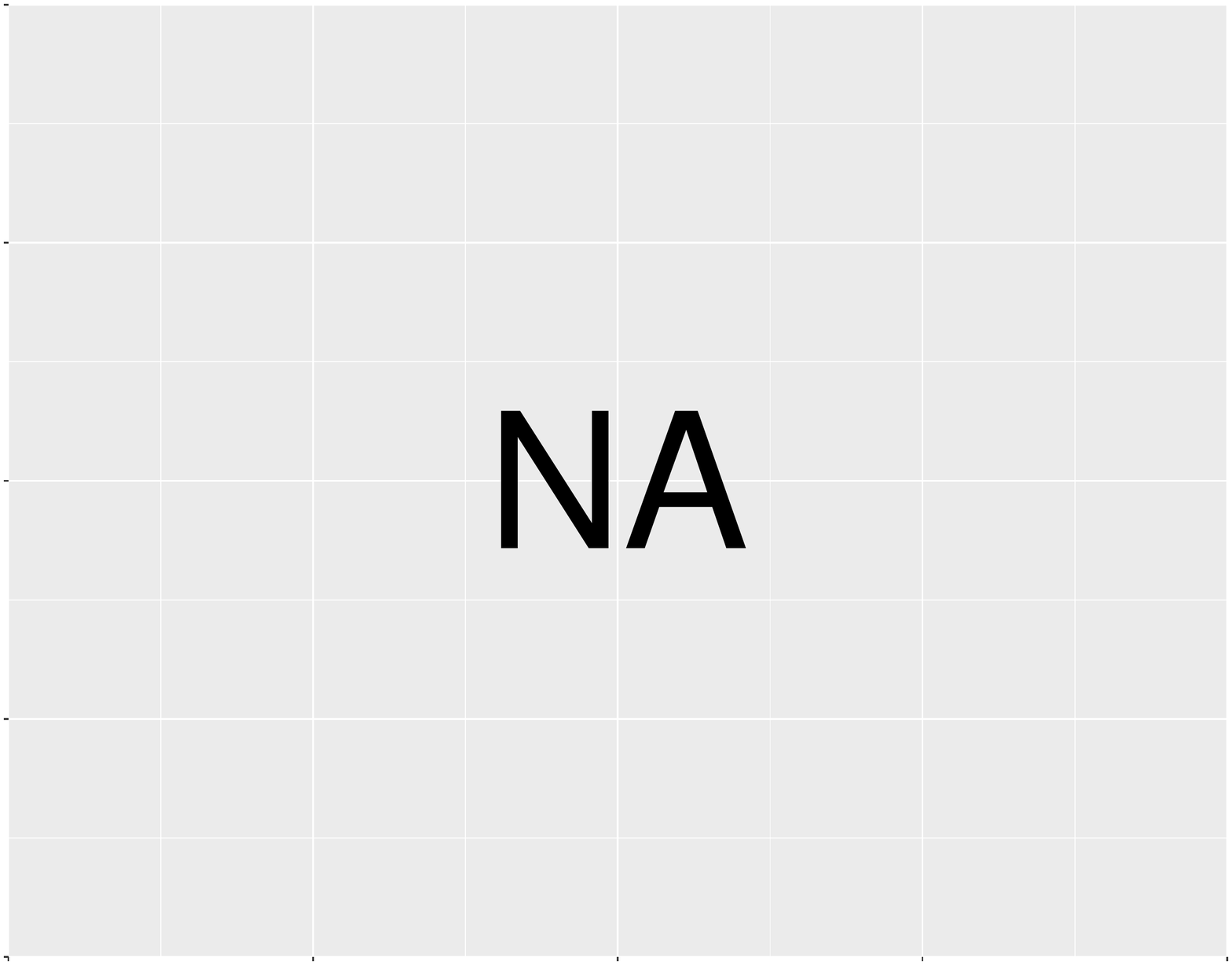}
    \caption{Not available for P-spline}\label{ggRealdataYPSpline}
    \end{subfigure}
    \begin{subfigure}{0.45\textwidth}
    \centering
    \includegraphics[width=\linewidth,height=0.5\textwidth]{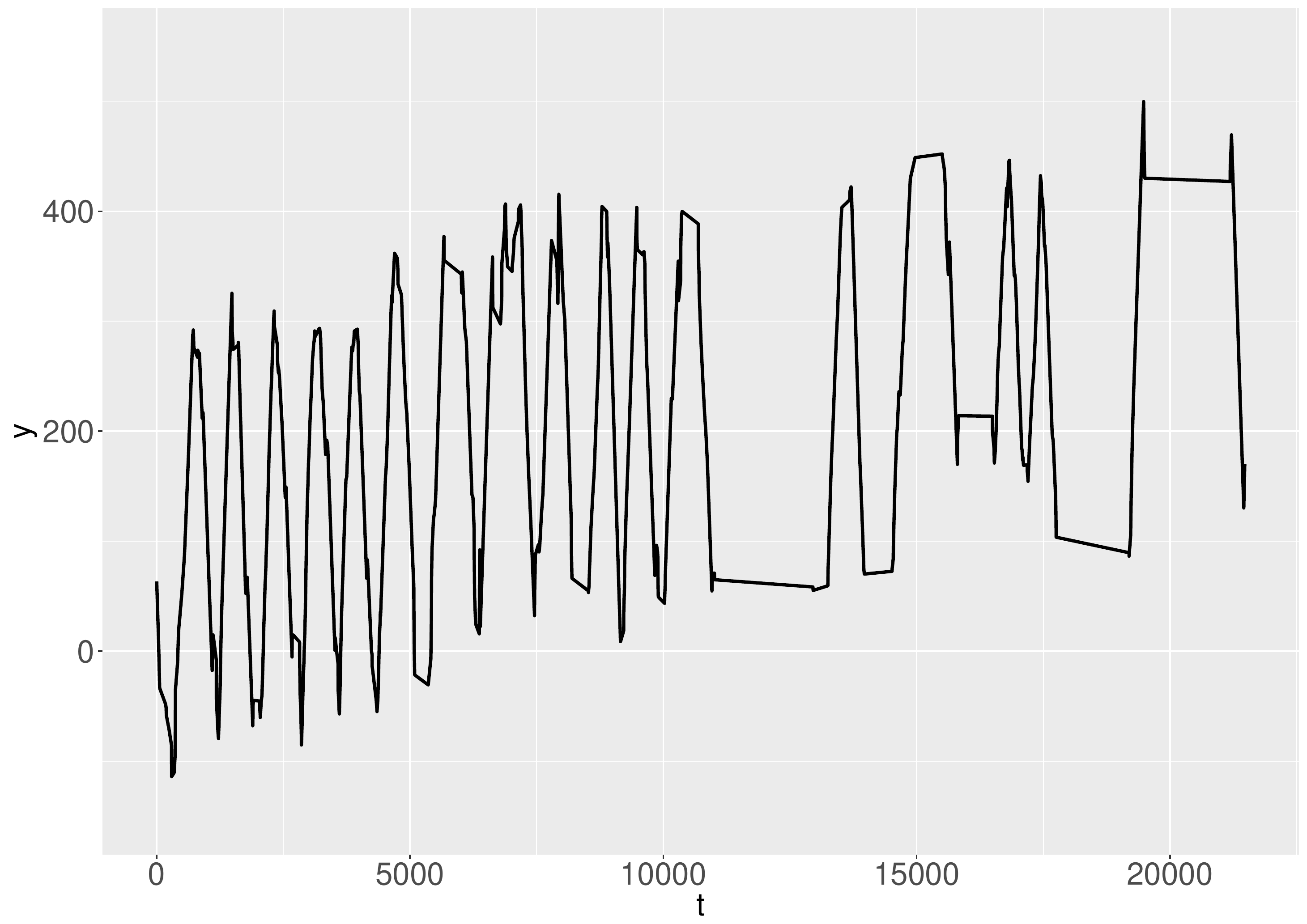}
    \caption{Reconstruction by wavelet ($\textit{sure}$)}\label{ggRealdataYSure}
    \end{subfigure}
    
    \begin{subfigure}{0.45\textwidth}
    \centering
    \includegraphics[width=\linewidth,height=0.5\textwidth]{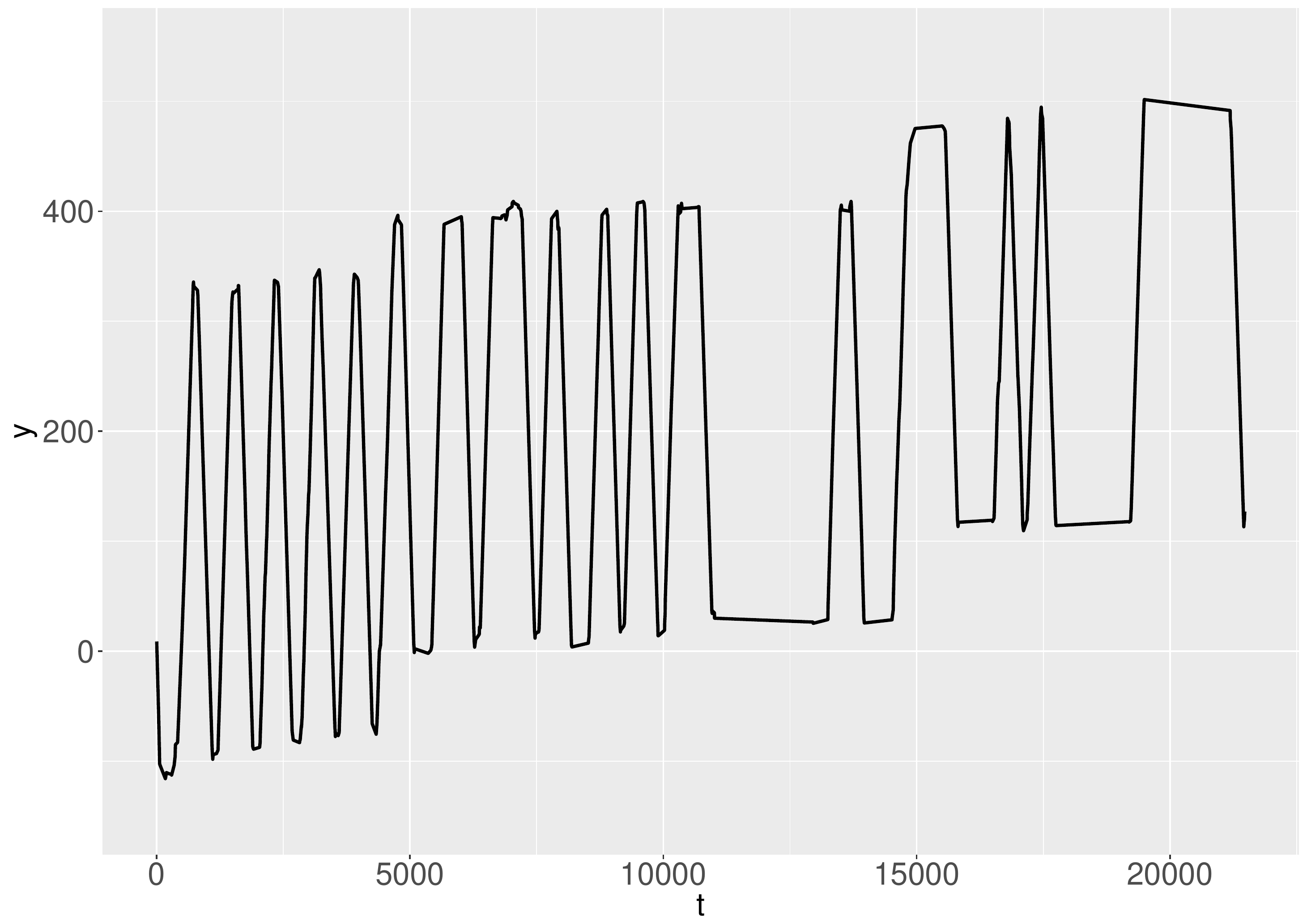}
    \caption{Reconstruction by wavelet (\textit{Bayes}) }\label{ggRealdataYBayes}
    \end{subfigure}
    \begin{subfigure}{0.45\textwidth}
    \centering
    \includegraphics[width=\linewidth,height=0.5\textwidth]{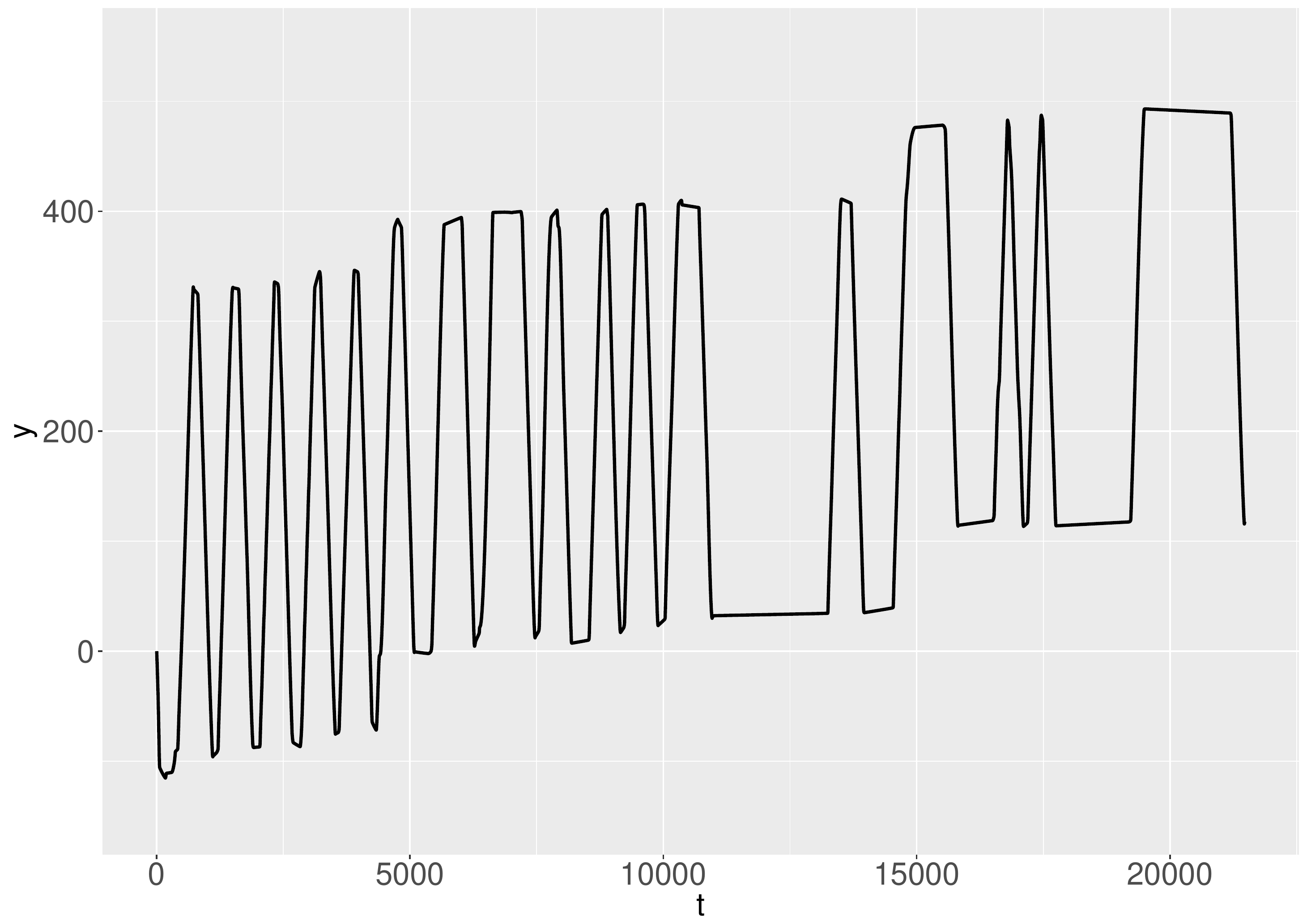}
    \caption{Reconstruction by adaptive V-spline, $\gamma=0$ }\label{ggRealdataYTractorGamma}
    \end{subfigure}
    
    \begin{subfigure}[t]{0.45\textwidth}
    \centering
    \includegraphics[width=\linewidth,height=0.5\textwidth]{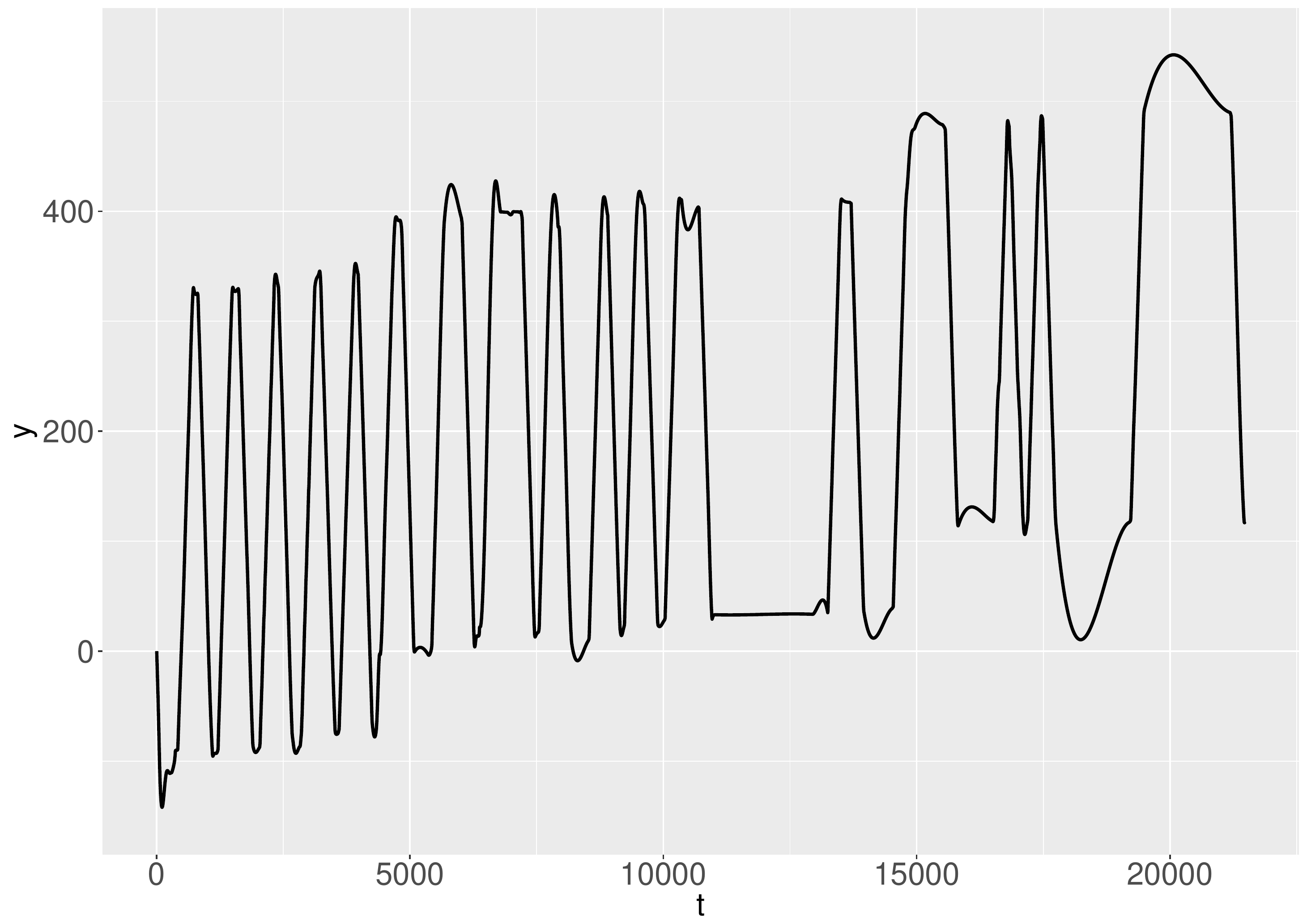}
    \caption{Reconstruction by non-adaptive V-spline}\label{ggRealdataYTractorAPT}
    \end{subfigure} 
    \begin{subfigure}[t]{0.45\textwidth}
    \centering
    \includegraphics[width=\linewidth,height=0.5\textwidth]{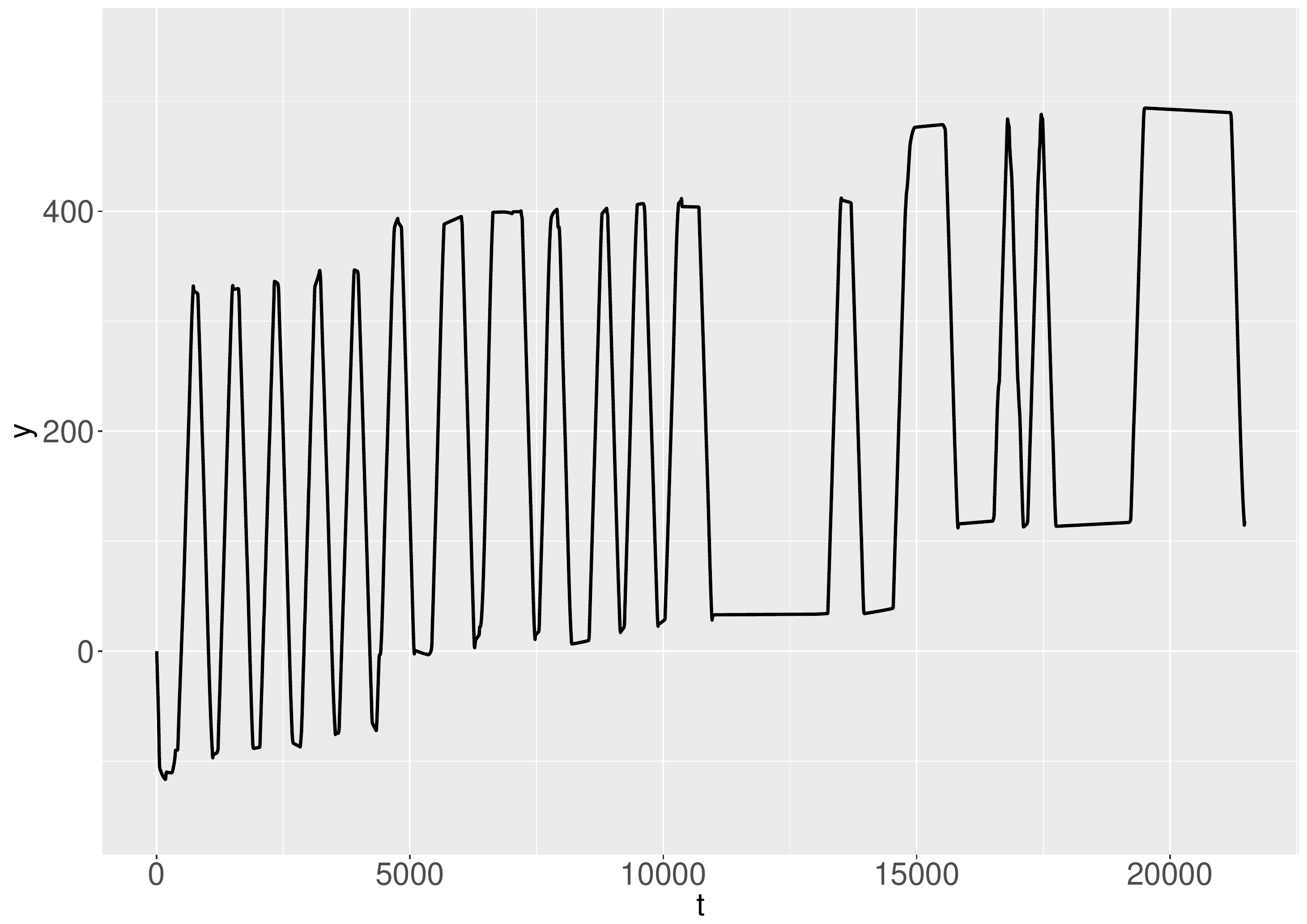}
    \caption{Reconstruction by adaptive V-spline\\ \mbox{  }}\label{ggRealdataYTractor}
    \end{subfigure}
    
\caption{Fitted northing trajectories. Figure \ref{ggRealdataYPSpline} The P-spline is fails due to instances where $y_{i+1}=y_i$, which lead to issues with invertibility. Figure \ref{ggRealdataYSure} Wavelet ($\textit{sure}$) exhibiting overshooting at turning points. Figure \ref{ggRealdataYBayes} Wavelet ($\textit{BayesThresh}$) showing improvement over wavelet ($\textit{sure}$). Figure \ref{ggRealdataYTractorGamma} The adaptive V-spline that does not incorporate velocity information. Figure \ref{ggRealdataYTractorAPT} Non-adaptive V-spline exhibiting smooth overshooting at turning points. Figure \ref{ggRealdataYTractor} The adaptive V-spline that includes velocity information.}\label{1dy}
 \end{figure}

\subsection{2-Dimensional Trajectory}

Figure \ref{completecombind2dxy} shows the full 2D adaptive V-spline reconstruction of the tractor trajectory for the first 512 observations, as well as the derived northing and easting trajectories. Also depicted by red dots of varying sizes are the sizes of the penalty terms $\lambda_i$. As expected, the penalty terms typically become large at turning points and before long waiting periods. A histogram of the $\lambda_i$ and the implied penalty function $\lambda(t)$ are given in Figure \ref{2dpenalty}. Figure \ref{complete2DXY} is a full reconstruction from the complete dataset.

\begin{figure}
  \centering
 \begin{subfigure}{\textwidth}
     \centering
     \includegraphics[width=0.9\linewidth]{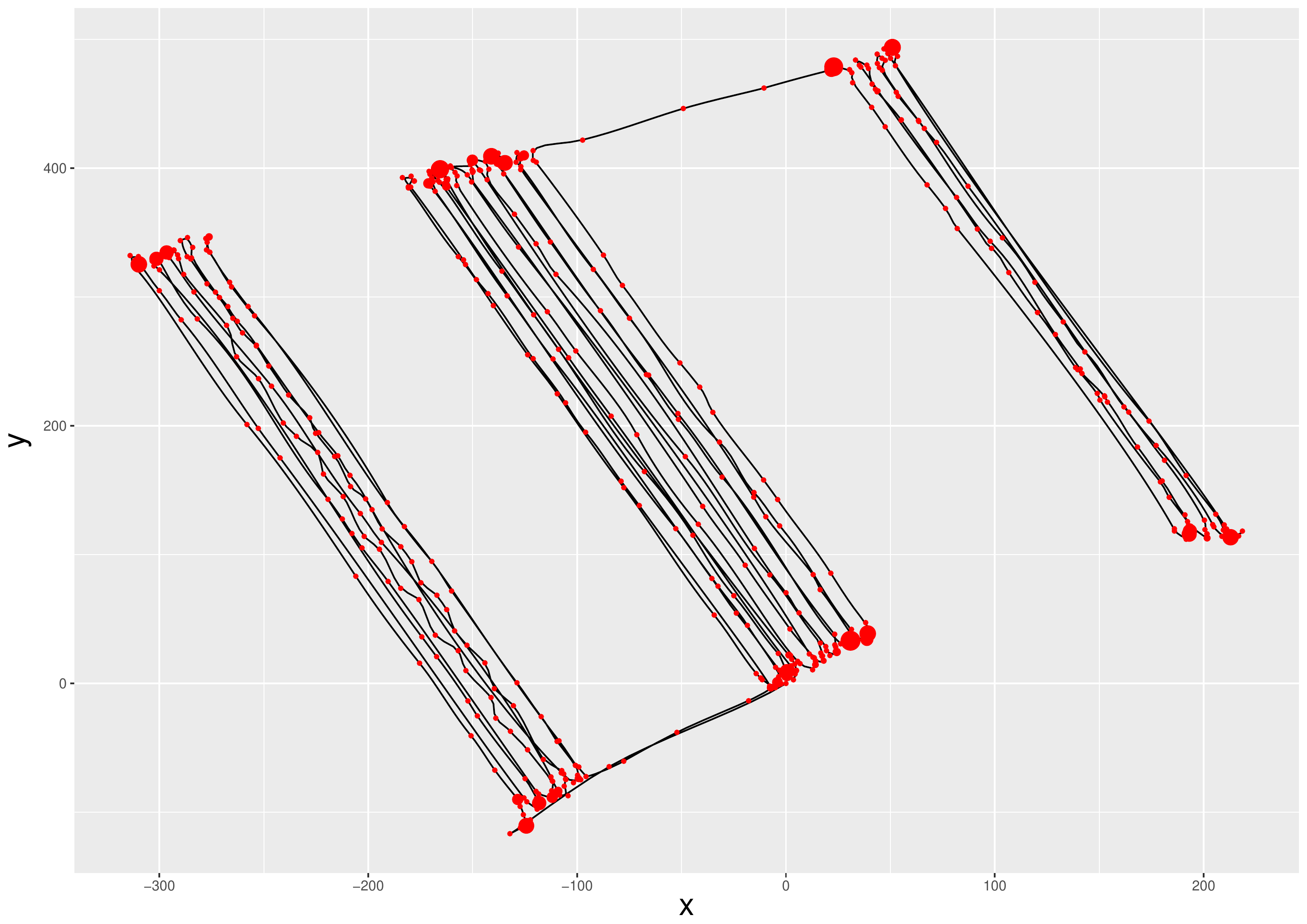}
     \caption{Reconstruction of northing $y$ vs easting $x$}
     \end{subfigure}
      \begin{subfigure}{\textwidth}
     \centering
     \includegraphics[width=0.45\linewidth]{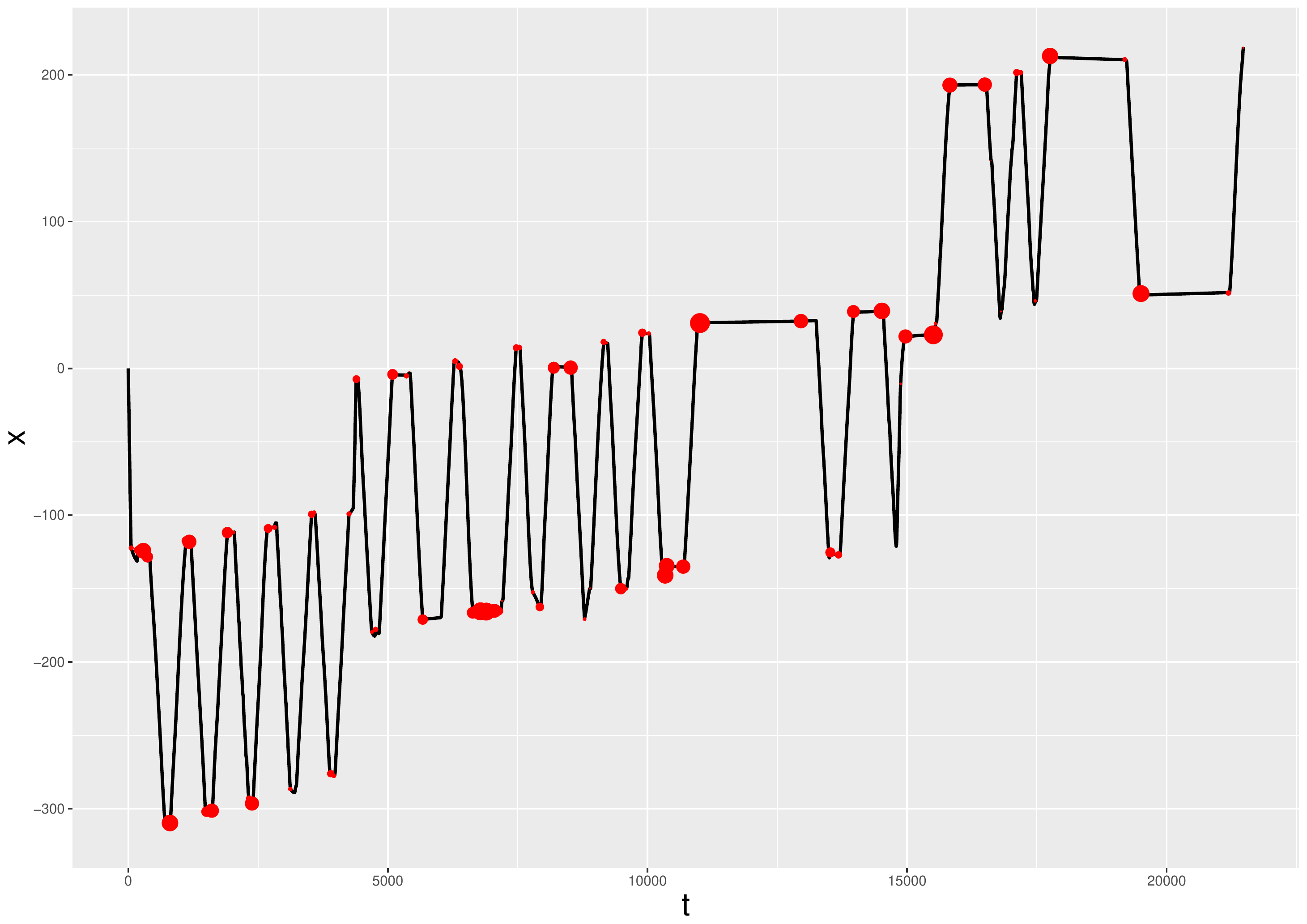}
     \includegraphics[width=0.45\linewidth]{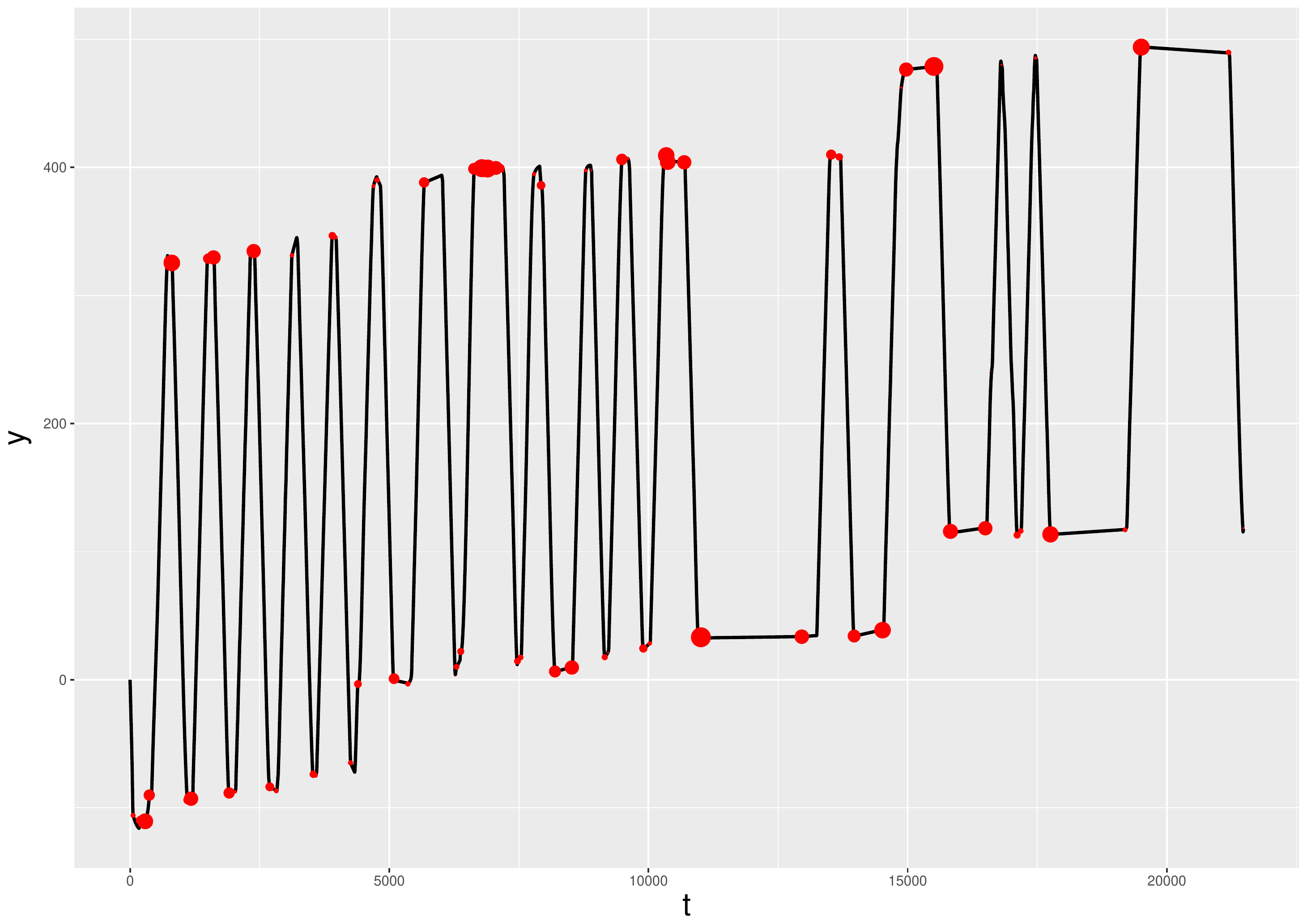}
     \caption{Derived easting $x$ and northing $y$ trajectories vs time $t$}
     \end{subfigure}
     \caption{Full 2-dimensional reconstruction; larger red dots indicate larger values of $\lambda_i$.}
     \label{completecombind2dxy}
\end{figure}

\begin{figure}
  \centering
   \includegraphics[width=0.45\textwidth]{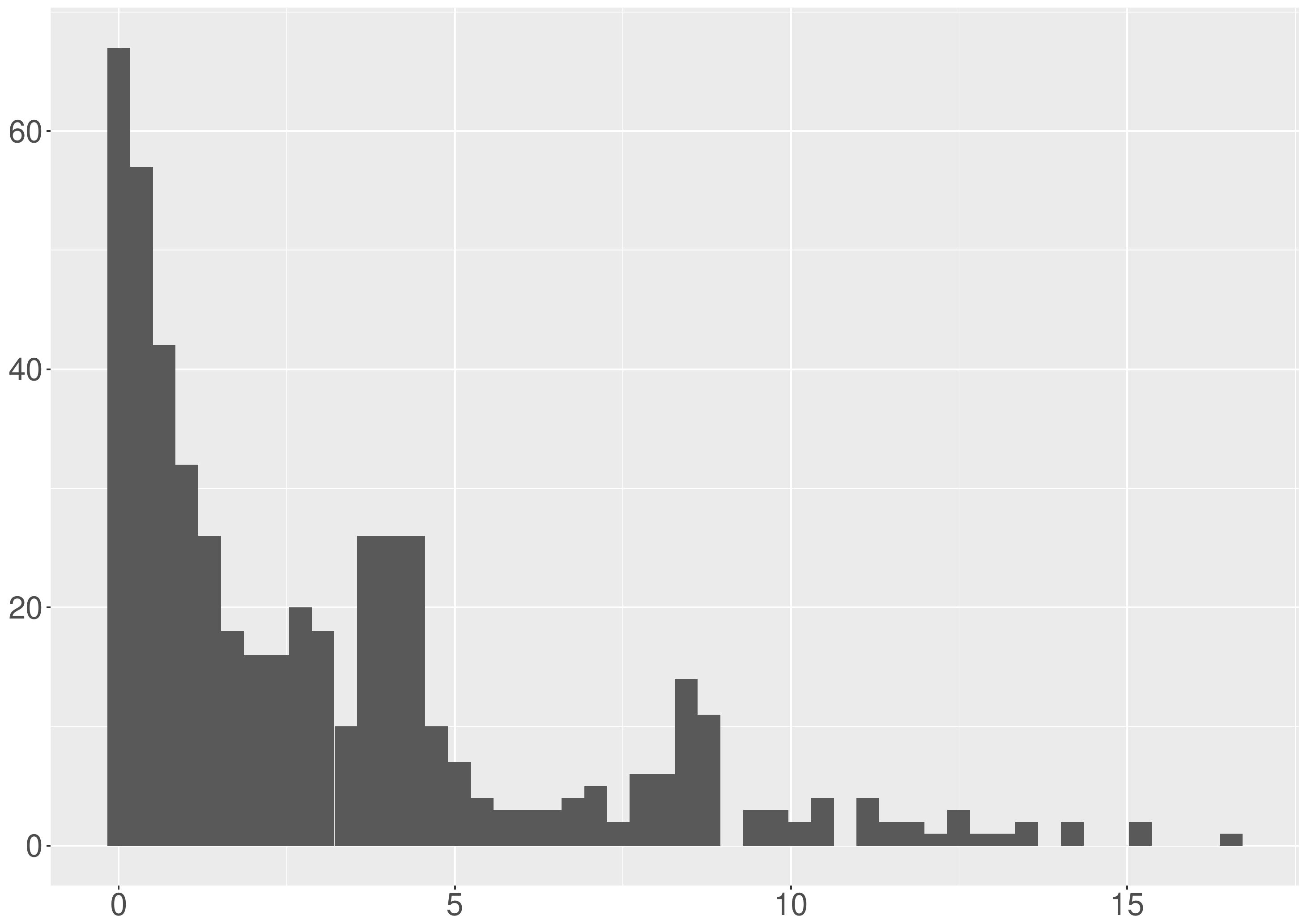} 
    \includegraphics[width=0.45\textwidth]{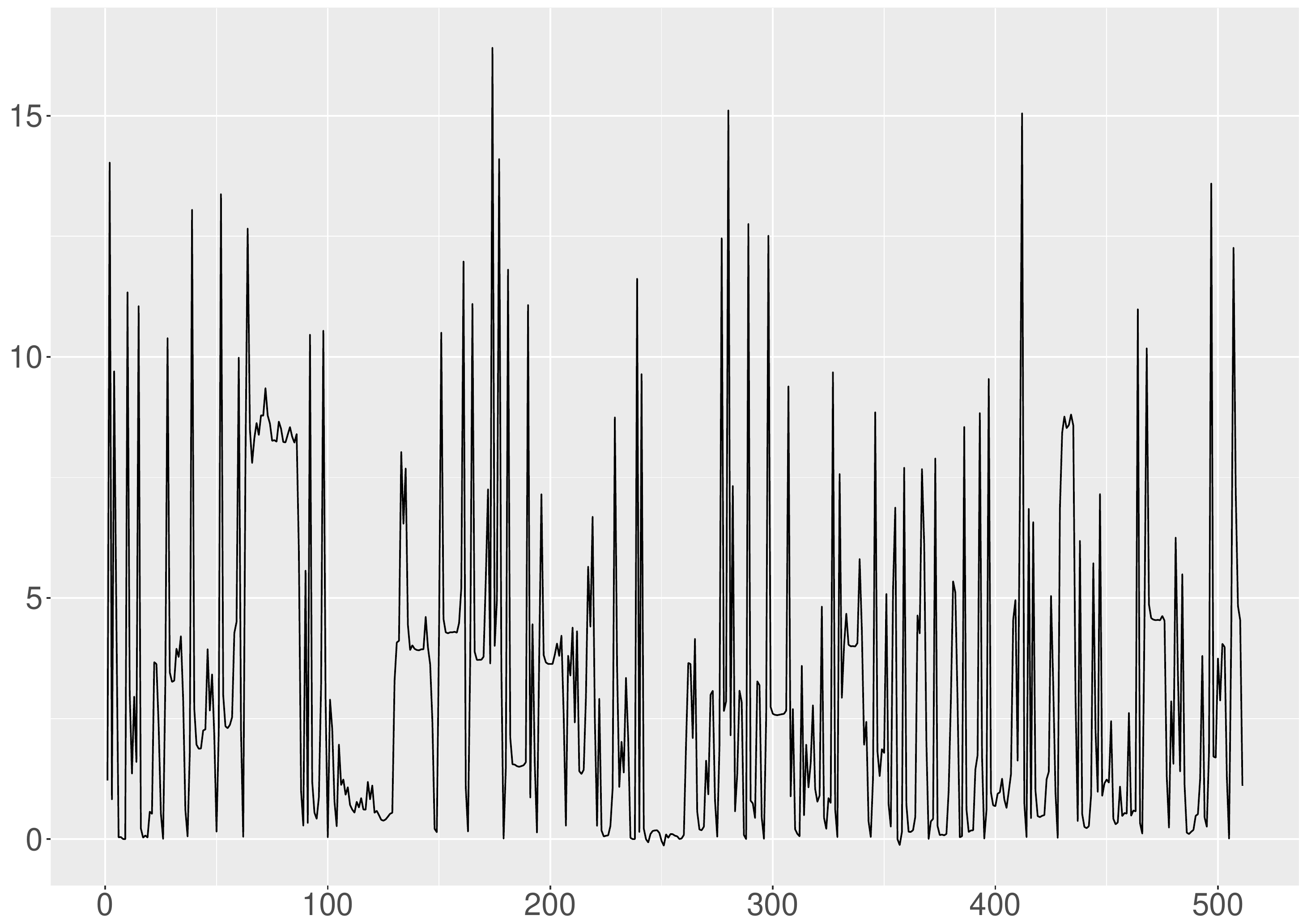}
  \caption{Histogram of the $\lambda_i$ and implied penalty function $\lambda(t)$}\label{2dpenalty}
\end{figure}

\begin{figure}
\centering
\includegraphics[width=0.9\linewidth]{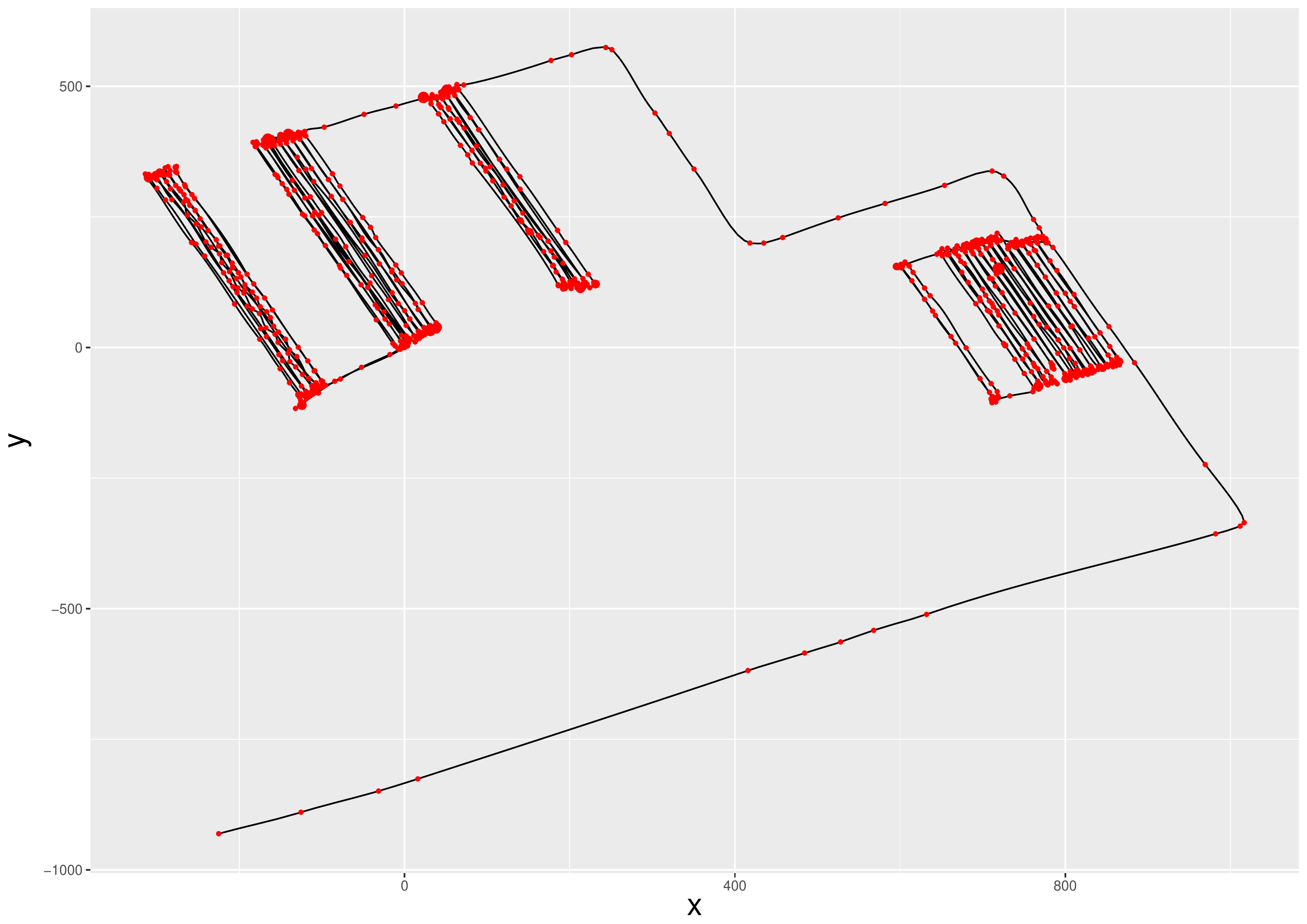}
\caption{Full 2-dimensional reconstruction for the complete dataset; larger red dots indicate larger values of $\lambda_i$.}\label{complete2DXY}
\end{figure}

\section{Discussion}

In this paper, a smoothing spline called the V-spline is proposed that minimizes an objective function which incorporates both position and velocity information. Given $n$ knots, the V-spline has $2n$ degrees of freedom corresponding to $n-1$ cubic polynomials with their value and first derivative matched at the $n-2$ interior knots. The degrees of freedom are then fixed by $n$ position observations and $n$ velocity observations. Note that in the limit $\gamma\rightarrow 0$, the V-spline reduces to having $n$ degrees of freedom. (See Appendix \ref{proofofCorollary} for details.) An adaptive version of the V-spline is also introduced that seeks to control the impact of irregularly sampled observations and noisy velocity measurements. In simulation studies, the V-spline shows improved true mean squared error in reconstructions.




In most work on polynomial smoothing splines, observation errors are assumed to be independent. However, this can be a questionable assumption in practice and it is known that correlation greatly affects the selection of smoothing parameters \citep{wang1998smoothing}. Parameter selection methods such as generalized maximum likelihood (GML), generalized cross-validation (GCV) typically underestimate smoothing parameters when data are correlated. To accommodate an autocorrelated error sequence, \cite{diggle1989spline} extend GCV, and \cite{wang1998smoothing} extends GML and unbiased risk (UBR) to simultaneously estimate the smoothing and correlation parameters. \cite{kohn1992nonparametric} also propose an algorithm to evaluate cross-validation functions when the autocorrelated errors are modelled by an autoregressive moving average. Here we show that there is a generalized cross-validation scheme for the V-spline that is appropriate for correlated observation errors. 

When observation errors are correlated (and considering splines in one-dimension for simplicity), the V-spline minimizes
\begin{equation}
\frac{1}{n}\left(\mathbf{y}-\mathbf{f}\right)^\top W_1\left(\mathbf{y}-\mathbf{f}\right)+\frac{\gamma}{n}\left(\mathbf{v}-\mathbf{f}'\right)^\top W_2\left(\mathbf{v}-\mathbf{f}'\right)+\sum_{i=1}^{n-1}\lambda_i\int_{t_i}^{t_{i+1}}(f''(t))^2dt,
\end{equation}
where $W_1$ and $W_2$ are precision matrices, assumed known. The solution is $\hat{f}(t)=\sum_{k=1}^{2n}N_k(t)\hat{\theta}_k$, where
\begin{equation}
\hat{\theta}=\left(B^\top W_1 B+ \gamma C^\top W_2C+n\Omega_\lambda\right)^{-1}\left(B^\top W_1 \mathbf{y}+\gamma C^\top W_2\mathbf{v}\right).
\end{equation}


The usual leave-one-out cross-validation algorithm requires knowledge of the diagonal elements of the smoother matrices. GCV achieves computational savings and robustness by approximating $S_{ii}\approx\frac{1}{n}\tr(S)$, etc. \citep{syed2011review}. The natural extension of the GCV to the V-spline with correlated errors is \scriptsize
\begin{equation}
\mbox{GCV}\left(\lambda,\gamma\right)=\frac{\left(\mathbf{\hat{f}}-\mathbf{y}\right)^\top W_1\left(\mathbf{\hat{f}}-\mathbf{y}\right)+\frac{2\tr\left(\gamma T\right)}{\tr\left(I-\gamma V\right)}\left(\mathbf{\hat{f}}-\mathbf{y}\right)^\top W_1^{1/2}(W_2^{1/2})^{\top}\left(\mathbf{\hat{f}}'-\mathbf{v}\right) + \left( \frac{\tr\left(\gamma T\right)}{\tr\left(I-\gamma V\right)} \right)^2 \left(\mathbf{\hat{f}}'-\mathbf{v}\right)^\top W_2\left(\mathbf{\hat{f}}'-\mathbf{v}\right)}{\left( \tr\left(I-S-\frac{\tr\left(\gamma T\right)}{\tr\left(I-\gamma V\right)}U\right) \right)^2}.
\end{equation}
\normalsize

One current drawback of the V-spline is that it is slower than other spline-based methods, however we have not yet optimized CV parameter estimation, particularly around solving for $\hat{\theta}$ in \eqref{thetahat}. Future directions for the V-spline include application to ship tracking \citep{hintzen2010improved} and development of a fast on-line filtering mode.

\section*{Acknowledgements}

This work was funded by grant UOOX1208 from the Ministry of Business, Innovation \& Employment (NZ). GPS data was provided by TracMap (NZ).


\appendix

\section{Penalty Matrix in \eqref{tractormse}}\label{PenaltyTermDetails}

We have $\Omega_{\lambda}=\sum_{i=1}^{n-1}\lambda_i\Omega^{(i)}$, where $[\Omega^{(i)}]_{jk}=\int_{t_i}^{t_{i+1}} N''_j(t)N''_k(t)dt$. Thus $\Omega^{(i)}$ is a $2n \times 2n$ bandwidth four symmetric matrix and its non-zero, upper triangular elements are 
\begin{align}
\Omega_{2i-1,2i-1}^{(i)} & =\int_{t_{i}}^{t_{i+1}} \frac{d^2 h_{00}^{(i)}(t)}{dt^2}  \frac{d^2 h_{00}^{(i)}(t)}{dt^2} dt=\frac{12}{\Delta T_i^3}\\
\Omega_{2i-1,2i}^{(i)} &=\int_{t_{i}}^{t_{i+1}} \frac{d^2 h_{00}^{(i)}(t)}{dt^2}  \frac{d^2 h_{10}^{(i)}(t)}{dt^2} dt=\frac{6}{\Delta T_i^2}\\
\Omega_{2i-1,2i+1}^{(i)} &=\int_{t_{i}}^{t_{i+1}} \frac{d^2 h_{00}^{(i)}(t)}{dt^2}  \frac{d^2 h_{01}^{(i)}(t)}{dt^2} dt=\frac{-12}{\Delta T_i^3}\\
\Omega_{2i-1,2i+2}^{(i)} &=\int_{t_{i}}^{t_{i+1}} \frac{d^2 h_{00}^{(i)}(t)}{dt^2}  \frac{d^2 h_{11}^{(i)}(t)}{dt^2} dt=\frac{6}{\Delta T_i^2}\\
\Omega_{2i,2i}^{(i)} &=\int_{t_{i}}^{t_{i+1}} \frac{d^2 h_{10}^{(i)}(t)}{dt^2}  \frac{d^2 h_{10}^{(i)}(t)}{dt^2} dt=\frac{4}{\Delta T_i} \\
\Omega_{2i,2i+1}^{(i)} &=\int_{t_{i}}^{t_{i+1}} \frac{d^2 h_{10}^{(i)}(t)}{dt^2}  \frac{d^2 h_{01}^{(i)}(t)}{dt^2} dt=\frac{-6}{\Delta T_i^2}\\
\Omega_{2i,2i+2}^{(i)} &=\int_{t_{i}}^{t_{i+1}} \frac{d^2 h_{10}^{(i)}(t)}{dt^2}  \frac{d^2 h_{11}^{(i)}(t)}{dt^2} dt=\frac{2}{\Delta T_i}\\
\Omega_{2i+1,2i+1}^{(i)} &=\int_{t_{i}}^{t_{i+1}} \frac{d^2 h_{01}^{(i)}(t)}{dt^2}  \frac{d^2 h_{01}^{(i)}(t)}{dt^2} dt=\frac{12}{\Delta T_i^3}\\
\Omega_{2i+1,2i+2}^{(i)} &=\int_{t_{i}}^{t_{i+1}} \frac{d^2 h_{01}^{(i)}(t)}{dt^2}  \frac{d^2 h_{11}^{(i)}(t)}{dt^2} dt=\frac{-6}{\Delta T_i^2}\\
\Omega_{2i+2,2i+2}^{(i)} &=\int_{t_{i}}^{t_{i+1}} \frac{d^2 h_{11}^{(i)}(t)}{dt^2}  \frac{d^2 h_{11}^{(i)}(t)}{dt^2} dt=\frac{4}{\Delta T_i}
\end{align}
where $\Delta T_i=t_{i+1}-t_i$ and $i=1,2,\ldots,n-1$.

\section{Proof of Theorem \ref{TractorSplineTheorem}}\label{AppendixTractorSplineProof}

Our proof is an extension of the smoothing spline proof in  \cite{green1993nonparametric}.
\begin{proof}
If $g:[a,b]\rightarrow \mathbb{R}$ is a proposed minimizer, construct a cubic spline $f(t)$ that agrees with $g(t)$ and its first derivatives at $t_1,\ldots,t_n$, and is linear on $[a, t_1]$ and $[t_n, b]$. Let $h(t) = g(t)-f(t)$. Then, for $i = 1,\dots ,n-1$, 
\begin{align*}
\int_{t_i}^{t_{i+1}}f''(t)h''(t)dt &=f''(t)h'(t) \bigg\rvert_{t_i}^{t_{i+1}}-\int_{t_i}^{t_{i+1}}f'''(t)h'(t)dt \\
&= 0-f'''\left(t_i^+\right)\int_{t_i}^{t_{i+1}}h'(t)dt \\
&= -f'''\left(t_i^+\right)\left( h(t_{i+1}) -h(t_i) \right)=0.\\
\end{align*}
Additionally, $\int_{a}^{t_1}f''(t)h''(t)dt=\int_{t_n}^{b}f''(t)h''(t)dt=0$, since $f(t)$ is assumed linear outside the knots. Thus, for $i=0,\ldots,n$, 
\begin{align*}
\int_{t_i}^{t_{i+1}}\lvert g''(t) \rvert^2dt &= \int_{t_i}^{t_{i+1}}\lvert f''(t)+h''(t)\rvert^2 dt\\
&= \int_{t_i}^{t_{i+1}}\lvert f''(t)\rvert^2dt+2\int_{t_i}^{t_{i+1}}f''(t)h''(t)dt+\int_{t_i}^{t_{i+1}}\lvert h''(t)\rvert^2dt\\
&=\int_{t_i}^{t_{i+1}}\lvert f''(t)\rvert^2dt+\int_{t_i}^{t_{i+1}}\lvert h''(t)\rvert^2dt\\
&\geq \int_{t_i}^{t_{i+1}}\lvert f''(t)\rvert^2dt.
\end{align*}
The result $J[f]\leq J[g]$ follows since $\lambda_i>0$. 

Furthermore, equality of the curvature penalty term only holds if $g(t) = f(t)$. On $[t_1, t_n]$, we require $h''(t) = 0$ but since $h(t_i) = h'(t_i) = 0$ for $i = 1, \ldots , n$, this means $h(t) = 0$. Meanwhile on $[a, t_1]$ and $[t_n, b]$, $f''(t) = 0$ so that equality requires $g''(t)=0$. Since $f(t)$ agrees with $g(t)$ and its first derivatives at $t_1$ and $t_n$, equality is forced on both intervals.
\end{proof}

\section{Proof of Theorem \ref{TractorsplineCorollary}}\label{proofofCorollary}

\begin{proof}

Let $Q_{\lambda}= C\Omega_{\lambda}= C\sum_{i=1}^{n-1}\lambda_i\Omega^{(i)}$, which is simply the even rows of $\Omega_{\lambda}$. Multiplying both sides of \eqref{thetahat} by $CG$, where $G=B^\top B+\gamma C^\top C+n\Omega_{\lambda}$, and using the fact that $C B^\top=0$ and $C C^\top=I$, we find
\[
Q_{\lambda}\hat{\theta} =\frac{\gamma}{n}(\mathbf{v}-\mathbf{f}').
\]

We note in passing that when $\gamma=0$, $Q_{\lambda}\hat{\theta}=0$. Since $Q_\lambda$ is an $n\times 2n$ matrix of full rank (this can be established from results below), the degrees of freedom of the V-spline decrease from $2n$ to $n$ when the velocity information is removed, as one would expect.

%


The only basis functions with support on $[t_i,t_{i+1})$ are the $N_{k}(t)$ with $k\in\{2i-1,2i,2i+1,2i+2\}$. Integrating by parts and using properties of the basis functions at the knots, we obtain the non-zero elements of the even rows of $\Omega^{(i)}$:
\begin{equation*}
\Omega^{(i)}_{2i,k}= - N''_{k}\left(t_i^+\right),  \qquad
\Omega^{(i)}_{2i+2,k}=N''_{k}\left(t_{i+1}^-\right),
\end{equation*}
where $k\in\{2i-1,2i,2i+1,2i+2\}$. The actual numerical values are given in Appendix \ref{PenaltyTermDetails} and confirm that $Q_\lambda$ has full rank. It follows that
\begin{equation*}
(Q_\lambda \hat{\theta})_{i} = \left(\lambda_{i-1} \Omega^{(i-1)}_{2i,\cdot}+\lambda_i \Omega^{(i)}_{2i,\cdot}\right) \hat{\theta}=\lambda_{i-1}f''(t_i^-)-\lambda_{i}f''(t_i^+),
\end{equation*}
where it is understood that $f''(t_1^-)=f''(t_n^+)=0$. Thus the backward implication is clear.

Continuity of $f''(t)$ at $t=t_1$ implies $\gamma(v_1 -f'_1)=0$. If we can show $\gamma=0$, then continuity of $f''(t)$ at the remaining observation times will establish the forward implication. Suppose instead $v_1=f'_1$. Then from \eqref{thetahat}, we have $v_1 = (0\,1\,0\cdots 0)G^{-1}(B^\top \mathbf{y}+\gamma C^\top \mathbf{v})$. For almost all $\mathbf{y}$ and $\mathbf{v}$, this implies $(0\,1\,0\cdots 0)^\top$ is an eigenvector of $G$ with eigenvalue $\gamma$. But, using the numerical values from Appendix \ref{PenaltyTermDetails}, the second column of $G$ is $(\frac{6\lambda_1}{\Delta T_1^2}\ \gamma\!+\!\frac{4\lambda_1}{\Delta T_1}\ \frac{-6\lambda_1}{\Delta T_1^2}\ \frac{2\lambda_1}{\Delta T_1}\ 0 \cdots 0)^\top\neq \gamma (0\,1\,0\cdots 0)^\top$, since $\lambda_1>0$.

%

\end{proof}

\section{Proof of Theorem \ref{tractorsplinecvscore}}\label{AppCVscore}

\begin{proof}

We start with the following lemma:

\begin{lemma} \label{cvlemma}
For $\lambda(t)$, $\gamma$ and for fixed $i$, let $\mathbf{f}^{(-i)}$ be the vector with components $f_j^{(-i)}=\hat{f}^{(-i)}\left(t_j,\lambda,\gamma\right)$,  $\mathbf{f}'^{(-i)}$ by the vector with components $f_j'^{(-i)}=\hat{f}'^{(-i)}\left(t_j,\lambda,\gamma\right)$, and define vectors $\mathbf{y}^*$ and $\mathbf{v}^*$ by 
\begin{align}
\begin{cases}
y_j^*=y_j &j \neq i\\
y_i^*=\hat{f}^{(-i)}(t_i) &\mbox{otherwise}
\end{cases}\\
\begin{cases}
v_j^*=v_j &j \neq i\\
v_i^*=\hat{f}'^{(-i)}(t_i) &\mbox{otherwise}
\end{cases}
\end{align}
Then
\begin{align}
\mathbf{\hat{f}}^{(-i)}&=S\mathbf{y}^*+\gamma T\mathbf{v}^*\\
\mathbf{\hat{f}}'^{(-i)}&=U\mathbf{y}^*+\gamma V\mathbf{v}^*
\end{align}
\end{lemma}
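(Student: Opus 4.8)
The plan is to exploit the fact that the V-spline is a linear smoother whose fitted values at the knots are given by eq. \eqref{eq:fittedy} and \eqref{eq:fittedv}, together with the defining ``best-predictor'' property of the leave-one-out fit built into the construction of $\mathbf{y}^*$ and $\mathbf{v}^*$. The key observation is that $\hat f^{(-i)}$ minimizes the same objective as the full V-spline, except that the $i$-th position and velocity residual terms are absent. But replacing the missing data point $y_i$ by $y_i^* = \hat f^{(-i)}(t_i)$ and $v_i$ by $v_i^* = \hat f'^{(-i)}(t_i)$ contributes exactly zero to the position and velocity residual sums when evaluated at $\hat f^{(-i)}$. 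Hence $\hat f^{(-i)}$ achieves, on the modified data set $(\mathbf{y}^*,\mathbf{v}^*)$, an objective value equal to the value it achieves on the reduced ($n-1$)-point problem.

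First I would make this precise: write $J_{(-i)}[f]$ for the leave-one-out objective (sum over $j\neq i$ of the position and velocity residuals, plus the unchanged penalty term) and $J^*[f]$ for the full $n$-point objective eq. \eqref{tractorsplineObjective} with data $(\mathbf{y}^*,\mathbf{v}^*)$. For any $f$, $J^*[f] = J_{(-i)}[f] + \tfrac1n(y_i^*-f(t_i))^2 + \tfrac{\gamma}{n}(v_i^*-f'(t_i))^2 \geq J_{(-i)}[f] \geq J_{(-i)}[\hat f^{(-i)}]$, and evaluating at $f = \hat f^{(-i)}$ the two extra terms vanish by the definition of $y_i^*$ and $v_i^*$, so $J^*[\hat f^{(-i)}] = J_{(-i)}[\hat f^{(-i)}] \leq J^*[f]$ for all $f$. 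Thus $\hat f^{(-i)}$ is the unique minimizer (by Theorem \ref{TractorSplineTheorem}, since $n\geq 2$) of the full V-spline objective on the data $(\mathbf{y}^*,\mathbf{v}^*)$.

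Then I would simply apply eq. \eqref{eq:fittedy} and eq. \eqref{eq:fittedv} with $\mathbf{y}$ replaced by $\mathbf{y}^*$ and $\mathbf{v}$ replaced by $\mathbf{v}^*$: the fitted values at the knots of the V-spline for the data $(\mathbf{y}^*,\mathbf{v}^*)$ are $\mathbf{S}_{\lambda,\gamma}\mathbf{y}^* + \gamma\mathbf{T}_{\lambda,\gamma}\mathbf{v}^*$ and $\mathbf{U}_{\lambda,\gamma}\mathbf{y}^* + \gamma\mathbf{V}_{\lambda,\gamma}\mathbf{v}^*$ respectively. Since these fitted values are exactly $\mathbf{\hat f}^{(-i)}$ and $\mathbf{\hat f}'^{(-i)}$, the lemma follows. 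I do not expect a serious obstacle here; the only point requiring a little care is the uniqueness/well-definedness of the leave-one-out minimizer (needed so that ``the V-spline fit to $(\mathbf{y}^*,\mathbf{v}^*)$'' and ``$\hat f^{(-i)}$'' genuinely coincide as functions, not just at the knots), which is handled by invoking Theorem \ref{TractorSplineTheorem} for the reduced problem — strictly one should check that $n-1\geq 2$, or otherwise treat the small-$n$ case separately, but in the regime of interest this is immediate.
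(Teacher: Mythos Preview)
Your proposal is correct and follows essentially the same argument as the paper: both show that $\hat f^{(-i)}$ minimizes the full $n$-point objective on the modified data $(\mathbf{y}^*,\mathbf{v}^*)$ via the chain of inequalities $J^*[f]\geq J_{(-i)}[f]\geq J_{(-i)}[\hat f^{(-i)}]=J^*[\hat f^{(-i)}]$, and then read off the smoother-matrix formulas \eqref{eq:fittedy}--\eqref{eq:fittedv}. Your version is in fact slightly more explicit than the paper's in naming the two functionals and in flagging the uniqueness issue (which the paper leaves implicit).
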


\begin{proof}
For any smooth curve $f$ with $\mathbf{y}^*$ and $\mathbf{v}^*$, we have 
\begin{align*}
&\frac{1}{n} \sum_{j=1}^n\left(y_j^*-f(t_j)\right)^2+\frac{\gamma}{n} \sum_{j=1}^n\left(v_j^*-f'(t_j)\right)^2+\sum_{j=1}^{n} \lambda_j\int_{t_j}^{t_{j+1}} (f''(t))^2dt \\
\geq &\frac{1}{n} \sum_{j\neq i}\left(y_j^*-f(t_j)\right)^2+\frac{\gamma}{n} \sum_{j\neq i}\left(v_j^*-f'(t_j)\right)^2+\sum_{j=1}^{n} \lambda_j\int_{t_j}^{t_{j+1}} (f''(t))^2dt\\
\geq &\frac{1}{n}\sum_{j\neq i}\left(y_j^*-\hat{f}^{(-i)}(t_j)\right)^2+\frac{\gamma}{n} \sum_{j\neq i}\left(v_j^*-\hat{f}'^{(-i)}(t_j)\right)^2+\sum_{j=1}^{n} \lambda_j\int_{t_j}^{t_{j+1}}  \left(\hat{f}^{''(-i)}(t)\right)^2dt\\
= &\frac{1}{n}\sum_{j=1}^{n}\left(y_j^*-\hat{f}^{(-i)}(t_j)\right)^2+\frac{\gamma}{n} \sum_{j=1}^{n}\left(v_j^*-\hat{f}'^{(-i)}(t_j)\right)^2+\sum_{j=1}^{n} \lambda_j\int_{t_j}^{t_{j+1}}  \left(\hat{f}^{''(-i)}(t)\right)^2dt\\
\end{align*}
by the definition of $\mathbf{\hat{f}}^{(-i)}$, $\mathbf{\hat{f}}'^{(-i)}$ and the fact that $y_i^*=\hat{f}^{(-i)}(t_i)$, $v_i^*=\hat{f}'^{(-i)}(t_i)$. It follows that $\hat{f}^{(-i)}$ is the minimizer of the objective function \eqref{tractorsplineObjective}, so that
\begin{align*}
\mathbf{\hat{f}}^{(-i)}&=S\mathbf{y}^*+\gamma T\mathbf{v}^*\\
\mathbf{\hat{f}}'^{(-i)}&=U\mathbf{y}^*+\gamma V\mathbf{v}^*
\end{align*}
as required.
\end{proof}

As a consequence of lemma \ref{cvlemma}, we obtain expressions for the deleted residuals $y_i-\hat{f}^{(-i)}(t_i)$ and $v_i-\hat{f}'^{(-i)}(t_i)$ in terms of $y_i-\hat{f}(t_i)$ and $v_i-\hat{f}'(t_i)$ respectively:

\begin{equation}\label{th3proofeq1}
\begin{split}
\hat{f}^{(-i)}(t_i)-y_i=& \sum_{j=1}^{n}S_{ij}y_j^*+ \gamma \sum_{j=1}^{n}T_{ij}v_j^*-y_i^*\\
=&\sum_{j\neq i}^{n}S_{ij}y_j+ \gamma \sum_{j\neq i}^{n}T_{ij}v_j+S_{ii}\hat{f}^{(-i)}(t_i)+\gamma T_{ii}\hat{f}'^{(-i)}(t_i)-y_i\\
=&\sum_{j=1}^{n}S_{ij}y_j+ \gamma \sum_{j=1}^{n}T_{ij}v_j+S_{ii}\left(\hat{f}^{(-i)}(t_i)-y_i\right)+\gamma T_{ii}\left(\hat{f}'^{(-i)}(t_i)-v_i\right)-y_i\\
=&\left(\hat{f}(t_i)-y_i\right)+S_{ii}\left(\hat{f}^{(-i)}(t_i)-y_i\right)+\gamma T_{ii}\left(\hat{f}'^{(-i)}(t_i)-v_i\right)
\end{split}
\end{equation}
and
\begin{equation}
\begin{split}
\hat{f}'^{(-i)}(t_i)-v_i=& \sum_{j=1}^{n}U_{ij}y_j^*+ \gamma \sum_{j=1}^{n}V_{ij}v_j^*-v_i^*\\
=&\sum_{j\neq i}^{n}U_{ij}y_j+ \gamma \sum_{j\neq i}^{n}V_{ij}v_j+U_{ii}\hat{f}^{(-i)}(t_i)+\gamma V_{ii}\hat{f}'^{(-i)}(t_i)-v_i\\
=&\sum_{j=1}^{n}U_{ij}y_j+ \gamma \sum_{j=1}^{n}V_{ij}v_j+U_{ii}\left(\hat{f}^{(-i)}(t_i)-y_i\right)+\gamma V_{ii}\left(\hat{f}'^{(-i)}(t_i)-v_i\right)-v_i\\
=&\left(\hat{f}'(t_i)-v_i\right)+U_{ii}\left(\hat{f}^{(-i)}(t_i)-y_i\right)+\gamma V_{ii}\left(\hat{f}'^{(-i)}(t_i)-v_i\right).
\end{split}
\end{equation}
Thus 
\begin{equation}\label{th3proofeq2}
\hat{f}'^{(-i)}(t_i)-v_i = \frac{\hat{f}'(t_i)-v_i}{1-\gamma V_{ii}}+ \frac{U_{ii}\left(\hat{f}^{(-i)}(t_i)-y_i\right)}{1-\gamma V_{ii}}.
\end{equation}
By substituting equation \eqref{th3proofeq2} into \eqref{th3proofeq1}, we obtain
\begin{equation*}
\hat{f}^{(-i)}(t_i)-y_i=\frac{\hat{f}(t_i)-y_i+\gamma \frac{T_{ii}}{1-\gamma V_{ii}}\left(\hat{f}'(t_i)-v_i\right)}{1-S_{ii}-\gamma\frac{T_{ii}}{1-\gamma V_{ii}}U_{ii}}.
\end{equation*}
Consequently, 
\begin{equation*}
CV(\lambda,\gamma)=\frac{1}{n}\sum_{i=1}^{n}\left( \frac{\hat{f}(t_i)-y_i+\gamma \frac{T_{ii}}{1-\gamma V_{ii}}\left(\hat{f}'(t_i)-v_i\right)}{1-S_{ii}-\gamma\frac{T_{ii}}{1-\gamma V_{ii}}U_{ii}}\right)^2.
\end{equation*}
\end{proof}

\bibliographystyle{apalike}
\bibliography{WorkingAll}

\end{document}